\definecolor{ForestGreen}{rgb}{0.1333,0.5451,0.1333}
\definecolor{DarkRed}{rgb}{0.8,0,0}
\definecolor{Red}{rgb}{0.9,0,0}
\newenvironment{proofof}[1]{\noindent{\bf Proof of #1.}}%
        {\hspace*{\fill}$\Box$\par\vspace{4mm}}
\newcommand{\enc}[1]{\langle #1 \rangle}
\setlist[itemize]{noitemsep,nolistsep} % This saves space between bullets
\setlist[enumerate]{noitemsep,nolistsep} % This saves space between bullets
\renewcommand{\paragraph}{%
  \@startsection{paragraph}{4}%
  {\z@}{1ex \@plus 1ex \@minus .2ex}{-1em}%
  {\normalfont\normalsize\bfseries}%
}
\def\thmt@refnamewithcomma #1#2#3,#4,#5\@nil{%
  \@xa\def\csname\thmt@envname #1utorefname\endcsname{#3}%
  \ifcsname #2refname\endcsname
    \csname #2refname\expandafter\endcsname\expandafter{\thmt@envname}{#3}{#4}%
  \fi
}
\declaretheorem[numberwithin=section,refname={Theorem,Theorems},Refname={Theorem,Theorems}]{theorem}
\declaretheorem[numberlike=theorem,refname={Lemma,Lemmas},Refname={Lemma,Lemmas}]{lemma}
\declaretheorem[numberlike=theorem,refname={Conjecture,Conjectures},Refname={Conjecture,Conjectures}]{conjecture}
\declaretheorem[numberlike=theorem,refname={Corollary,Corollaries},Refname={Corollary,Corollaries}]{corollary}
\declaretheorem[numberlike=theorem,refname={Proposition,Propositions},Refname={Proposition,Propositions}]{proposition}
\declaretheorem[numberlike=theorem,refname={property,properties},Refname={Property,Properties}]{property}
\declaretheorem[numberlike=theorem,refname={observation,observations},Refname={Observation,Observations}]{observation}
\declaretheorem[numberlike=theorem,refname={assumption,assumptions},Refname={Assumption,Assumptions}]{assumption}
\declaretheorem[numberlike=theorem,refname={Claim, Claims},Refname={Claim, Claims}]{claim}
\declaretheorem[numberlike=theorem,refname={hypothesis,hypothesis},Refname={Hypothesis,Hypothesis}]{hypothesis}
\declaretheorem[numberlike=theorem]{definition}
\declaretheorem[style=remark]{remark}
\newtheorem{invariant}[theorem]{Invariant}
\newtheorem{assume}{Assumption}
\newtheorem{fact}{Fact}
\newtheorem{Rule}[theorem]{Rule}
\newcommand{\size}[1]{\ensuremath{\left|#1\right|}}
\newcommand{\ceil}[1]{\ensuremath{\left\lceil#1\right\rceil}}
\newcommand{\floor}[1]{\ensuremath{\left\lfloor#1\right\rfloor}}
\newcommand{\Dexp}[1]{\dexp\{#1\}}
\newcommand{\Tower}[2]{\operatorname{tower}^{(#1)}\{#2\}}
\newcommand{\logi}[2]{\operatorname{log}^{(#1)}{#2}}
\newcommand{\norm}[1]{\lVert #1\rVert}
\newcommand{\abs}[1]{\lvert #1\rvert}
\newcommand{\paren}[1]{\left ( #1 \right ) }
\newcommand{\union}{\cup}
\newcommand{\band}{\wedge}
\newcommand{\bor}{\vee}
\newcommand{\prof}{\ensuremath{\operatorname{profit}}}
\newcommand{\pay}{\ensuremath{\operatorname{pay}}}
\newcommand{\scale}[2]{\scalebox{#1}{#2}}
\newcommand{\fig}[1]{
\begin{figure}[h]
\rotatebox{0}{\includegraphics{#1}}
\end{figure}}
\newcommand{\figcap}[2]
{
\begin{figure}[h]
\rotatebox{270}{\includegraphics{#1}} \caption{#2}
\end{figure}
}
\newcommand{\scalefig}[2]{
\begin{figure}[h]
\scalebox{#1}{\includegraphics{#2}}
\end{figure}}
\newcommand{\scalefigcap}[3]{
\begin{figure}[h]
\scalebox{#1}{\rotatebox{0}{\includegraphics{#2}}} \caption{#3}
\end{figure}}
\newcommand{\scalefigcaplabel}[4]{
\begin{figure}[h]
\begin{center}
\label{mainbody:#4} \scalebox{#1}{\includegraphics{#2}}\caption{#3}
\end{center}
\end{figure}}
\newenvironment{prog}[1]{
\begin{minipage}{5.8 in}
{\sc\bf #1}
\begin{enumerate}}
{
\end{enumerate}
\end{minipage}
}
\renewcommand{\phi}{\varphi}
\newcommand{\Sum}{\displaystyle\sum}
\newcommand{\half}{\ensuremath{\frac{1}{2}}}
\newcommand{\poly}{\operatorname{poly}}
\newcommand{\dist}{\mbox{\sf dist}}
\newcommand{\reals}{{\mathbb R}}
\newcommand{\expct}[1]{\text{\bf E}_\left [#1\right]}
\newcommand{\expect}[2]{\text{\bf E}_{#1}\left [#2\right]}
\newcommand{\prob}[1]{\text{\bf Pr}\left [#1\right]}
\newcommand{\pr}[2]{\text{\bf Pr}_{#1}\left [#2\right ]}
\newcommand{\notsat}[1]{\overline{\text{SAT}(#1)}}
\newcommand{\notsats}[2]{\overline{\text{SAT}_{#1}(#2)}}
\newcommand{\sndp}{\mbox{\sf SNDP}}
\newcommand{\ecsndp}{\mbox{\sf EC-SNDP}}
\newcommand{\vcsndp}{\mbox{\sf VC-SNDP}}
\newcommand{\kec}{k\mbox{\sf -edge connectivity}}
\newcommand{\kvc}{k\mbox{\sf -vertex connectivity}}
\newcommand{\sskec}{\mbox{\sf single-source}~k\mbox{\sf -edge connectivity}}
\newcommand{\sskvc}{\mbox{\sf single-source}~k\mbox{\sf -vertex connectivity}}
\newcommand{\subkvc}{\mbox{\sf subset}~k\mbox{\sf -vertex connectivity}}
\newcommand{\oneec}{1\mbox{\sf -edge connectivity}}
\newcommand{\onevc}{1\mbox{\sf -vertex connectivity}}
\newcommand{\kvcssp}{k\mbox{\sf -vertex-connected spanning subgraph problem}}
\newcommand{\BLM}{{\sf Bounded-Length Multicut}\xspace}
\newcommand{\bydef}{\stackrel{{\triangle}}{=}}
\newcommand{\eblue}{E^{\mbox{\scriptsize{blue}}}}
\newcommand{\ered}{E^{\mbox{\scriptsize{red}}}}
\newcommand{\induce}[1]{\ensuremath{{\sf im}(#1)}\xspace}
\newcommand{\sinduce}[1]{\ensuremath{{\sf sim}(#1)}\xspace}
\newcommand{\sinducesigma}[2]{\ensuremath{{\sf sim}_{#1}(#2)}\xspace}
\newcommand{\SMP}{{\sf SMP}\xspace}
\newcommand{\UDP}{{\sf UDP}\xspace}
\newcommand{\semiind}{{\sf Semi-Induced Matching}\xspace}
\newcommand{\LSMP}{{\sf Limit-SMP}\xspace}
\newcommand{\LUDP}{{\sf Limit-UDP}\xspace}
\newcommand{\ETH}{{\sf ETH}\xspace}
\newcommand{\SAT}{{\sf SAT}\xspace}
\newcommand{\p}{{\bf p}}
\newcommand{\e}{{\sc e}}
\newcommand{\val}{{\sf val}\xspace}
\newcommand{\wall}{\overrightarrow{R}}
\newcommand{\awall}{\widehat{R}}
\newcommand{\pairs}{\mathcal{S}}
\newcommand{\edp}{{\sf edp}\xspace}
\newcommand{\tildeEDP}{\widetilde{\sf edp}\xspace}
\newcommand{\vdp}{{\sf vdp}\xspace}
\newcommand{\mis}{{\sf mis}\xspace}
\newcommand{\NDP}{{\sf VDP}\xspace}
\newcommand{\VDP}{{\sf VDP}\xspace}
\newcommand{\EDP}{{\sf EDP}\xspace}
\newcommand{\NN}{\mathcal{N}}
\newcommand{\reminder}[1]{{\it\bf \color{red}**#1**}}
\newcommand{\QValue}{{\sc QueryValue}\xspace}
\newcommand{\QSet}{{\sc QuerySet}\xspace}
\renewcommand{\t}{\text{slack}}
\newcommand{\s}{\text{status}}
\newcommand{\N}{\mathcal{N}}
\newcommand{\B}{\mathcal{B}}
\newcommand{\C}{\mathcal{C}}
\renewcommand{\P}{\mathcal{P}}
\newcommand{\eps}{\delta}
\newcommand{\dl}{\delta}
\newcommand{\dd}{D}
\newcommand{\G}{\mathcal{G}}
\newcommand{\V}{\mathcal{V}}
\newcommand{\E}{\mathcal{E}}
\newcommand{\MM}{\mathcal{\M}}
\newcommand{\M}{\mathcal{M}}
\newcommand{\OO}{\mathcal{O}}
\newcommand{\D}{\mathcal{D}}
\def\danupon#1{\marginpar{$\leftarrow$\fbox{D}}\footnote{$\Rightarrow$~{\sf #1 --Danupon}}}
\def\sayan#1{\marginpar{$\leftarrow$\fbox{S}}\footnote{$\Rightarrow$~{\sf #1 --Sayan}}}
\def\monika#1{\marginpar{$\leftarrow$\fbox{M}}\footnote{$\Rightarrow$~{\sf #1 --Monika}}}
\def\danupon#1{}
\def\sayan#1{}
\def\monika#1{}
\newcommand{\shortOnly}[1]{\ifthenelse{\boolean{short}}{#1}{}}
\newcommand{\longOnly}[1]{\ifthenelse{\boolean{short}}{}{#1}}
\title{Fully Dynamic Approximate  Maximum Matching and Minimum Vertex Cover  in $O(\log^3 n)$ Worst Case Update Time\footnote{An extended abstract of this paper appeared in SODA 2017.}}
\date{}
\author{Sayan Bhattacharya\thanks{University of Warwick, UK. Email: {\tt s.bhattacharya@warwick.ac.uk}}  \and Monika Henzinger\thanks{University of Vienna, Austria. Email: {\tt monika.henzinger@univie.ac.at}. The research leading to these results has received funding from the European Research Council under the European Union's Seventh Framework Programme (FP/2007-2013) / ERC Grant Agreement no. 340506.}   \and Danupon Nanongkai\thanks{KTH Royal Institute of Technology, Sweden. Email: {\tt danupon@gmail.com}. Supported by Swedish Research Council grant 2015-04659  ``Algorithms and Complexity for Dynamic Graph Problems''.}}
\begin{document}

\setcounter{tocdepth}{3}
%\tableofcontents

\begin{titlepage}
\maketitle
\pagenumbering{roman}

\begin{abstract}
We consider the problem of maintaining an approximately maximum  (fractional) matching and an approximately minimum  vertex cover in a dynamic graph. Starting with the seminal paper by Onak and Rubinfeld [STOC 2010], this problem has received significant attention in recent years. There remains, however, a polynomial gap between the best known worst case update time and the best known amortised update time for this problem, even after allowing for randomisation. Specifically, Bernstein and Stein [ICALP 2015, SODA 2016] have the best known worst case update time. They present a deterministic data structure with approximation ratio $(3/2+\epsilon)$ and worst case update time $O(m^{1/4}/\epsilon^2)$, where $m$ is the number of edges in the graph. In recent past, Gupta and Peng [FOCS 2013] gave a deterministic data structure with approximation ratio $(1+\epsilon)$ and worst case update time $O(\sqrt{m}/\epsilon^2)$. No known randomised data structure  beats the worst case update times of these two results. In contrast, the paper by Onak and Rubinfeld [STOC 2010]  gave a randomised data structure with approximation ratio $O(1)$ and amortised update time $O(\log^2 n)$, where $n$ is the number of nodes in the graph. This  was later improved by Baswana, Gupta and Sen [FOCS 2011] and Solomon [FOCS 2016], leading to a randomised date structure with approximation ratio $2$ and amortised update time $O(1)$.

We bridge the polynomial gap between the worst case and amortised update times for this problem, without using any randomisation. We present a deterministic data structure with approximation ratio $(2+\epsilon)$ and worst case update time $O(\log^3 n)$, for all sufficiently small constants $\epsilon$.
\end{abstract}

\newpage
\pagenumbering{gobble}
\clearpage

\setcounter{tocdepth}{3}
%\tableofcontents

%\newpage
%\part{EXTENDED ABSTRACT}

\end{titlepage}

\newpage

\pagenumbering{arabic}

\newcommand{\Up}{\text{{\sc Up}}}
\newcommand{\Down}{\text{{\sc Down}}}
\newcommand{\Idle}{\text{{\sc Idle}}}
\newcommand{\Upb}{\text{{\sc Up-B}}}
\newcommand{\Downb}{\text{{\sc Down-B}}}
\newcommand{\state}{\text{{\sc State}}}
\newcommand{\Slack}{\text{{\sc Slack}}}
\newcommand{\polylog}{\text{polylog}}

\section{Introduction}
\label{sec:intro}

A matching in a graph is a set of edges that do not share any common endpoint. In the dynamic matching problem, we want to maintain an (approximately) maximum-cardinality matching when the  input graph is undergoing edge insertions and deletions. 
The time taken  to handle an edge insertion or deletion in the input graph is called the {\em update time} of the concerned dynamic algorithm. Our goal is this paper is to design a dynamic algorithm whose update time is as small as possible. Throughout this paper, we denote the number of nodes and edges in the input graph  by $n$ and $m$ respectively. The value of $n$ remains fixed over time, since the set of nodes in  the graph remains the same. However, the value of $m$ changes   as  edges get inserted or deleted in the graph. 
Similar to static  problems where we want the running time of an algorithm to be polynomial in the input size, in the dynamic setting we desire the update time to be $\polylog(n)$, for an input  (edge insertion or deletion) to a dynamic problem can be specified using  $O(\log n)$ bits. 

The dynamic matching problem has been extensively studied in the past few years. We now know that within $\polylog(n)$ update time we can maintain a $2$-approximate matching using a randomized algorithm \cite{Solomon16,BaswanaGS11,OnakR10} and a $(2+\epsilon)$-approximate matching using a deterministic algorithm \cite{BhattacharyaHN16,BhattacharyaHI15s,BhattacharyaHI15}. The downside of these algorithms, however,  is that their update times are {\em amortised}. Thus, the algorithms take $\polylog(n)$ update time {\em on average}, but from time to time they may take as large as $O(n)$ time to respond to a single update. It is much more desirable to be able to guarantee a small update time after every update. This type of update time is called {\em worst-case update time}.

Unfortunately, known worst-case update time bounds for this problem are  {\em polynomial in $n$}:
the  known algorithms take $O(n^{1.495})$ worst-case update time to maintain the value of the maximum matching exactly \cite{Sankowski07} (also see \cite{AbboudW14,HenzingerKNS15-oMv,KopelowitzPP16} for complementing lower bounds), $O(\sqrt{m}/\epsilon^2)$ time to maintain a $(1+\epsilon)$-approximate maximum matching~\cite{GuptaP13,NeimanS13}, $O(m^{1/3}/\epsilon^2)$ time to maintain a $(4+\epsilon)$-approximate maximum matching \cite{BhattacharyaHI15s}, and $O(m^{1/4}/\epsilon^2)$ time to maintain a $(3/2+\epsilon)$-approximate maximum matching in bipartite graphs \cite{BernsteinS15,BernsteinS16}. There is no algorithm with $\polylog(n)$ worst-case update time even with a $\polylog(n)$ approximation ratio.

We note that the lack of a data structure with good worst-case update time  is not at all specific to the problem of dynamic matching. Other fundamental dynamic graph problems, such as spanning tree, minimum spanning tree and shortest paths also suffer the same issue~(see, e.g., \cite{Frederickson85,Frederickson97,HenzingerK99,HolmLT01,Wulff-Nilsen13a,DemetrescuI03,Thorup05}). One exception is the celebrated randomized algorithm with $\polylog(n)$ update time for dynamic connectivity~\cite{KapronKM13}. To the best of our knowledge,  our result is the first deterministic fully-dynamic graph algorithm with  $\polylog(n)$ worst-case update time in {\em general graphs}. In contrast, for a special class of graphs with arboricity bounded by $\alpha$ (say), the papers~\cite{orient1,orient2} present deterministic dynamic algorithms with $\tilde{O}(\alpha)$ worst case update times for the problem of maintaining edge orientation.

\paragraph{Our result.}
We present a deterministic algorithm that  maintains a {\em fractional matching}\footnote{In a fractional matching each edge is assigned a nonzero weight, ensuring that for every node the sum of the weights of the edges incident to it is at most $1$. The size of a fractional matching is the sum of the weights of all the edges in the graph.} and  a {\em vertex cover}\footnote{A vertex cover is a set of nodes such that every edge in the graph has at least one endpoint in that set.} whose sizes are within a $(2+\epsilon)$ factor of each other, for all sufficiently small constants  $\epsilon$. Since the size of a maximum fractional matching is at most $3/2$ times the size of a maximum matching,  we can also maintain a $(3+\epsilon)$-approximation to the {\em size} of the maximum matching in $O(\log^3 n)$ worst-case update time.

\section{A high level overview of our algorithm}
\label{sec:overview}
In this section, we  present the main  ideas behind our algorithm. The formal description of the algorithm and the analysis   appears in subsequent sections.

\paragraph{Hierarchical Partition.} Our algorithm builds on the ideas from a dynamic data structure of Bhattacharya, Henzinger and Italiano~\cite{BhattacharyaHI15s} called {\em $(\alpha, \beta)$-decomposition}. This data structure maintains a $(2+\epsilon)$-approximate maximum fractional matching in $O(\log n/\epsilon^2)$ amortised update time. It defines the fractional edge weights using {\em levels} of nodes and edges. In particular, fix two constants $\alpha, \beta \geq 1$, and recall that the input graph $G = (V, E)$ has $|V| = n$ nodes. Partition the node set $V$ into $L +1$ levels $\{0, \ldots, L\}$, where $L = \log_{\beta} n$. Let $\ell(y) \in \{0, \ldots, L\}$ denote the level of a node $y \in V$. 
The level of an edge $(x, y)$ is given by Eq.~\eqref{eq:level_old}, and we assign a fractional weight $w(x, y)$ as  per Eq.~\eqref{eq:weight:overview}.\
\begin{eqnarray}
\ell(x,y) & = & \max(\ell(x), \ell(y)) \label{eq:level_old} \\
w(x, y) & = & \beta^{-\ell(x, y)}\label{eq:weight:overview}
\end{eqnarray}
Thus, the weight of an edge decreases exponentially with its level. The weight of a node $y \in V$ is defined as $W_y = \sum_{(x,y) \in E} w(x,y)$. This equals the sum of the weights of the edges incident on it. The goal is to maintain a partition satisfying the following property.

\begin{property}
\label{inv:prev algo}
Every node $y$ with $\ell(y)>0$ has weight $1/(\alpha\beta) \leq W_y < 1$. Furthermore, every node $y$ with $\ell(y)=0$ has weight $0 \leq W_y < 1$. 
\end{property}

To provide some intuition, we  show how to construct a hierarchical partition satisfying Property~\ref{inv:prev algo} in the  static setting, when there is no edge insertions/deletions. For notational convenience, we define $V^*_L = V$. Initially, we put all the nodes in level $L$,  and as per equations~\ref{eq:level_old},~\ref{eq:weight:overview} we assign a weight $w(x, y) = \beta^{-L} = 1/n$ to every edge $(x, y) \in E$. Since every node has degree at most $n-1$, we get $0 \leq W_y < 1$ for all $y \in V^*_L$. We now execute a {\sc For} loop as follows.

\begin{itemize}
\item
{\sc For} $i = L$ to $1$:
\begin{itemize}
\item We partition the node-set $V^*_i$ into two subsets: $V_i = \{ y \in V : 1/\beta \leq W_y < 1\}$ and $V^*_{i-1} = \{ y \in V : 0 \leq W_y < 1/\beta \}$.  Next, we move down the nodes in $V^*_{i-1}$ to level $i-1$. The level and weight of every edge incident on a node in $V \setminus V^*_{i-1} = V_i \cup \ldots \cup V_L$ remain unchanged during this step, as per equations~\ref{eq:level_old} and~\ref{eq:weight:overview}. Hence, just after the nodes in $V^*_{i-1}$ are moved down to level $i-1$, we get $1/\beta \leq W_y < 1$ for all nodes $y$ at level $i$. The weights of the remaining  edges (whose both endpoints lie in $V^*_{i-1}$) increase by a factor of $\beta$. Hence, the weights of the nodes in $V^*_{i-1}$ also increase by at most a factor of $\beta$.  Before the nodes in $V^*_{i-1}$ were moved down to level $i-1$, we had $0 \leq W_y < 1/\beta$ for all $y \in V^*_{i-1}$. Thus, just after the nodes in $V^*_{i-1}$ are moved down to level $i-1$, we get $0 \leq W_y < 1$ for all $y \in V^*_{i-1}$. 
\end{itemize}
\end{itemize}
\smallskip
\noindent When the above {\sc For} loop terminates, we  have $1/\beta \leq W_y < 1$ for all nodes $y \in V$ at levels $\ell(y) > 0$, and $0 \leq W_y < 1$ for all nodes $y \in V$ at level $\ell(y) = 0$. Specifically, Property~\ref{inv:prev algo} is satisfied with $\alpha = 1$.

\begin{theorem}[\cite{BhattacharyaHI15s}]
	\label{th:old:result}
	Under Property~\ref{inv:prev algo}, the edge-weights $\{ w(e) \}$ form a $2\alpha\beta$-approximate maximum fractional matching in $G$.
\end{theorem}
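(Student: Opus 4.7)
The plan is to prove the theorem via (weak) LP duality. Letting $M = \sum_{e} w(e)$ denote the size of the fractional matching defined by the hierarchical partition, I would exhibit an \emph{integral} vertex cover $C \subseteq V$ with $|C| \leq 2\alpha\beta \cdot M$. Since the maximum fractional matching is upper bounded by the size of any vertex cover (by LP duality, as the matching LP and the cover LP are duals), this will give $\mathrm{OPT}^*(G) \leq |C| \leq 2\alpha\beta \cdot M$, and since $M$ itself is a feasible fractional matching (Property~\ref{inv:prev algo} ensures $W_y < 1$ for every $y$), we obtain $M \leq \mathrm{OPT}^*(G) \leq 2\alpha\beta \cdot M$, as desired. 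The natural candidate is $C = \{v \in V : \ell(v) > 0\}$, the set of nodes at strictly positive levels.

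First I would check that $C$ is indeed a vertex cover. Consider any edge $(x,y) \in E$. Its level is $\ell(x,y) = \max(\ell(x), \ell(y))$, so if $\ell(x,y) \geq 1$ then at least one endpoint sits at a positive level and belongs to $C$. The only remaining case is $\ell(x,y) = 0$, meaning both $x$ and $y$ lie at level $0$; but then $w(x,y) = \beta^{-0} = 1$, which forces $W_x \geq 1$, contradicting the strict upper bound $W_y < 1$ that Property~\ref{inv:prev algo} imposes on level-$0$ nodes. Hence no edge of $G$ has both endpoints at level $0$, and every edge is covered by $C$.

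Next I would bound $|C|$ using a handshake argument. Summing node weights gives $\sum_{v \in V} W_v = 2 \sum_e w(e) = 2M$, since each edge contributes its weight to exactly two nodes. Property~\ref{inv:prev algo} guarantees $W_v \geq 1/(\alpha\beta)$ for every $v \in C$, so
\[
\frac{|C|}{\alpha\beta} \;\leq\; \sum_{v \in C} W_v \;\leq\; \sum_{v \in V} W_v \;=\; 2M,
\]
which rearranges to $|C| \leq 2\alpha\beta \cdot M$. Combined with the LP duality observation of the first paragraph, this completes the proof.

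The main (and really only) obstacle here is the verification that $C$ is a vertex cover; this is where the \emph{strict} inequality $W_y < 1$ for level-$0$ nodes in Property~\ref{inv:prev algo} is essential, since without it one could have isolated level-$0$ edges that $C$ fails to cover. Everything else is routine LP duality and a handshake count.
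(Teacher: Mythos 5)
Your proof is correct and follows essentially the same approach as the paper's: show that the positive-level nodes form a vertex cover (using the fact that a level-$0$ edge would force $W_x \geq 1$), note that each such node has weight at least $1/(\alpha\beta)$, and conclude via LP duality/complementary slackness. You have simply spelled out the handshake step that the paper leaves implicit.
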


\iffalse
\begin{proof}
If there is an edge $(x, y) \in E$ with $\ell(x) = \ell(y) = 0$, then we would get $W_x \geq w(x, y) = \beta^{-\ell(x, y)} = \beta^0 = 1$, which would  violate Property~\ref{inv:prev algo}. Thus, every edge $(u, v) \in E$ has at least one endpoint $x \in \{ u, v \}$ at level $\ell(x) > 0$. By Property~\ref{inv:prev algo}, that node $x$ has weight $1/(\alpha \beta) \leq W_x < 1$. The theorem now follows from the complementary slackness conditions for the LPs for maximum matching and minimum vertex cover.
\end{proof}
\fi

In \cite{BhattacharyaHI15s}, Bhattacharya~et~al. showed that we can dynamically maintain such a partition with $\alpha=\beta=(1+\epsilon)$ in $O(\log n/\epsilon^2)$ amortised update time. The main idea is as follows. Assume that we have a partition that satisfies Property~\ref{inv:prev algo}. Now an edge $(u, v)$ is inserted or deleted. 
This causes  $W_u$ and $W_v$ to increase or decrease. Hence, it might happen that some node $x \in \{u, v\}$ violates Property~\ref{inv:prev algo} after the insertion/deletion of the edge $(u, v)$, i.e. either (1) $W_x \geq 1$ or (2) $W_x < 1/(\alpha \beta)$ and $\ell(x) > 0$. We call such a node $x$ {\em dirty}, and deal with this event  by changing the level of $x$ in a  straightforward way as per Figure~\ref{alg:fix dirty old}: If $W_x$ is too large (resp. too small), then we increase (resp. decrease) $\ell(x)$ by one. This causes the weights of some edges incident on $x$ to decrease (resp. increase), which in turn  decreases (resp. increases) the value of $W_x$. For each level $i \in [0, L]$, we define the set of edges $E_i(x)$ as follows.
\begin{align}
E_{i}(x)=\{(x, y)\in E \mid \ell(x, y)=i\}. \label{eq:E_i:overview}
\end{align}
An important observation is that as a node $x$ moves up (resp. down) from level $i$ to level $i+1$ (resp. $i-1$), the edges whose weights get changed all belong to the set $E_i(x)$. 
Since  the relevant data structures can be maintained efficiently, this implies that the runtime of one iteration of the {\sc While} loop in Figure~\ref{alg:fix dirty old} is dominated by the cost of Line~7, which takes $O(|E_i(x)|)$ time. In~\cite{BhattacharyaHI15s}, the authors showed that this cost can be amortised over previous edge insertions/deletions. 

Note that   one iteration of the {\sc While} loop can make some neighbours of $x$  dirty, and $x$ itself might remain dirty at the end of the iteration.
These dirty nodes are dealt with in subsequent iterations in a similar way (until there is no dirty node left).

\begin{figure}[htbp]
\centerline{\framebox{
\begin{minipage}{5.5in}
\begin{tabbing}
01.  \=  {\sc While} there is a dirty node $x$ \\
02. \> \qquad \=  Let $i=\ell(x)$\\
03. \> \> {\sc If} $W_x \geq 1$, {\sc Then} \ \  // In this case $i < L$ \\ 
04. \> \> \qquad \= Set $\ell(x) \leftarrow \ell(x) + 1$. \\ 
05. \> \> {\sc Else} \ \ // In this case $W_x < 1/(\alpha\beta)$, $i > 0$ \\
06. \> \> \> Set $\ell(x) \leftarrow \ell(x) - 1$. \\
07.	\> \> Update $\ell(x, y)$ for all $(x, y)\in E_i(x)$. 
\end{tabbing}
\end{minipage}
}}
\caption{\label{alg:fix dirty old} Fixing the dirty nodes.}
\end{figure}

\paragraph{Example: Inserting edges to a star.}
The following example shows the basic idea behind the amortisation argument.
Consider a star centred at node $v$ consisting of $\beta^{i-1}$ edges, for some large $i$. To satisfy Property~\ref{inv:prev algo}, we can set $\ell(v)=i$, while all other nodes have level $0$. Thus, we get $W_v=1/\beta$ since every edge has weight $1/\beta^i$. Now keep inserting edges to the star (the graph remains a star throughout). Property~\ref{inv:prev algo} remains satisfied until the $(\beta^{i} - \beta^{i-1})$-th edge is inserted -- at this point the star consists of $\beta^i$ edges, $W_v=1$, and the node $v$ becomes dirty. We fix the node by increasing $\ell(v)$ to $i+1$ as in Algorithm~\ref{alg:fix dirty old}, thus reducing the edge-weights to $1/\beta^{i+1}$ and the value of $W_v$ to $1/\beta$. To do this we have to pay the cost of $O(|E_{i}(v)|)= O(\beta^i)$ in terms of update time. We can amortise this cost over the $(\beta^i - \beta^{i-1})$ newly inserted edges. This gives an amortised update time of $O(1)$ for constant $\beta$.

Note that in the above example the algorithm does not perform well in the worst case:
after the  $(\beta^{i}-\beta^{i-1})$-th insertion it has to ``probe'' all edges in $E_{i}(v)$. So the worst case update time becomes $O(\beta^i)$, which can be polynomial in $n$ when $i$ is large. 
But in this particular instance the problem can be fixed easily: Whenever $v$ becomes dirty due to the insertion of an edge with weight $1/\beta^i$, we reduce  the weight of the newly inserted edge and some other edge in the star from $1/\beta^i$ to  $1/\beta^{i+1}$. Thus, the net increase in the weight of $v$ becomes equal to $1/\beta^i - 2(1/\beta^i - 1/\beta^{i+1}) = 2/\beta^{i+1} - 1/\beta^i \leq 0$ (the last inequality holds as long as $\beta \geq 2$). In other words, when the node $v$ becomes dirty, by reducing the weights of two edges to $1/\beta^{i+1}$ we can ensure that $W_v$ again becomes smaller than one. Once every edge has weight $1/\beta^{i+1}$, we set $\ell(v)=i+1$.

\paragraph{Shadow-level ($\ell_y(x, y)$).} 
To make the above idea concrete, we introduce the notion of a {\em shadow-level}. For every node $y\in V$ and every incident edge $(x, y) \in E$, we define the {\em shadow-level of $y$ with respect to $(x, y)$}, denoted by $\ell_y(x, y) \in \{0, \ldots , L\}$, to be an integer in $\{0, \ldots, L\}$ such that the following property holds.
\begin{property}
	\label{inv:shadow:level:overview}
	For every node $y$ and edge $(x, y)$, $\ell(y) - 1 \leq \ell_y(x, y) \leq \ell(y) + 1$. 
\end{property}

We modify the definition of the level of an edge $(x, y) \in E$ (in Eq.~\eqref{eq:level_old}) to
\begin{align}
\ell(x, y) = \max(\ell_x(x, y), \ell_y(x, y)).\label{eq:level_new}
\end{align}
This affects the value of $w(x, y)$ and the set $E_i(y)$ as they depend on the levels of edges (see Eq.~\eqref{eq:weight:overview} and \eqref{eq:E_i:overview}).
The idea of the shadow-level is that if $\ell_y(x, y)>\ell(y)$ (respectively $\ell_y(x, y)<\ell(y)$), then from the perspective of the edge $(x, y)$ we have already increased (resp. decreased) $\ell(y)$; thus, the level and weight of $(x, y)$ has changed accordingly.
In this case, we say that $y$ {\em up-marks} (respectively {\em down-marks}) the edge $(x, y)$. We will use this operation when $W_y$ is too large (resp. too small). 
Intuitively, $y$ should not up-mark and down-mark edges at the same time. In particular, let $M_{down}(y) = \{ (x, y) \in E : \ell_y(x, y) = \ell(y) - 1\}$ and $M_{up}(y) = \{ (x, y) \in E : \ell_y(x, y) = \ell(y) + 1 \}$ respectively denote the set of all edges down-marked and up-marked by $y$. Then, we will maintain the following property. 
\begin{property}\label{inv:up or down only}
	Either $M_{up}(y)=\emptyset$ or $M_{down}(y)=\emptyset$. 
\end{property}

To see the usefulness of this new definition, consider the following algorithm for dealing with 
the case where the graph is always a star centred at $v$:
If there are only edge insertions, then $v$ up-marks the newly inserted edge and another edge in $E_{\ell(v)}(v)$ whenever it  becomes dirty (i.e. $W_v \geq 1$). It is easy to see that this will be enough to keep $W_v<1$ as long as $\beta \geq 2$. Once $E_{\ell(v)}(v)=\emptyset$, we increase $\ell(v)$ by one.
Similarly, if there are only edge deletions, then $v$ can down-mark an edge in $E_{\ell(v)}(v)$ whenever it becomes dirty (i.e. $W_v<1/(\alpha\beta)$).

The algorithm follows the same strategy when there are both edge insertions and deletions, albeit with one caveat: To ensure that Property~\ref{inv:up or down only} holds, it cannot up-mark an edge  if $M_{down}(v)\neq \emptyset$, and cannot down-mark an edge  if $M_{up}(v)\neq \emptyset$. Suppose that $M_{down}(v) \neq \emptyset$ and we want to reduce the weight of $v$. In this event, the node $v$ picks an edge $(u, v)$ in $M_{down}(v)$ and sets $\ell_v(u, v)$ back from $\ell(v)-1$ to $\ell(v)$, which reduces the value of $w(u, v)$. This  causes the edge $(u, v)$ to be removed from $M_{down}(v)$ and be added to $E_{\ell(v)}(v)$. We say that the node $v$ {\em un-marks} the edge $(u, v)$. Next, suppose that $M_{up}(v) \neq \emptyset$ and we want to increase the weight of $v$.
In this event, the node $v$ un-marks an edge in  $M_{up}(v)$.

\begin{figure}
	\centering
		\includegraphics[width= .6\linewidth]{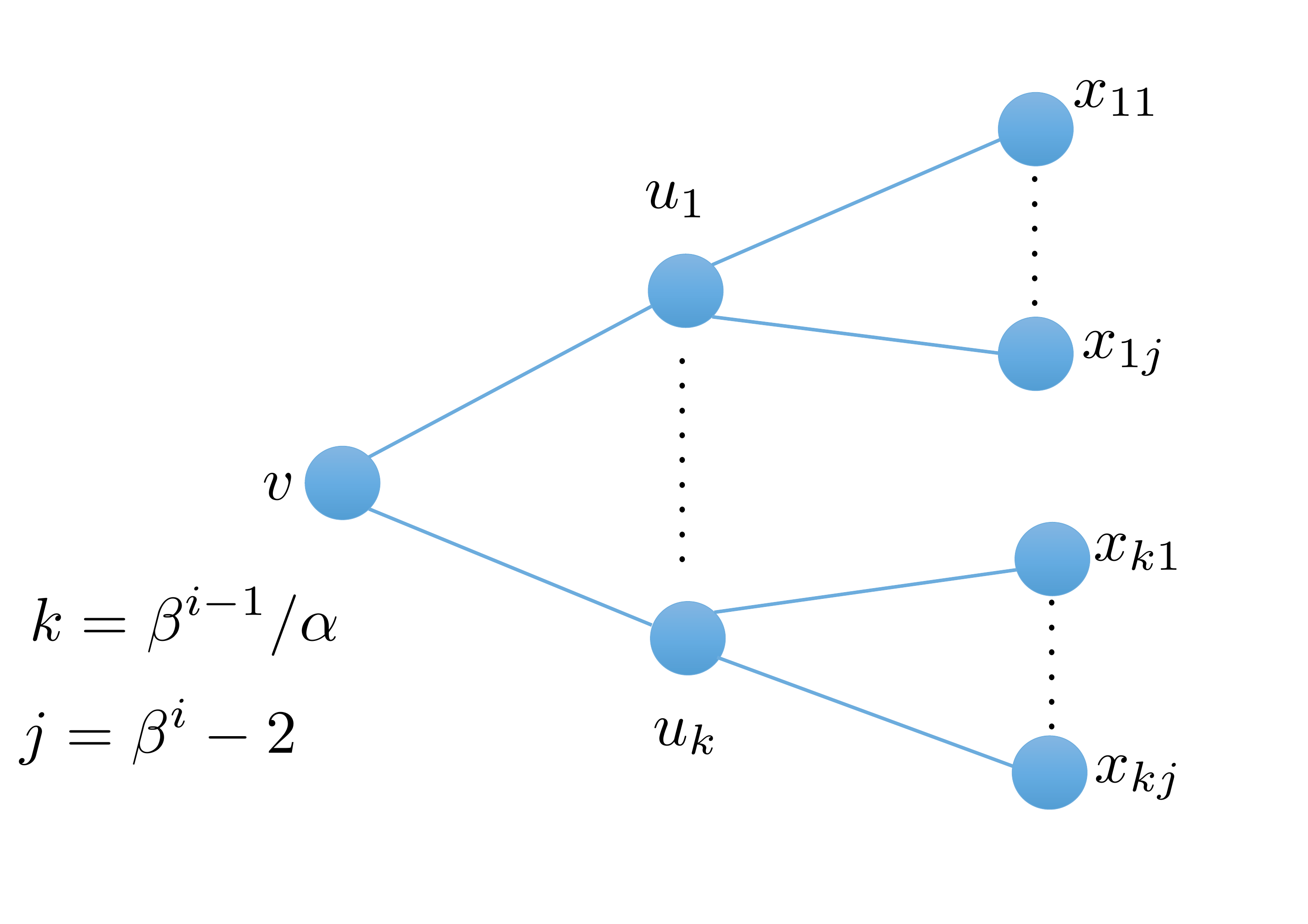}
	\caption{Example. Suppose that $i$ is very large, e.g. $i=(\log n)/2$, and  $\beta$ is a large constant.}\label{fig:examples}
\end{figure}

\paragraph{Failures.} So far we have described an idea that leads to small worst-case update time when the input instance is a star graph. To make  this idea work  on a general input instance, we have to deal with  several issues that make the algorithm more complicated. 
The chief one among them is the observation that the algorithm may {\em fail} in adjusting edge weights. 
Consider, for example, a tree rooted at a node $v$ having $k=\beta^{i-1}/\alpha$ children, say $u_1, \ldots u_{k}$. Further, each $u_i$ has $j = \beta^{i}-2$ children. See Fig.~\ref{fig:examples}.
Suppose that we satisfy Property~\ref{inv:prev algo} by setting $\ell(x_{p, q}) \leftarrow 0$ for every leaf-node $x_{p, q}$ with $p \in [k]$ and $q \in [j]$,  $\ell(u_p) \leftarrow i$ for every internal node $u_p$ with $p \in [k]$, and $\ell(v) \leftarrow i$ for the root node $v$. No edge is down-marked or up-marked by any node. This implies that $W_v=1/(\alpha\beta)$, $W_{u_p}=1-1/\beta^i$ for all $p \in [k]$, and $W_{x_{p, q}} = 1/\beta^i$ for all $p \in [k]$ and $q \in [j]$. 

Now, suppose that the edge $(v, u_1)$ gets deleted. This makes   $v$ dirty, for  $W_v$ becomes smaller than $1/(\alpha\beta)$. The algorithm  responds by down-marking an edge in $E_i(v)$, say $(v, u_2)$. Unfortunately, this down-marking does not change the weight  $w(u, v_2)$ since $\ell_{v_2}(u, v_2)=i$. I We say that this down-marking {\em fails}. 
When a down-marking fails, the node $v$ remains dirty and Property~\ref{inv:prev algo} remains unsatisfied. In fact, in this example, the node $v$ will remain dirty even if we down-mark all the edges in $E_i(v)$. To satisfy Property~\ref{inv:prev algo}, we have no other option but to set $\ell(v) \leftarrow 0$. However, we cannot do so unless we probe all the edges in $E_i(v)$, for we have to ensure that all the down-markings on these edges fail. This takes too much time. 

To deal with this issue, we  keep  down-marking the edges in $E_i(v)$ as long as we fail, until the point when  we experience  $\polylog(n)$ failures. We might still end up  having $W_v<1/(\alpha\beta)$. Nevertheless,  we will  guarantee a constant approximation ratio by arguing that we continue to have $W_v = \Omega(1/(\alpha\beta))$. Intuitively, every time $W_v$ decreases by $1/\beta^i$ because of these failures, we  down-mark many edges in the set $E_i(v)$. Since  $|E_i(v)| \leq \beta^{i-1}/\alpha$, we are able to down-mark all the edges in $E_i(v)$ before the value of $W_v$ becomes too small. At that point, we are ready to decrease the level of $v$.

Specifically, suppose that the edges incident to  $v$ keep getting deleted. While handling $t=\beta^{i-1}/(\alpha\log^2 n)$ such deletions, we perform $t \cdot \polylog(n)\geq \beta^{i-1}/\alpha$ {\em failed} down-markings. This is enough to down-mark every edge in $E_i(v)$. At this point we move $v$ down to level $(i-1)$, and we still have $W_v\geq 1/(\alpha\beta)-t/\beta^i=(1-1/\log^2 n)/(\alpha\beta)$. By repeating this argument, we  conclude that even if there are more deletions, we still have $W_v= \Omega(1/(\alpha\beta))$ until the point in time when  $v$ moves down to level $0$  (where $v$ can not be dirty  for its weight being small).

\paragraph{The algorithm in a nutshell.} Our algorithm obeys the following principles. When a node $y$ becomes dirty, it either (1) up-marks or down-marks an edge $(x, y)$, or (2) un-marks an  edge $(x, y)$ if up-marking or down-marking would violate Property~\ref{inv:up or down only}.  Such an action may fail, meaning that $w(x,y)$ might not change, for  reasons  exemplified in Fig.~\ref{fig:examples}. In this event, $y$ continues probing its other incident edges, and stops when  it experiences either its first {\em success} or its $\polylog  (n)^{th}$ failure. We  show: (a) the failures do not cause the  weight $W_y$ to become too small or too large, (b) fixing one dirty node  leads to at most one new dirty node, and (c) the level of the dirty node under consideration drops after constantly many fixes. Item (a) guarantees a constant approximation factor. Items (b), (c) guarantee a $\polylog(n)$ update time, for there are $O(\log n)$ levels.

\section{Preliminaries}
\label{sec:algo}

Henceforth, we  focus on formally describing our dynamic algorithm and analysing its worst-case update time. For the rest of the paper, we fix two constants $\beta, K$ and  define $L$ and $f(\beta)$  as in equation~\ref{eq:beta:K:L}. Note that $K < L$ when $n$ is  sufficiently large.
\begin{equation}
\label{eq:beta:K:L}
\beta \geq 5, K = 20, f(\beta) = 1 - 3/\beta,  L = \lceil \log_{\beta} n \rceil.
\end{equation}

We will maintain a  hierarchical partition of the node-set in  $G = (V, E)$, and a fractional matching where the weights assigned to the edges depend on the levels of their endpoints. For technical reasons, however, there will be two key differences between the hierarchical partition actually used by our dynamic algorithm and the one that was defined in Section~\ref{sec:overview}. 

\begin{enumerate}
\item We will  collapse all the nodes  in levels $\{0, \ldots, K\}$ into a single level $K$. Accordingly, the level of a node will lie in the range $[K, L]$ in this new hierarchical partition. The weights of the nodes $y$ in levels $\ell(y) > K$ will satisfy the constraint: $f(\beta) \leq W_y < 1$. On the other hand, the weights of the nodes $y$ at the lowest level $\ell(y) = K$ will satisfy the constraint: $0 \leq W_y < 1$. Comparing these constraints with Property~\ref{inv:prev algo}, 	it follows that the term $1/(\alpha \beta)$ is replaced by $f(\beta)$ in the new partition. 
\item We will allow the weight of an edge $(u, v)$ to be off by a factor of $\beta$ from its ideal value $\beta^{-\max(\ell(u), \ell(v))}$. 
\end{enumerate}
\smallskip
\noindent The structure maintained by our  algorithm will be called a {\em nice-partition}. This is formally defined below.
\begin{definition}
\label{def:structure}
In a {\em nice-partition},  the node-set $V$ is partitioned into $(L-K+1)$ subsets $V_K, \ldots, V_L$. For $i \in [K, L]$, if a node $v$ belongs to  $V_i$, then we say that the node $v$ is at {\em level} $\ell(v) = i$. Each edge $(u, v) \in E$ gets a weight $w(u,v)$. Let $W_v = \sum_{(u,v) \in E} w(u,v)$ be the total weight received by a node $v$ from  its incident edges. The following  properties hold.
\begin{enumerate} 
\item For every edge $(u, v) \in E$, we have $\beta^{-\max(\ell(u), \ell(v)) -1} \leq w(u, v) \leq \beta^{-\max(\ell(u),\ell(v))+1}$. 
\item If a node  $v$ has $\ell(v) > K$, then $f(\beta) \leq W_v < 1$. 
\item If a node $v$ has $\ell(v) = K$, then $W_v < 1$. 
\end{enumerate}
\end{definition}
\begin{lemma}
\label{lm:main}
Suppose that we can maintain a nice-partition  in $O(T(n))$ worst-case update time. Then we can also maintain a $2/f(\beta)$-approximate maximum fractional matching and a $2/f(\beta)$-approximate minimum vertex cover   in $O(T(n))$ worst case update time.
\end{lemma}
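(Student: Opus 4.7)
The plan is to apply LP duality between the maximum (fractional) matching LP and the minimum (fractional) vertex cover LP. We take the maintained fractional matching to be the weights $\{w(e)\}_{e\in E}$ that already come with the nice-partition, and the maintained vertex cover to be $V^{*} = \{v\in V:\ell(v)>K\}$, the set of nodes above the bottom level. Both can be read off directly from the partition, so the update-time overhead is only the cost of tracking which vertices cross level $K$, which is clearly $O(T(n))$.

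For feasibility of the fractional matching, Properties 2 and 3 of Definition~\ref{def:structure} give $W_{v}<1$ for every $v$, so $\{w(e)\}$ is a valid fractional matching. The step I expect to be the main obstacle is verifying that $V^{*}$ is a vertex cover, i.e., that no edge has both endpoints at level $K$. The sketched proof of Theorem~\ref{th:old:result} ruled out level-$0$ edges immediately, because such an edge would have weight $w(x,y)=\beta^{0}=1$, forcing $W_{x}\geq 1$ against Property~\ref{inv:prev algo}. In our setting the analogous lower bound from Property 1 is only $w(u,v)\geq\beta^{-K-1}$, much smaller than $1$, so the contradiction does not fall out of the literal statements of Properties 1--3; the argument must exploit an additional structural fact about the nice-partition (likely an implicit invariant maintained by the algorithm's procedures, or the choice of $K$ relative to $\beta$) to rule out level-$K$ edges.

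Assuming $V^{*}$ is a vertex cover, the size bound is immediate from Property 2: for each $v \in V^{*}$ we have $W_{v} \geq f(\beta)$, and hence
\[
\sum_{e\in E} w(e)\;=\;\tfrac{1}{2}\sum_{v\in V} W_{v} \;\geq\; \tfrac{1}{2}\sum_{v\in V^{*}} W_{v} \;\geq\; \tfrac{f(\beta)}{2}\,|V^{*}|,
\]
which rearranges to $|V^{*}|\leq (2/f(\beta))\sum_{e} w(e)$.

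LP duality now finishes the argument. Let $\nu^{*}_{\mathrm{LP}}$ and $\tau^{*}_{\mathrm{LP}}$ denote the optima of the fractional matching and fractional vertex cover LPs, and let $\tau^{*}$ denote the minimum integer vertex cover. Since $\{w(e)\}$ is a feasible fractional matching and $V^{*}$ is a feasible integer vertex cover, we have $\sum_{e} w(e) \leq \nu^{*}_{\mathrm{LP}} = \tau^{*}_{\mathrm{LP}} \leq \tau^{*} \leq |V^{*}|$. Chaining this with the above size bound yields both $\sum_{e} w(e) \geq (f(\beta)/2)\,\nu^{*}_{\mathrm{LP}}$ and $|V^{*}| \leq (2/f(\beta))\,\tau^{*}$, which are exactly the two approximation guarantees required by the lemma.
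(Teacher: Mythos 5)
There is a genuine gap, and it is exactly at the point you flagged as ``the main obstacle'': no structural fact about a nice-partition rules out edges with both endpoints at level $K$. Definition~\ref{def:structure} places no lower bound on $W_v$ for nodes at level $K$, so a graph consisting of a single edge $(u,v)$ with $\ell(u)=\ell(v)=K$ and $w(u,v)=\beta^{-K}$ is a perfectly valid nice-partition in which your proposed cover $V^{*}=\{v:\ell(v)>K\}$ is empty while an edge is present. The same example breaks your matching guarantee: the raw weights $\{w(e)\}$ sum to $\beta^{-K}$ while the maximum fractional matching has value $1$, so $\{w(e)\}$ by itself is only a $\beta^{K}$-approximation, not a $2/f(\beta)$-approximation. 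Your duality chain and the averaging bound $|V^{*}|\leq (2/f(\beta))\sum_e w(e)$ are fine as far as they go, but they are conditioned on $V^{*}$ being a cover and on the matching being large enough to certify it, and neither holds.

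The paper closes this gap with an additional construction rather than an additional invariant. It maintains a \emph{residual} fractional matching $\{w^r(e)\}$ supported on the edges $E^r$ with both endpoints at level $K$, chosen so that (a) $W_v+W^r_v\leq 1$ everywhere and (b) every edge of $E^r$ has an endpoint with $W_x+W^r_x\geq 1-1/\beta$. Because each node has $O(1)$ incident edges in $E^r$ (their weights are at least $\beta^{-K-1}$ and node weights are below $1$), this residual matching can be maintained as a maximal matching in a constant-degree auxiliary graph in $O(1)$ time per update. The output matching is $\{w(e)+w^r(e)\}$ and the cover is defined by the weight threshold $V^{*}=\{v: W_v+W^r_v\geq f(\beta)\}$ rather than by level; condition (2) of Definition~\ref{def:structure} covers edges with an endpoint above level $K$, condition (b) covers the level-$K$ edges, and then the complementary-slackness/averaging argument you wrote goes through. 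So your overall LP-duality framing matches the paper, but the missing ingredient is a substantive piece of the algorithm, not a verification you deferred.
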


\begin{proof}
Let  $E^r  = \{ (u,v ) \in E : \ell(u) = \ell(v) = K\}$ be the subset of edges with both endpoints at level $K$. We will maintain a {\em residual} weight $w^r(e) \geq 0$ for every edge $e \in E^r$. For notational consistency, we define $w^r(e) = 0$ for every edge $e \in E \setminus E^r$. Let $W^r_v = \sum_{(u, v) \in E^r} w^r(u,v)$ denote the residual weight received by a node $v$ from all its incident edges. Two conditions are satisfied: 
\begin{itemize}
\item (a) For each node $v \in V$, we have $0 \leq W_v + W^r_v \leq 1$.
\item (b) For every edge  $(u, v) \in E^r$, we have either $W_v + W^r_v \geq 1-1/\beta$  or  $W_u + W_u^r \geq 1-1/\beta$.
\end{itemize}
\smallskip
\noindent 
Let $\text{deg}^r(v)$ denote the degree of a node $v \in V$ among the edges in $E^r$. By condition~(1) of Definition~\ref{def:structure}, every edge $(x, y) \in E^r$ has weight $w(x, y) \geq \beta^{-K-1}$. Hence, for every node $v \in V$, we get: $1 > W_v \geq \sum_{(u, v) \in E^r} w(u, v) \geq \text{deg}^r(v) \cdot \beta^{-K-1}$.  This implies that $\text{deg}^r(v) < \beta^{K+1}$ for every node $v \in V$. Since $\beta, K$ are constants, we get: $\text{deg}^r(v) = O(1)$ for every node $v \in V$. 

\medskip
\noindent {\em Maintaining the residual weights $\{w^r(e)\}, e \in E^r$.}

\noindent For every node $v \in V$, let $b(v) =  1 - W_v$ denote the {\em capacity} of the node. Let $b^r(v)$ be  equal to the value of $b(v)$ rounded down to the nearest multiple of $1/\beta$. We say that $b^r(v)$ is the {\em residual capacity} of node $v$. We  create an {\em auxiliary  graph} $G^* = (V^*, E^*)$, where we have $\beta$ {\em copies} of each node $v \in V$. For every edge $(u, v) \in E^r$, there are $\beta^2$ edges in $G^*$: one for each pair of copies of $u$ and $v$. For each node $v \in V$, if $b^r(v) = t/\beta$ for some integer $t \in [0, \beta]$, then $t$ copies of $v$ are {\em turned on} in $G^*$,  and the remaining $(\beta - t)$ copies of $v$ are {\em turned off} in $G^*$. We maintain a maximal matching $M^*$ in the subgraph of $G^*$ induced by the copies of nodes that are turned on. Since $\text{deg}^r(v) = O(1)$ for every node $v \in V$, we can maintain the matching $M^*$ in $O(1)$ update time using a trivial algorithm. From the matching $M^*$, we get back the residual weights $\{w^r(e)\}$ as follows. For every edge $(u, v) \in E^r$, if there are $t$ edges in $M^*$ between different copies of $u$ and $v$, then we set $w^r(u,v) \leftarrow t/\beta$. It is easy to check that this  satisfies both  conditions (a) and (b).

\medskip
\noindent {\em Approximation guarantee.}

\noindent Condition (a) implies that the edge-weights $\{ w(e) + w^r(e) \}$ form a valid fractional matching in $G$. Define the subset of nodes $V^* = \{ v \in V : W_v + W^r_v \geq f(\beta) \}$. Consider any edge $(u, v) \in E$. If at least one endpoint $x \in \{u, v\}$ lies at a level $\ell(x) > K$, then condition (2) of Definition~\ref{def:structure} implies that $W_x + W^r_x \geq W_x \geq f(\beta)$, and hence $x \in V^*$. On the other hand, if both the endpoints $\{u, v\}$ lie at level $K$, then by conditions (a) and (b) we have:  $W_x + W^r_x \geq 1 - 1/\beta \geq f(\beta)$ for some $x \in \{u, v\}$, and hence $x \in V^*$. It follows that $V^*$ forms a valid vertex cover in $G$.   Applying  complementary slackness conditions, we infer that the edge-weights $\{ w(e) + w^r(e) \}$ form a $2/f(\beta)$-approximate maximum fractional matching in $G$, and that $V^*$ forms a $2/f(\beta)$-approximate minimum vertex cover in $G$.  
\end{proof}

Fix any constant $0 < \epsilon < 1$ and let $\beta = 3(2+\epsilon)/\epsilon$. Then  $\beta \geq 5$ and $2/f(\beta) = 2 +\epsilon$ (see equation~\ref{eq:beta:K:L}). Setting  $\beta$ in this way, we can use Theorem~\ref{th:main} and Lemma~\ref{lm:main} to maintain a $(2+\epsilon)$-approximate maximum fractional matching and a $(2+\epsilon)$-approximate minimum vertex cover in $O(\log^3 n)$ worst-case update time. We devote the rest of the paper to proving Theorem~\ref{th:main}.

\begin{theorem}
\label{th:main}
We can maintain a nice-partition  in $G = (V, E)$ in $O(\log^3 n)$ worst case update time. 
\end{theorem}

\subsection{Shadow-levels.}
\label{sub:sec:shadow-level}

As  in Section~\ref{sec:overview}, the shadow-levels will uniquely determine the weight $w(u,v)$ assigned to every edge $(u,v) \in E$. They will ensure that  $w(u,v)$ differs from the ideal value $\beta^{-\max(\ell(u), \ell(v))}$ by at most a factor of $\beta$. This implies condition (1) of Definition~\ref{def:structure}.
Specifically, we require that each edge  has two shadow-levels: one for each of its endpoints. Let $\ell_y(x, y) \in [K, L]$ be the shadow-level of a node $y$ with respect to the edge $(x, y)$. We require that this shadow-level can differ from the actual level of the node by at most one. This is formally stated in the invariant below.

\begin{invariant}
\label{inv:shadow:level}
For every node $y \in V$ and every edge $(x, y) \in E$, we have $\ell(y) - 1 \leq \ell_y(x, y) \leq \ell(y) + 1$.
\end{invariant}

Next, as in Section~\ref{sec:overview}, we define the {\em level of an edge} to be the maximum value among the shadow-levels of its endpoints. Let $\ell(x, y) \in [K,  L]$ be the level of an edge $(x, y)$. Then for every edge $(x, y) \in E$ we  have:
\begin{equation}
\label{eq:level:edge} 
\ell(x, y) = \max(\ell_x(x, y), \ell_y(x, y)).
\end{equation}

As in Section~\ref{sec:overview}, we now require that the weight assigned to an edge $(u, v) \in E$ be given by $\beta^{-\ell(u,v)}$. 
 \begin{equation}
 \label{eq:weight:edge}
 w(x, y) = \beta^{-\ell(x,y)} \text{ for every edge } (u,v) \in E.
 \end{equation}
 
 Thus, the weight of an edge decreases exponentially with its level. It is easy to check that if Invariant~\ref{inv:shadow:level} holds, then assigning the weights to the edges in this manner  satisfies condition (1) of Definition~\ref{def:structure}.

\begin{corollary}
\label{cor:inv:shadow:level}
Suppose that Invariant~\ref{inv:shadow:level} holds and  edges are assigned  weights as in equations~\ref{eq:level:edge},~\ref{eq:weight:edge}. Then for every edge $(x, y) \in E$ we have: 
$$\beta^{-\max(\ell(x), \ell(y))-1} \leq w(x,y) \leq \beta^{-\max(\ell(x), \ell(y))+1}.$$ 
\end{corollary}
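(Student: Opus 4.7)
The plan is to reduce the inequality about weights to an equivalent inequality about edge levels, and then derive that inequality directly from Invariant~\ref{inv:shadow:level} together with the definition of $\ell(x,y)$ in equation~\ref{eq:level:edge}. Since $w(x,y) = \beta^{-\ell(x,y)}$ with $\beta \geq 5 > 1$, the map $t \mapsto \beta^{-t}$ is strictly decreasing, so the target bounds are equivalent to
\[
\max(\ell(x),\ell(y)) - 1 \;\leq\; \ell(x,y) \;\leq\; \max(\ell(x),\ell(y)) + 1.
\]
Thus it suffices to establish these two bounds on $\ell(x,y)$, and then exponentiate.

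For the upper bound on $\ell(x,y)$, I would apply Invariant~\ref{inv:shadow:level} at each endpoint separately to obtain $\ell_x(x,y) \leq \ell(x)+1$ and $\ell_y(x,y) \leq \ell(y)+1$. Taking the maximum of these two quantities and using $\ell(x,y) = \max(\ell_x(x,y),\ell_y(x,y))$ from equation~\ref{eq:level:edge} immediately yields $\ell(x,y) \leq \max(\ell(x),\ell(y))+1$, which after exponentiation gives the lower bound on $w(x,y)$ claimed in the corollary.

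For the lower bound on $\ell(x,y)$, I would again invoke Invariant~\ref{inv:shadow:level} at each endpoint to obtain $\ell_x(x,y) \geq \ell(x)-1$ and $\ell_y(x,y) \geq \ell(y)-1$. Since $\ell(x,y)$ is by definition at least as large as each of $\ell_x(x,y)$ and $\ell_y(x,y)$, it dominates both $\ell(x)-1$ and $\ell(y)-1$, and hence $\ell(x,y) \geq \max(\ell(x),\ell(y))-1$. Exponentiating by $-1$ in base $\beta$ then yields the upper bound on $w(x,y)$.

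There is no real obstacle here; the corollary is a purely syntactic consequence of equation~\ref{eq:level:edge}, equation~\ref{eq:weight:edge}, and the two-sided bound in Invariant~\ref{inv:shadow:level}, combined with the monotonicity of $t \mapsto \beta^{-t}$. The only mild care needed is to apply the invariant separately at the two endpoints before combining through the maximum, rather than trying to bound the two shadow-levels simultaneously.
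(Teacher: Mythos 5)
Your proof is correct and follows essentially the same route as the paper: bound $\ell(x,y)=\max(\ell_x(x,y),\ell_y(x,y))$ between $\max(\ell(x),\ell(y))-1$ and $\max(\ell(x),\ell(y))+1$ using Invariant~\ref{inv:shadow:level}, then apply $w(x,y)=\beta^{-\ell(x,y)}$. The only difference is that you spell out the max-of-maxima step more explicitly than the paper does.
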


\begin{proof}
Since each shadow-level differs from the actual level by at most one (see Invariant~\ref{inv:shadow:level}), the maximum value among the shadow-levels  also differs from the maximum value among the actual levels by at most one. Specifically, we get: $\max(\ell(x), \ell(y)) - 1 \leq \ell(x,y) = \max(\ell_x(x, y), \ell_y(x, y)) \leq \max(\ell(x), \ell(y))+1$. The corollary now follows from the fact that the weight of an edge $(x, y) \in E$ is given by $w(x, y)  = \beta^{- \ell(x, y)}$. 
\end{proof}

As in Section~\ref{sec:overview}, we now define the concept of an edge {\em marked} by a node. Consider any edge $(x, y) \in E$ incident to a node $y \in V$. If $\ell_y(x, y) = \ell(y)+1$, then we say that the edge $(x, y)$ has been {\em up-marked} by the node $y$. Similarly, if $\ell_y(x, y) = \ell(y) - 1$, then we say that the edge $(x, y)$ has been down-marked by the node $y$. And if $\ell_y(x, y) = \ell(y)$, then we say that the edge $(x, y)$ is {\em un-marked} by the node $y$. We let $M_{up}(y)$ and $M_{down}(y)$ respectively be the set of all edges $(x, y) \in E$ incident to $y$ that have been up-marked and down-marked by $y$. For every  $i \in [K,  L]$, we let $E_i(y)$ be the set of all  edges $(x, y) \in E$ incident to $y$ that are at level $\ell(x, y) = i$. 
\begin{eqnarray}
\label{eq:mark:up}
M_{up}(y) &  = &  \{ (x, y) \in E : \ell_y(x, y) = \ell(y) + 1\} \\
\label{eq:mark:down}
\qquad  M_{down}(y) & = &  \{ (x, y) \in E : \ell_y(x, y) = \ell(y) - 1 \}  \\
\label{eq:mark:level}
E_i(y) & = &  \{ (x, y) \in E : \ell(x, y) = i \} 
\end{eqnarray}

\subsection{Different states of a node.}
\label{sub:sec:node:state}

Our goal is to maintain a nice-partition in $G$. In Section~\ref{sub:sec:shadow-level}, we defined the concept of shadow-levels so as to ensure that the  edge-weights satisfy condition (1) of Definition~\ref{def:structure}. In this section, we present  a framework which will ensure that the {\em node-weights} satisfy the remaining conditions (2), (3) of Definition~\ref{def:structure}. Towards this end, we first need to define the concept of an {\em activation} of a node.

\medskip
\noindent {\bf Activations of a node.} The deletion of an edge $(x, y)$ in $G$ leads to a decrease in the values of  $W_x$ and $W_y$. In contrast, when an edge $(x, y)$ is inserted in $G$, we  assign  values to its two shadow-levels $\ell_x(x, y)$ and  $\ell_y(x, y)$ in such a way  that Invariant~\ref{inv:shadow:level} holds, and then assign a weight to the edge as per equation~\ref{eq:weight:edge}. This leads to an increase in the values of  $W_x$ and $W_y$. These two  events are called {\em natural activations} of the  endpoints $x, y$. In other words,  a node is {\em naturally activated} whenever an edge incident to it is either inserted into or deleted from  $G$.  The weight of a node changes whenever it encounters a natural activation.  Hence, such an event  might lead to a scenario where the node-weight becomes either too large or too small, thereby violating either condition (2) or condition (3) of Definition~\ref{def:structure}. For example, consider a node $y$ at a level $\ell(y) > K$ whose current weight is just slightly smaller than one. Thus, we have: $1- \delta \leq W_y < 1$ for some  small $\delta$. Now, suppose that $y$ gets naturally activated due to the insertion of an  edge $(x, y)$. Further, suppose that this leads to the value of $W_y$ becoming larger than one after the natural activation. So the node $y$ violates condition (2) of Definition~\ref{def:structure}. In our algorithm, at this stage the node $y$ will  select some edge $(x', y) \in E_{\ell(y)}(y)$ and up-mark that edge. Specifically, the node  will set $\ell_y(x', y) \leftarrow \ell(y) +1$,  insert the edge $(x', y)$ into the sets $M_{up}(y)$ and $E_{\ell(y)+1}(y)$, and remove the edge from the set $E_{\ell(y)}(y)$. The new level of the edge will be given by $\ell(x', y) = \ell(y) + 1$. This will reduce the node-weight $W_y$ by $\beta^{-\ell(y)} - \beta^{-(\ell(y) + 1)}$, and (hopefully) the new value of $W_y$ will again be smaller than one. The up-marking of the edge $(x', y)$, however, will  change  the weight of the other endpoint $x'$. We call such an event an {\em induced activation} of $x'$. Specifically, an {\em induced activation} of a node $x'$ refers to the event when the node-weight $W_{x'}$ increases (resp. decreases) because the other endpoint $y$ of an incident edge $(x', y)$  has decreased (resp. increased) its shadow-level $\ell_y(x', y)$. 

In general, consider an activation of a node $y$ that increases its weight.  Suppose that the node  wants  to revert this change (weight increase) so as to ensure that conditions (2) and (3) of Definition~\ref{def:structure}  remain satisfied. Then it either up-marks some edges from $E_{\ell(y)}(y)$ or un-marks some edges from $M_{down}(y)$. This, in turn, might activate some of the neighbours of $y$. 

Similarly, consider an activation of a node $y$ that decreases its weight. Suppose that the node wants  to revert this change (weight decrease) so as to ensure that conditions (2) and (3) of Definition~\ref{def:structure}  remain satisfied. Then it either down-marks some edges from $E_{\ell(y)}(y)$ or un-marks some edges from $M_{up}(y)$. Again, this might in turn activate  some of the neighbours of $y$. 

We require that a node cannot simultaneously have an up-marked and a down-marked edge incident on it. This requirement is formally stated in  Invariant~\ref{main:inv:up:down}. Intuitively,  a node has up-marked incident edges when it is trying to ensure that its weight does not become too large, and down-marked incident edges when it is trying to ensure that its weight does not become too small. Thus, it makes sense to assume that a node cannot simultaneously be in both these states. 

\begin{invariant}
\label{main:inv:up:down}
For every node $y \in V$, either $M_{up}(y) = \emptyset$ or $M_{down}(y) = \emptyset$.
\end{invariant}

Invariant~\ref{main:inv:shadow:level} states that if a node $y$ has up-marked or down-marked an incident edge $(x,y)$, then the shadow-level $\ell_x(x, y)$ of the other endpoint $x$  is no more than the level of $y$. Intuitively,  the node $y$  up-marks or down-marks an incident edge  only if it wants to change its weight $W_y$ without changing its own level $\ell(y)$. Suppose that the invariant is false, i.e., the node $y$ has up-marked or down-marked an edge $(x, y)$ with $\ell_x(x, y) > \ell(y)$. Then we have $\ell_y(x, y) \leq \ell(y) + 1 \leq \ell_x(x, y)$, where the first inequality follows from Invariant~\ref{inv:shadow:level}. But, this implies that  $y$ can {\em never change the weight  $w(x, y)$ by up-marking or down-marking $(x,y)$}, for the value of $w(x, y)$ is  determined by the shadow-level of the other endpoint $x$. Thus, the  node $y$ does not gain anything by  up-marking or down-marking the edge $(x, y)$. This is why we guarantee the following invariant.

\begin{invariant}
\label{main:inv:shadow:level}
For every edge $(x, y) \in E$, if $\ell_y(x, y) \neq \ell(y)$, then we must have $\ell_x(x, y) \leq \ell(y)$. 
\end{invariant}

\begin{table*}
\begin{large}
\begin{center}
  \begin{tabular}{ | l | c | c | r | r | } 
    \hline  
    &&&& \\ $\state[y]$& Weight-range & Up-marked  & Down-marked  & Other  \\ & & edges & edges & constraints \\ \hline
    &&&& \\ 1. $\Up$& $1-\frac{1}{\beta} \leq W_y < 1$  &  & $M_{down}(y) = \emptyset$ & $E_{\ell(y)}(y) \neq \emptyset$ \\ &&&& \\ \hline
         &&&& \\ 2. $\Down$& $f(\beta) \leq W_y < 1-\frac{2}{\beta}$ & $M_{up}(y) = \emptyset$ & & If $\ell(y) > K$, then   \\  & & & & $E_{\ell(y)}(y) - M_{down}(y) \neq  \emptyset$ \\ &&&& \\ \hline
                &&&& \\ 3. $\text{{\sc Slack}}$& $0 \leq W_y < f(\beta)$ & $M_{up}(y) = \emptyset$ & $M_{down}(y) = \emptyset$ & $\ell(y) = K$ \\ &&&& \\ \hline
    &&&& \\ 4. $\Idle$& $1-\frac{2}{\beta} \leq W_y < 1 - \frac{1}{\beta}$ & $M_{up}(y) = \emptyset$ & $M_{down}(y) = \emptyset$ &   \\ &&&& \\ \hline 
    &&&&\\ 5. $\Upb$& $1-\frac{2}{\beta} \leq W_y < 1 - \frac{1}{\beta}$ & $M_{up}(y) \neq \emptyset$ & $M_{down}(y) = \emptyset$ & \\ &&&& \\ \hline
    &&&& \\ 6. $\Downb$& $1-\frac{2}{\beta} \leq W_y < 1 - \frac{1}{\beta}$ & $M_{up}(y) = \emptyset$ & $M_{down}(y) \neq \emptyset$ &  \\ &&&& \\ \hline
  \end{tabular}
  \end{center}
  \end{large}
  \caption{\label{fig:different:states} Constraints satisfied by a node in different states.}
    \end{table*}

\noindent {\bf Six different states.}
For technical reasons, we will require that a node is always in one of six   possible {\em states}. See Table~\ref{fig:different:states}. It is easy to check that this is sufficient to ensure conditions (2), (3) of Definition~\ref{def:structure}. See Lemma~\ref{lm:different:states}.
One way to classify  these states is as follows. Definition~\ref{def:structure} requires that the weight of a node $y$ lies in the range $0 \leq W_y < 1$. We partition this range into four intervals: $I_1, I_2, I_3$ and $I_4$.  These intervals are non-empty as long as $\beta$ is a sufficiently large constant. 
\begin{eqnarray*}
I_1 = \left[0,f(\beta)\right), \ \  I_2 =  \left[f(\beta), 1-2/\beta\right) \\ 
 I_3 = \left[1-2/\beta, 1-1/\beta\right) \text{ and } I_4 =\left[1-1/\beta, 1\right).
 \end{eqnarray*}
A node $y$ is in $\Up$ state when $W_y \in I_4$, $\Down$ state when $W_y \in I_2$, and $\text{{\sc Slack}}$ state when $W_y \in I_1$. As per Table~\ref{fig:different:states},  the node $y$ has to satisfy some additional constraints when $\state[y] \in \{ \Up, \Down, \Slack\}$. Finally, if  $W_y \in I_3$, then  $y$ is in one of three possible states -- $\Idle$, $\Upb$, $\Downb$ -- depending on whether or not  it has   up-marked or down-marked any incident edge. By Invariant~\ref{main:inv:up:down}, a node cannot simultaneously up-mark some incident edges and down-mark some other incident edges. Hence,   three cases  can occur when $W_y \in I_3$. (a) $M_{up}(y) = M_{down}(y) = \emptyset$. In this case $y$ is in $\Idle$ state. (b) $M_{down}(y) = \emptyset$ and $M_{up}(y) \neq \emptyset$. In this case $y$ is in $\Upb$ state. (c) $M_{up}(y) = \emptyset$ and $M_{down}(y) \neq \emptyset$. In this case $y$ is in $\Downb$ state.

\smallskip The  six states are precisely defined in Table~\ref{fig:different:states}.

\begin{lemma}
\label{lm:different:states}
If a node $y \in V$ is in one of the states described in Table~\ref{fig:different:states}, then its weight $W_y$ satisfies conditions (2) and (3) of Definition~\ref{def:structure}. 
\end{lemma}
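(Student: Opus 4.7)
The plan is to carry out a direct case analysis over the six rows of Table~\ref{fig:different:states}. For each state, I will read off the weight range and verify that it satisfies conditions~(2) and~(3) of Definition~\ref{def:structure}. The whole argument hinges on three numerical comparisons that follow from $f(\beta) = 1 - 3/\beta$ and $\beta \geq 5$: namely $f(\beta) < 1 - 2/\beta < 1 - 1/\beta < 1$. These comparisons are routine and there is no real obstacle; the main task is just to organise the cases cleanly.

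First I would dispose of condition~(3). Since the upper bound of the weight range in every row of Table~\ref{fig:different:states} is at most $1$, we trivially have $W_y < 1$ in all six states, regardless of whether $\ell(y) = K$ or $\ell(y) > K$. So condition~(3) is automatic.

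Next I would handle condition~(2), which is only required when $\ell(y) > K$. The key observation is that the only row of Table~\ref{fig:different:states} that forces $\ell(y) = K$ is the $\Slack$ row, so whenever $\ell(y) > K$ we must have $\state[y] \in \{\Up, \Down, \Idle, \Upb, \Downb\}$. For each of these five states I will check that the lower bound of the weight range is at least $f(\beta)$: in the $\Up$ row we get $W_y \geq 1 - 1/\beta \geq 1 - 3/\beta = f(\beta)$; in the $\Down$ row we get $W_y \geq f(\beta)$ directly; and in each of the $\Idle$, $\Upb$, $\Downb$ rows we get $W_y \geq 1 - 2/\beta \geq 1 - 3/\beta = f(\beta)$. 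Combined with the already-established upper bound $W_y < 1$, this proves condition~(2). The $\Slack$ row is then consistent because there $\ell(y) = K$ is enforced, so condition~(2) need not be checked and only condition~(3), already shown, applies. This completes the case analysis and hence the lemma.
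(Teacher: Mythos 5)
Your proof is correct and is essentially the same routine case analysis as the paper's: the paper splits on whether $W_y \geq f(\beta)$ or $W_y < f(\beta)$ and observes that the latter forces the $\Slack$ state (hence $\ell(y)=K$), while you split by state and observe that every non-$\Slack$ row has lower bound at least $f(\beta)$ — these are the same argument organised in two equivalent ways. No gap.
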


\begin{proof}
In every state, we have $0\leq W_y < 1$ (see Table~\ref{fig:different:states}). We  consider two mutually exclusive and exhaustive cases. (a) $f(\beta) \leq W_y < 1$. (b) $0 \leq W < f(\beta)$. In case (a), clearly the node-weight $W_y$ satisfies conditions (2), (3) of Definition~\ref{def:structure}. In case (b), the node must be in $\Slack$ state (see Table~\ref{fig:different:states}), and so we must have $\ell(y) = K$. Thus, the node satisfies conditions (2), (3)  of Definition~\ref{def:structure} even in case (b).
\end{proof}

Note that each of the intervals $I_1, I_2, I_3$ and $I_4$ defined above is of length at least $1/\beta$ (see equation~\ref{eq:beta:K:L}). On the other hand, for every edge $(u, v) \in E$ we have $w(u,v) \leq 1/\beta^K$, for $K$ is the minimum possible level in a nice-partition. Accordingly, a natural or induced activation of a node can change its weight by at most $1/\beta^K$. Note that $1/\beta^K$ is much smaller than $1/\beta$. This apparently simple observation has an important implication, namely, that a node must be activated at least $\beta^{K-1}$ times for its weight to cross the feasible range of any interval in $\{I_1, I_2, I_3, I_4\}$. As a corollary, if a node $y$ has, say, $W_y \in I_3$ just before getting activated, then the activation can only move $W_y$ to a {\em neighbouring} interval --  $I_2$ or $I_4$. But it is not possible to have $W_y \in I_3$ just before the activation, and $W_y \in I_1$ just after the activation. Throughout the rest of the paper, we will be using this observation each time we consider the effect of an activation on a node.
Next, we will briefly explain the motivation behind considering all these different states.

\smallskip
\noindent {\bf 1.} $\state[y] = \Up$. See row (1) in Table~\ref{fig:different:states}.

%\smallskip
\noindent A node $y$ is in  $\Up$ state when $1-1/\beta \leq W_y < 1$. In this state the node's weight is close to one. Hence, whenever its weight increases further due to an activation the node tries to up-mark some incident edges from $E_{\ell(y)}(y)$, in the hope that this would reduce the node's weight and  ensure that  $W_y$ never exceeds one. The node $y$ can up-mark an edge only if  the set  $E_{\ell(y)}(y)$ is nonempty. Hence, we require that  $E_{\ell(y)}(y) \neq \emptyset$. Further, to ensure that a up-marking does not violate Invariant~\ref{main:inv:up:down}, we require that $M_{down}(y) = \emptyset$.   

\smallskip
\noindent {\bf 2.} $\state[y] =  \Down$. See row (2) in Table~\ref{fig:different:states}.

%\smallskip
\noindent
A node $y$ is in  $\Down$ state when $f(\beta) \leq W_y < 1-2/\beta$. In this state the node's weight is close to the threshold $f(\beta)$. There are two cases to consider here, depending on the current level of the node.

\smallskip
\noindent
{\bf 2-a.} $\ell(y) > K$. In this case, whenever the value of $W_y$ decreases further due to an activation, the node tries to down-mark some incident edges from $E_{\ell(y)}(y) \setminus M_{down}(y)$, in the hope that this would increase the node's weight and  ensure that  $W_y$ does not drop below the threshold $f(\beta)$. The node $y$ can down-mark an edge only if the set  $E_{\ell(y)}(y) \setminus M_{down}(y)$ is nonempty. Hence, we require that  $E_{\ell(y)}(y) \setminus M_{down}(y) \neq \emptyset$.  Furthermore, in order  to ensure that a down-marking does not violate Invariant~\ref{main:inv:up:down}, we require that $M_{up}(y) = \emptyset$.  

\smallskip
\noindent {\bf 2-b.} $\ell(y) = K$. In this case, the node $y$ cannot down-mark any  incident edge $(x, y)$, for we must always have $\ell_y(x, y) \in [K, L]$. Thus, we get $M_{down}(y) = \emptyset$ in addition to the constraints specified in row (2) of Table~\ref{fig:different:states}. If an activation makes $W_y$  smaller than $f(\beta)$, then we simply  set $\state[y] \leftarrow \Slack$.

\smallskip
\noindent
We highlight one apparent discrepancy between the states $\Up$ and $\Down$. If a node $y$ is in $\Down$ state with $\ell(y) > K$, then it tries to down-mark some  edges from $E_{\ell(y)}(y) \setminus M_{down}(y)$ after an activation that reduces its weight. However, if the same node is in $\Up$ state, then it tries to up-mark some  edges from $E_{\ell(y)}(y)$  after an activation that increases its weight. This  discrepancy is due to the fact that  $E_{\ell(y)}(y) \cap M_{up}(y) = \emptyset$, as  every edge $(x, y ) \in M_{up}(y)$ has $\ell(x, y) \geq \ell_y(x, y) = \ell(y) +1$. In other words, an edge up-marked by $y$ can never belong to the set $E_{\ell(y)}(y)$, and hence $E_{\ell(y)}(y) \setminus M_{up}(y) = E_{\ell(y)}(y)$. In contrast, an edge $(x, y) \in M_{down}(y)$  belongs to the set $E_{\ell(y)}(x, y)$ if $\ell_x(x, y) = \ell(y)$.

\smallskip
\noindent {\bf 3.} $\state[y] = \text{{\sc Slack}}$. See row (3) in Table~\ref{fig:different:states}.

%\smallskip
\noindent
A node $y$ is in  $\text{{\sc Slack}}$ state when $0 \leq W_y < f(\beta)$. In order to ensure condition (2) of Definition~\ref{def:structure}, we require that the node  be at level $K$. Since $K$ is the minimum possible level, there is no need for the node to prepare for moving down to a lower level in future. Hence, we require that $M_{down}(y) = \emptyset$. Further, the  node's weight  is currently so small that it will take quite some time before the node has to prepare for moving up to a higher level. Hence, we  require that $M_{up}(y) = \emptyset$.

\smallskip
\noindent {\bf 4.} $\state[y] = \Idle$. See row (4) in Table~\ref{fig:different:states}.

%\smallskip
\noindent
 A node $y$ is in  $\Idle$ state when $1-2/\beta \leq W_y < 1-1/\beta$ and $M_{up}(y) = M_{down}(y) = \emptyset$. In this state the node's weight is neither too large nor too small, and the node does not have any up-marked or down-marked incident edges. Intuitively, the node need not worry even if its weight changes due to an activation in this state. In other words, when a node gets activated in $\Idle$ state, it does not up-mark, down-mark or un-mark any of its incident edges. After a sufficiently large number of activations when the node's weight drops below (resp. rises above) the threshold $1-2/\beta$ (resp. $1-1/\beta$), it  switches to the state $\Down$ (resp. $\Up$). 

\smallskip
\noindent {\bf 5.}  $\state[y] = \Upb$. See row (5) in Table~\ref{fig:different:states}. The term ``$\Upb$'' stands for ``{\sc Up-Backtrack}''.

% \smallskip
 \noindent
 A node $y$ is in  $\Upb$ state when $1-2/\beta \leq W_y < 1-1/\beta$, $M_{up}(y) \neq \emptyset$ and $M_{down}(y) = \emptyset$. Intuitively, this state of the node captures the following scenario. Some time back the node $y$ was in $\Up$ state with $1-1/\beta \leq W_y < 1$,  $M_{up}(y) \neq \emptyset$ and $M_{down}(y) = \emptyset$. From that point onward, the node encountered a large number of activations that kept on reducing its weight. Eventually, the value of $W_y$ became smaller than $1-1/\beta$ and the node entered the state $\Upb$. If the node keeps  getting activated in this manner, then in near future $W_y$ will become smaller than $1-2/\beta$ and the node $y$ will have to enter the state $\Down$. At that time we must have $M_{up}(y) = \emptyset$. In other words, the node $y$ has to ensure that $M_{up}(y) = \emptyset$ before its weight drops below the threshold $1-2/\beta$. Thus, whenever $\state[y] = \Upb$ and the node-weight $W_y$ decreases due to an activation, the node $y$ un-marks some edges from $M_{up}(y)$.

\smallskip
\noindent {\bf 6.} $\state[y] = \Downb$. See row (6) in Table~\ref{fig:different:states}. The term ``$\Downb$'' stands for ``{\sc Down-Backtrack}''. 

\smallskip
\noindent
A node $y$ is in  $\Downb$ state when $1-2/\beta \leq W_y < 1-1/\beta$, $M_{down}(y) \neq \emptyset$ and $M_{up}(y) = \emptyset$. Intuitively, this state of the node captures the following scenario. Some time back the node $y$ was in $\Down$ state with $f(\beta) \leq W_y < 1-2/\beta$,  $M_{down}(y) \neq \emptyset$ and $M_{up}(y) = \emptyset$. From that point onward, the node encountered a large number of activations that kept on increasing its weight. Eventually, the value of $W_y$ became greater than $1-2/\beta$ and the node entered the state $\Downb$. If the node keeps  getting activated in this manner, then in near future $W_y$ will become greater than $1-1/\beta$ and it will have to enter the state $\Up$. At that time we must have $M_{down}(y) = \emptyset$. In other words, the node $y$ has to ensure that $M_{down}(y) = \emptyset$ before its weight increases beyond the threshold $1-1/\beta$. Thus, whenever $\state[y] = \Downb$ and  $W_y$ increases due to an activation, the node $y$ un-marks some edges from $M_{down}(y)$.

\subsection{Dirty nodes.}
\label{sub:sec:dirty:node}
Our algorithm   maintains a bit $D[y] \in \{0, 1\}$ associated with each node $y \in V$. We say that the node $y$ is {\em dirty} if $D[y] = 1$ and {\em clean} otherwise. Intuitively, the node $y$ is dirty when it is unsatisfied about its current condition and it wants to up-mark, down-mark or un-mark some of its incident edges. Once a dirty node is done with up-marking, down-marking or un-marking the relevant edges, it becomes clean again. 

In our algorithm, a node becomes dirty only after it encounters a natural or induced activation. The converse of this statement, however, is not true.  There may be times when a node remains clean even after getting activated, and this will be crucial in bounding the worst-case update time of our algorithm. Whether or not a node will become dirty due to an activation depends on: (1) the  state of the node,  (2) the type of the activation under consideration (whether it increases or decreases the node-weight), and (3) the node's current level. We  have three rules that determine when a node becomes dirty.

\begin{Rule}
\label{rule:dirty:up}
A node $y$ with $\state[y] \in \{\Up, \Downb\}$ becomes dirty after  an activation that increases its weight. In contrast, such a node {\em does not} become dirty after an activation that decreases its weight.
\end{Rule}

\noindent {\em Justification for Rule~\ref{rule:dirty:up}.}

\smallskip
\noindent {\em Case 1.} $\state[y] = \Up$. Here, we have  $1-1/\beta \leq W_y < 1$ and $M_{down}(y) = \emptyset$. If an activation increases the value of $W_y$, then $y$ needs to up-mark some  edges from $E_{\ell(y)}(y)$, in the hope  that  $W_y$ remains smaller than $1$ (see the discussion in Section~\ref{sub:sec:node:state}). Hence,  the node  becomes dirty. In contrast, if an activation reduces the value of $W_y$, then $y$ need not  up-mark, down-mark or un-mark any of its incident edges. Due to this inaction, if it so happens that  $1-2/\beta \leq W_y < 1-1/\beta$ after the activation, then the node simply switches   to state $\Upb$ or $\Idle$ depending on whether or not $M_{up}(y) \neq \emptyset$.

\smallskip
\noindent {\em Case 2.} $\state[y] = \Downb$. Here, we have $1-2/\beta \leq W_y < 1-1/\beta$, $M_{down}(y) \neq \emptyset$ and $M_{up}(y) = \emptyset$. Such a node must un-mark all its incident edges before  its weight rises past the threshold $1-1/\beta$ (see the discussion in Section~\ref{sub:sec:node:state}). Hence, whenever its weight increases due to an activation and   $\state[y] = \Downb$, the node $y$ becomes dirty and un-marks some edges from $M_{down}(y)$. In contrast, if an activation reduces its weight, then the node $y$ need not  up-mark, down-mark or un-mark any of its incident edges. Due to this inaction, if it so happens that $f(\beta) \leq W_y < 1-2/\beta$ after the activation, then we  set  $\state[y] \leftarrow \Down$. At this point, if we have $E_{\ell(y)}(y) \setminus M_{down}(y) = \emptyset$ and $\ell(y) > K$, then the node $y$ moves  to a lower level while being in $\Down$ state (see Case 2-b in Section~\ref{sub:sec:update:status}).

\begin{Rule}
\label{rule:dirty:down}
Consider a  node $y$ such that either (1) $\state[y] = \Down$ and $\ell(y) > K$, or (2) $\state[y] = \Upb$. This node  becomes dirty after  an activation that decreases its weight. In contrast, the node  does not become dirty after an activation that increases its weight.
\end{Rule}

\noindent {\em Justification for Rule~\ref{rule:dirty:down}.}

\smallskip
\noindent {\em Case 1.} $\state[y] = \Down$ and $\ell(y) > K$. Thus, we have  $f(\beta) \leq W_y < 1-2/\beta$ and $M_{up}(y) = \emptyset$. If an activation decreases its weight, then $y$ needs to down-mark some  edges from $E_{\ell(y)}(y) \setminus M_{down}(y)$, in the hope that $W_y$ does not become smaller than $f(\beta)$ (see the discussion in Section~\ref{sub:sec:node:state}). Hence,  the node $y$ becomes dirty. In contrast, if an activation increases its weight, then $y$ need not  up-mark, down-mark or un-mark any of its incident edges. Due to this inaction, if it so happens that  $1-2/\beta \leq W_y < 1-1/\beta$ after the activation, then the node simply switches   to state $\Downb$ or $\Idle$ depending on whether or not $M_{down}(y) \neq \emptyset$.

\smallskip
\noindent {\em Case 2.} $\state[y] = \Upb$. Thus, we have $1-2/\beta \leq W_y < 1-1/\beta$, $M_{up}(y) \neq \emptyset$ and $M_{down}(y) = \emptyset$. Such a node must un-mark all its incident edges before  its weight drops below the threshold $1-2/\beta$ (see the discussion in Section~\ref{sub:sec:node:state}).  Hence, whenever its weight decreases due to an activation and   $\state[y] = \Upb$, the node $y$ becomes dirty and un-marks some edges from $M_{up}(y)$. In contrast, if an activation increases its weight, then $y$ need not up-mark, down-mark or un-mark any of its incident edges. Due to this inaction, if it so happens  that $1- 1/\beta \leq W_y < 1$ after the activation, then we set  $\state[y] \leftarrow \Up$. At this point, if we have $E_{\ell(y)}(y) = \emptyset$, then the node $y$ moves to a higher level while being in $\Up$ state (see Case 2-a in Section~\ref{sub:sec:update:status}).

\begin{Rule}
\label{rule:dirty:slack}
A  node $y$ with either (1) $\state[y] \in \{\Slack, \Idle\}$ or (2) \{$\state[y] = \Down$ and $\ell(y) = K$\} never becomes dirty after an activation.
\end{Rule}

\noindent {\em Justification for Rule~\ref{rule:dirty:slack}.}

\smallskip
\noindent {\em Case 1.} $\state[y] = \Slack$. Here, we have  $0 \leq W_y < f(\beta)$, $M_{up}(y) = M_{down}(y) = \emptyset$ and $\ell(y) = K$. When such a node gets activated,  it need not  up-mark or down-mark  any of its incident edges. Due to this inaction,  if it so happens that  $f(\beta) \leq W_y < 1-2/\beta$ after the activation, then we set $\state[y] \leftarrow \Down$.

\smallskip
\noindent {\em Case 2.} $\state[y] = \Idle$. Here, we have $1-2/\beta \leq W_y < 1-1/\beta$ and $M_{up}(y) = M_{down}(y) =  \emptyset$. When such a node gets activated, it need not  up-mark or down-mark  any of its incident edges. Due to this inaction, if it so happens that $1-1/\beta \leq W_y < 1$ after the activation, then we set $\state[y] \leftarrow \Up$. At this point, if we have $E_{\ell(y)}(y) = \emptyset$, then the node $y$ moves to a higher level while being in $\Up$ state (see Case 2-a in Section~\ref{sub:sec:update:status}). 
In contrast, if it so happens that $f(\beta) \leq W_y < 1-2/\beta$ after the activation, then we set $\state[y] \leftarrow \Down$.  At this point, if we have $E_{\ell(y)}(y) \setminus M_{down}(y) = \emptyset$ and $\ell(y) > K$, then the node $y$ moves  to a lower level while being in $\Down$ state (see Case 2-b in Section~\ref{sub:sec:update:status}).

\smallskip
\noindent {\em Case 3.} $\state[y] = \Down$ and $\ell(y) = K$. Thus, we have $f(\beta) \leq W_y < 1 - 2/\beta$ and $M_{up}(y) = \emptyset$. Since $\ell(y) = K$ and $\ell_y(x, y) \in [K, L]$ for every edge $(x, y) \in E$, we also have $M_{down}(y) = \emptyset$. When such a node gets activated, it need not up-mark or down-mark any of its incident edges. Due to this inaction, if we have $0 \leq W_y < f(\beta)$ after the activation, then we set $\state[y] \leftarrow \Slack$. In contrast, if  $1-2/\beta \leq W_y < 1 - 1/\beta$ after the activation, then we set $\state[y] \leftarrow \Idle$.

\begin{corollary}
\label{cor:activation}
If an activation of a node $y$ makes it dirty, then the state of the node remains the same just before and just after the activation (see Section~\ref{sub:sec:recap}).
\end{corollary}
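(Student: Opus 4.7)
I plan to prove the corollary by case analysis over the four situations in which Rules~\ref{rule:dirty:up}--\ref{rule:dirty:down} declare $y$ dirty: (i) $\state[y] = \Up$ and (ii) $\state[y] = \Downb$ with a weight-increasing activation, and (iii) $\state[y] = \Down$ with $\ell(y) > K$ and (iv) $\state[y] = \Upb$ with a weight-decreasing activation. In each case I will verify that every constraint of Table~\ref{fig:different:states} that defined $\state[y]$ just before the activation continues to hold just afterwards, so $\state[y]$ need not be changed.

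Two facts already noted in the text power the analysis. First, a single activation changes $W_y$ by at most $\beta^{-K}$, which is far smaller than $\beta^{-1}$, the minimum width of any of the intervals $I_1, \ldots, I_4$; hence the weight can drift at most into an adjacent interval, never skip one. Second, an activation of $y$ never alters a shadow-level $\ell_y(x, y)$ (those change only when $y$ is subsequently processed as a dirty node), so $M_{up}(y)$ and $M_{down}(y)$ can grow only by a natural insertion (and, by convention, a newly inserted edge is unmarked by $y$) and shrink only by a natural deletion of an already-marked incident edge.

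With these tools in hand, most of the work is immediate. In Cases (i) and (ii), weight-monotonicity preserves the relevant lower bounds $W_y \geq 1 - 1/\beta$ and $W_y \geq 1 - 2/\beta$ respectively; a weight-increasing activation can neither create new marks at $y$ nor delete any existing marked incident edge, so the constraints ``$M_{down}(y) = \emptyset$'' in (i) and ``$M_{up}(y) = \emptyset$, $M_{down}(y) \neq \emptyset$'' in (ii) persist. The symmetric argument settles Cases (iii) and (iv) for weight-decreasing activations, where ``$M_{up}(y) = \emptyset$'' and ``$M_{down}(y) = \emptyset$'' respectively persist because the activation (a natural deletion or an induced increase of a neighbour's shadow-level) cannot add new marks at $y$.

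The step I expect to require the most care is the non-emptiness of $E_{\ell(y)}(y)$ in Case (i) and of $E_{\ell(y)}(y) \setminus M_{down}(y)$ in Case (iii). Case (i) is straightforward since a weight-increasing activation can only enlarge $E_{\ell(y)}(y)$: a natural insertion adds the new edge (possibly at level $\ell(y)$), while an induced decrease of a neighbour's shadow-level can only move an edge into, not out of, $E_{\ell(y)}(y)$. For Case (iii), an induced increase cannot remove any edge from $E_{\ell(y)}(y) \setminus M_{down}(y)$, while a natural deletion removes at most one; I would close this last subcase by appealing to the invariant maintained during $\Down$-processing that keeps this set with enough slack that the loss of a single edge cannot empty it, so that the pre-activation value of $\state[y]$ is still consistent with the post-activation configuration.
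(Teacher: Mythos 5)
Your proposal misreads what the corollary asserts, and as a result attempts to prove a much stronger (and, as stated, false) claim by an argument that has genuine gaps.

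The corollary is purely a statement about what the \emph{algorithm does}: when an activation makes $y$ dirty, the algorithm never assigns a new value to $\state[y]$. The paper's proof is a one-line inspection argument: in the justifications of Rules~\ref{rule:dirty:up}--\ref{rule:dirty:slack}, a state transition is performed only in the ``inaction'' branches, i.e.\ precisely when the node is \emph{not} made dirty; and in UPDATE-STATUS the only branch that can assign $\state[y]$ is Case~2, which requires $D[y]=0$. When $D[y]=1$, UPDATE-STATUS (Case~1) either terminates without touching the state or HALTS. That is the whole proof. You instead set out to show that every constraint of Table~\ref{fig:different:states} for the pre-activation state continues to hold after the activation --- i.e.\ that $y$ remains \emph{fit}. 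That is not what the corollary says; it is essentially (a fragment of) Theorem~\ref{th:no:failure}, which is proved much later and by far more work.

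Even taken on its own terms, your argument does not go through. In Case~(i), $\state[y]=\Up$ requires $1-1/\beta \le W_y < 1$; you observe that a weight-increasing activation preserves the lower bound $W_y \ge 1-1/\beta$, but that bound was never in danger --- the \emph{upper} bound $W_y < 1$ is what might fail, and you do not address it. Your own preliminary remark (``the weight can drift at most into an adjacent interval, never skip one'') concedes that $W_y$ may leave $I_4$, which would make $y$ unfit in $\Up$; the same gap appears symmetrically in Cases~(ii)--(iv). Finally, for the $E_{\ell(y)}(y)\setminus M_{down}(y)\neq\emptyset$ constraint in Case~(iii), you appeal to ``the invariant maintained during $\Down$-processing that keeps this set with enough slack that the loss of a single edge cannot empty it.'' No such invariant is stated or maintained anywhere in the paper; indeed the paper explicitly has to handle the situation $E_{\ell(y)}(y)\setminus M_{down}(y)=\emptyset$ in UPDATE-STATUS Case~2-b, so it certainly can become empty. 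The correct route here is simply the inspection argument above, not a fitness-preservation argument.
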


\begin{proof}
While justifying Rules~\ref{rule:dirty:up}~--~\ref{rule:dirty:slack}, whenever we changed the state of the node $y$ due to an activation, we ensured that the node did not become dirty.
\end{proof}

\subsection{Data structures.}
\label{sub:sec:data:structures}
In our dynamic algorithm, every node $y \in V$ maintains the following data structures.

\smallskip
\noindent 1. Its weight $W_y$, level $\ell(y)$, and state $\text{{\sc State}}[y] \in \{ \text{{\sc  Up, Down, Slack, Idle, Up-B, Down-B}}\}$.

\smallskip
\noindent 2. The sets  $M_{up}(y), M_{down}(y)$ as balanced search trees. 

\smallskip
\noindent 3. A  bit $D[y] \in \{0, 1\}$ to indicate if the node $y$ is {\em dirty}. 

\smallskip
\noindent 4. For every level $i \in \{0, \ldots, L\}$, the set of edges $E_i(y)$ as a balanced search tree.

\smallskip
\noindent
Furthermore, every edge $(x, y) \in E$ maintains the values of its weight $w(x, y)$ and level $\ell(x, y)$.

\paragraph{Remark about maintaining the shadow-levels.}
Note that we do not {\em explicitly} maintain the shadow-level $\ell_x(x, y)$ of a node $y \in V$ with respect to an  edge $(x, y) \in E$. This is due to  the following reason. 

For the sake of contradiction, suppose that our algorithm in fact maintains the values of the shadow-levels $\ell_y(x, y)$. Consider a scenario where the node $y$ has $\state[y] = \Up$, $\ell(y) = i$, and the value of $W_y$ is very close to one. Next, suppose that an activation increases the value of $W_y$, and the node up-marks one or more edges from the set $E_{i}(y)$ to ensure that the value of $W_y$ remains smaller than one. Since $\ell(x, y) = i + 1$ for every edge $(x, y) \in M_{up}(y)$, all the newly up-marked edges  get deleted from the set $E_{i}(y)$ and added to the set $E_{i+1}(y)$. At this point, we might end up in a situation where $E_{i}(y) = \emptyset$, which violates a constraint  of row (1) in Table~\ref{fig:different:states}. Our algorithm  deals with this issue by moving the node $y$ up to level $(i+1)$, i.e., by setting $\ell(y) \leftarrow (i+1)$. Since $E_i(y) = \emptyset$, this does not affect the weight of any edge. However, for every edge $(x, y) \in E$ with $\ell(x,y) > i+1$, the shadow-level $\ell_y(x,y)$  changes from $i$ to $(i+1)$. Since each edge $(x, y) \in E$ with $\ell(x, y) > i+1$ has weight at most $\beta^{-(i+2)}$, and since $W_y < 1$, there can be $\beta^{i+2}-1$ many such edges. Accordingly,  the node $y$ might be forced to change the values of the shadow-levels $\ell_y(x, y)$ for $O(\beta^{i+2})$ many edges $(x, y)$. The worst-case update time  then becomes $O(\beta^{i+2})$, which is polynomial in $n$ for large values of $i$. 

We   avoid this problem by  giving up on explicitly maintaining the values of the shadow-levels $\ell_y(x, y)$. Still  we can determine the value of $\ell_x(x, y)$ in $O(\log n)$ time from the  data structures that  {\em are} in fact maintained by us. Specifically, we know that if $(x, y) \in M_{up}(y)$, then $\ell_y(x, y) = \ell(y) + 1$. Else if $(x, y) \in M_{down}(y)$, then $\ell_y(x, y) = \ell(y) - 1$. Finally, else if $(u, y) \notin M_{up}(y) \cup M_{down}(y)$, then $\ell_y(x, y) = \ell(y)$.

For ease of exposition,     we  nevertheless use the notation $\ell_y(x,y)$ while describing  our algorithm in subsequent sections.  Whenever we do this, the reader should keep it in mind that we are implicitly computing  $\ell_y(x, y)$ as per the above procedure.

\section{Some basic subroutines}
\label{sec:subroutines}

\subsection{The subroutine UPDATE-STATUS$(y)$.}
\label{sub:sec:update:status}

This subroutine is called each time a node $y$ experiences a natural or an induced activation. This  tries to ensure, by changing the state and level of $y$ if necessary, that $y$ satisfies the constraints specified in Table~\ref{fig:different:states}. If the subroutine fails to ensure this condition, then our algorithm HALTS. During the analysis of our algorithm, we will prove that it never HALTS due to a call to UPDATE-STATUS$(y)$. This implies that every node satisfies the constraints in Table~\ref{fig:different:states}, and hence  Lemma~\ref{lm:different:states} guarantees that conditions (2) and (3) of Definition~\ref{def:structure} continue to remain satisfied all the time.

We say that a node is {\em fit} in a  state $X \in \{ \Up, \Down, \Slack, \Idle, \Upb, \Downb \}$ if it satisfies all the constraints for state $X$ as specified in Table~\ref{fig:different:states}, and {\em unfit} otherwise.  
If $D[y] = 1$, then our algorithm HALTS if $y$ is unfit in its {\em current} state. In contrast, if $D[y] = 0$, then our algorithm HALTS if $y$ is unfit in {\em every} state, albeit with one caveat: If the node is unfit in either state $\Up$ or state $\Down$, then we first try to make it fit in that state by changing its level $\ell(y)$.  
Hence, there is a sharp distinction between the treatments received by the   clean nodes on the one hand and the dirty nodes on the other.  Specifically, the state of a node $y$ can change during to a call to UPDATE-STATUS$(y)$ only if $y$ is clean at the beginning of the call. This distinction  comes from Corollary~\ref{cor:activation}, which requires that a node does not change its state if it becomes dirty. We now describe the subroutine in  details.

\paragraph{Case 1.} $D[y] = 1$. The node $y$ is dirty.

\noindent If $y$ is fit in its current state, then we terminate the subroutine. Otherwise our algorithm HALTS.

\paragraph{Case 2.} $D[y] = 0$.  The node $y$ is clean.

\noindent If  we can find some state  $X \in \{\Up, \Down, \Slack, \Idle, \Upb,\Downb\}$ in which  $y$ is fit, then we set $\state[y] \leftarrow X$ and terminate the subroutine. Else if the node $y$ is unfit in every state, then  we consider the  sub-cases 2-a, 2-b and 2-c.

\paragraph{Case 2-a.}  The node  is unfit in state $\Up$ only due to the last constraint in row (1) of Table~\ref{fig:different:states}. Thus, we have $1-1/\beta \leq W_y < 1$, $M_{down}(y) = \emptyset$ and $E_{\ell(y)}(y) = \emptyset$. Let $i \leftarrow \ell(y)$ be the current level of $y$. 
We find the minimum level $j > i$ where $E_j(y) \neq \emptyset$. Such a level $j$ must exist since $W_y > 0$. We move the node $y$ up to level $j$ by setting $\ell(y) \leftarrow j$. This does not change the weight of any edge. Furthermore, when the node was in level $i$, we had $\ell_y(x, y) \leq \ell(y) + 1 \leq i+1 \leq j$ for every edge $(x, y) \in E$ incident on $y$ (see Invariant~\ref{inv:shadow:level}). Hence, after the node moves up to level $j$, we have $\ell_y(x, y) =  j = \ell(y)$ for every edge $(x, y) \in E$. In other words, the node $y$ is not supposed to have any up-marked edges incident on it just after moving to level $j$. Accordingly, we set $M_{up}(y) \leftarrow \emptyset$. Then we terminate the subroutine. 

\paragraph{Case 2-b.}  The node  is unfit in state $\Down$ only due to the last constraint in row (2) of Table~\ref{fig:different:states}. Thus, we have $f(\beta) \leq W_y < 1-2/\beta$, $M_{up}(y) = \emptyset$, $E_{\ell(y)}(y) \setminus M_{down}(y) = \emptyset$ and $\ell(y) > K$. Let $i \leftarrow \ell(y)$ be the current level of $y$. 
We first move the node down to level $i-1$ by setting $\ell(y) \leftarrow i-1$. We claim that this does not change the level (and weight) of any edge. To see why the claim is true, consider any edge $(x, y) \in E$ incident on $y$. Since $M_{up}(y) = \emptyset$, we must have $\ell_y(x, y) \leq i$ just before the node moves down to level $(i-1)$. If $\ell_x(x, y) \geq i$, then the value of $\ell(x, y)$ is determined by the other endpoint $x$ and the level of such an edge does not change as $y$ moves down to level $(i-1)$. In contrast, if $\ell_x(x, y) < i$, then we have $(x, y) \in M_{down}(y)$: for otherwise  the edge $(x, y)$ will belong to the  set $E_i(y) \setminus M_{down}(y)$ which we have assumed to be empty. The level of such an edge remains equal to $(i-1)$ as the node $y$ moves down from level $i$ to level $(i-1)$. This concludes the proof of the claim that the edge-weights do not change as  $y$ moves down from level $i$ to level $(i-1)$. 
Next, consider any edge $(x, y)$ that was down-marked when the node $y$ was at level $i$. At that time, we had $\ell_y(x, y) = i-1$. Hence, after the node moves down to level $i-1$, we get $\ell_y(x, y) = i-1 = \ell(y)$. Thus, the node cannot have any down-marked edge incident on it just after moving down to level $i-1$. Accordingly, we set $M_{down}(y) \leftarrow \emptyset$. 
At this point, if we find that $E_{i-1}(y) = \emptyset$, then we move the node further down to the lowest level $K$, by setting $\ell(y) \leftarrow K$. This does not change the level and weight of any edge in the graph.
Finally, we terminate the subroutine. 

\paragraph{Case 2-c.} In every scenario other than  2-a and  2-b  described above, our algorithm HALTS.

\smallskip
\noindent{\bf A note on the space complexity.} In cases 2-a and 2-b of the above procedure, there is a step where we set $M_{up}(y) \leftarrow \emptyset$ and $M_{down}(y) \leftarrow \emptyset$ respectively. It is essential to execute this step in $O(\text{poly} \log n)$  time: otherwise we cannot claim that the update time of our algorithm is $O(\text{poly} \log n)$ in the worst-case. Unfortunately for us,  there can be $\Omega(\beta^{\ell(y)})$ many edges in the set $M_{up}(y)$ or $M_{down}(y)$. Hence, it will take $\Omega(\beta^{\ell(y)})$ time to empty that  set if we have to delete all those edges from the corresponding balanced search tree. Note that  $\beta^{\ell(y)} = \Omega(n)$ for large  $\ell(y)$. 

To address this concern, we  maintain two pointers $root[M_{up}(y)]$ and $root[M_{down}(y)]$ for each node $y \in V$. They respectively point to the root of the balanced search tree for $M_{up}(y)$ and $M_{down}(y)$.  When we want to set $M_{up}(y) \leftarrow \emptyset$ or $M_{down}(y) \leftarrow \emptyset$, we respectively set $root[M_{up}(y)] \leftarrow \text{NULL}$ or $root[M_{down}(y)] \leftarrow \text{NULL}$.  This takes only constant time. The downside of this approach is that the algorithm now uses up a lot of {\em junk space} in  memory: This space is occupied by the balanced search trees that were {\em emptied}  in the past.  As a result, the space complexity of the algorithm becomes $O(t  \text{ poly}\log n)$ for handling a sequence of $t$ edge insertions/deletions starting from an empty graph. This is due to the fact that our algorithm will be shown to have a worst-case update time of $O(\text{poly} \log n)$. Hence, we can upper bound the total time taken to handle these edge insertions/deletions by $O(t \text{ poly}\log n)$,  and this, in turn, gives a trivial upper bound on the amount of {\em junk space} used up in the memory. 

A standard way to bring down the space complexity is to run a {\em clean-up} algorithm {\em in the background}. Each time an edge is inserted into or deleted from the graph, we visit $O(\text{poly}\log n)$ memory cells that are currently junk and {\em free them up}. Thus, the worst case update time of the clean-up algorithm is also $O(\text{poly}\log n)$, and this  increases the overall update time of our scheme by only a $O(\text{poly} \log n)$ factor.  The  size of all sets $M_{up}(.)$ and $M_{down}(.)$ that exist at a given point in time is $O(m)$. Hence, this clean-up algorithm is at most $O(m)$ space ``behind'', i.e., the additional space requirement for junk space is $O(m)$.
  For ease of exposition, from this point onward we will simply assume that we can empty a balanced search tree in $O(1)$ time.

\begin{lemma}
\label{lm:time:update:status}
The subroutine {\em UPDATE-STATUS}$(y)$ takes  $O(\log n)$ time.
\end{lemma}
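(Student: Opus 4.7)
\textbf{Proof plan for Lemma~\ref{lm:time:update:status}.} The plan is to walk through each branch of the subroutine in turn and argue that the total work is dominated by a constant number of queries to the data structures listed in Section~\ref{sub:sec:data:structures}. Since those data structures are balanced binary search trees keyed over at most $n$ items (or over the $L+1 = O(\log n)$ levels), each individual lookup/insertion/deletion costs $O(\log n)$, and a constant number of such operations will suffice.

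First I would observe that determining whether $y$ is \emph{fit} in a particular state $X$ requires only: (i) comparing the numerical value of $W_y$ against two constants, (ii) testing whether $M_{up}(y)$ and $M_{down}(y)$ are empty (done in $O(1)$ via the root pointers mentioned right after Case 2-c), and, in the case of $\Up$ or $\Down$, (iii) testing whether $E_{\ell(y)}(y)$ (respectively $E_{\ell(y)}(y)\setminus M_{down}(y)$) is empty. For (iii), emptiness of $E_{\ell(y)}(y)$ is again a root-pointer check; for the non-emptiness of $E_{\ell(y)}(y)\setminus M_{down}(y)$, one can, for instance, take the smallest-keyed element of $E_{\ell(y)}(y)$ and test membership in $M_{down}(y)$, iterating through the at most a few relevant candidates — a constant number of $O(\log n)$ search-tree operations. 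Thus, a single fitness test runs in $O(\log n)$ time, and Case 1 as well as the opening test of Case 2 (trying each of the six states) takes $O(\log n)$ overall.

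Next I would bound Cases 2-a and 2-b, which are the only branches that actually modify $y$. In Case 2-a we scan the $O(\log n)$ pointers $root[E_j(y)]$ for $j=i+1,\ldots,L$ to find the smallest $j$ with $E_j(y)\neq\emptyset$; this is $O(\log n)$ time. The reassignments $\ell(y)\leftarrow j$ and $M_{up}(y)\leftarrow\emptyset$ are $O(1)$ each (the latter using the root-pointer trick). Case 2-b is symmetric: updating $\ell(y)$ to $i-1$, emptying $M_{down}(y)$, and the possible further drop to level $K$ after checking $E_{i-1}(y)=\emptyset$ (one root-pointer test) are all $O(1)$ operations. Case 2-c simply halts.

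The only potential obstacle — and the one the preceding remark in Section~\ref{sub:sec:update:status} is designed to remove — is the clearing of $M_{up}(y)$ or $M_{down}(y)$, which could in principle contain $\Omega(\beta^{\ell(y)})$ edges. I would therefore invoke the pointer-based clearing trick together with the background clean-up routine to conclude that each such emptying really is $O(1)$ amortised into the worst-case bound. Summing over all steps, the total cost of a single invocation of UPDATE-STATUS$(y)$ is $O(\log n)$, which is the stated bound.
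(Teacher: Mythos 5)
Your proof is correct and follows essentially the same route as the paper's: the dominant cost is the scan over the $O(\log n)$ levels in Case 2-a, while all other operations (fitness tests, level reassignment, and emptying $M_{up}(y)$ or $M_{down}(y)$ via the root-pointer trick) take constant or $O(\log n)$ time. Your accounting is merely more detailed than the paper's, which simply asserts that Case 1 and Cases 2-b/2-c are $O(1)$ and that Case 2-a costs $O(L-K+1)=O(\log n)$.
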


\begin{proof}
Case 1 can clearly be implemented in $O(1)$ time. In case 2-a,  we have to find the minimum level $j > i$ where $E_j(y) \neq \emptyset$. This operation takes time proportional to the number of levels, which is $L-K+1 = O(\log n)$. Everything else takes $O(1)$ time. Finally, case 2-b and case 2-c also take $O(1)$ time. 
\end{proof}

\subsection{The subroutine PIVOT-UP$(v, (u,v))$.}
\label{sub:sec:pivot:up}

This is described in Figure~\ref{new:fig:pivot:up}. This subroutine is called when the node $v$ is dirty and it wants to increase its shadow-level $\ell_v(u, v)$ with respect to the edge $(u, v)$.  There are two situations under which such an event can take place: (1)  $\state[v] = \Up$ and  $v$ wants to up-mark the edge $(u, v)$, and (2) $\state[v] =  \Downb$ and  $v$ wants to un-mark the edge $(u, v)$. The subroutine PIVOT-UP$(v, (u, v))$ updates the relevant data structures, decides whether the node $u$ should become dirty because of this event, and returns {\sc True} if the event changes the weight of the edge $(u,v)$ and {\sc False} otherwise.  Thus, if the subroutine returns {\sc True}, then this amounts to an induced activation of the node $u$.

\paragraph{The subroutine MOVE-UP$(v, (u,v))$. } Step (01) in Figure~\ref{new:fig:pivot:up}  calls another subroutine MOVE-UP$(v, (u, v))$. This subroutine (described in  Figure~\ref{new:fig:move:up})  updates the relevant data structures as the value of $\ell_v(u,v)$ increases by one, and returns {\sc True} if the weight $w(u, v)$  gets changed and {\sc False} otherwise.
To see an example where  MOVE-UP$(v, (u, v))$ returns {\sc False}, consider a situation where $\state[v] = \Downb$, $\ell(v) = i$, $\ell_v(u, v) = i-1$, and $\ell_u(u, v) = \ell(u) = i$. In this instance, even after the node $v$ increases the value of $\ell_v(u, v)$ by un-marking the edge  $(u, v)$, the weight $w(u, v)$ does not change.

The subroutine MOVE-UP$(v, (u, v))$ ensures that Invariant~\ref{main:inv:shadow:level} remains satisfied. Specifically, after   the value of $\ell_v(u, v)$  increases  we might have $\ell_v(u, v) > \ell(u)$, and then we must ensure that the edge $(u, v) \notin M_{up}(u) \cup M_{down}(u)$: otherwise Invariant~\ref{main:inv:shadow:level} will be violated (set $y = u$ and $x = v$ in Invariant~\ref{main:inv:shadow:level}). If we end up in this situation, then the subroutine MOVE-UP$(v, (u,v))$ removes  the edge  from $M_{up}(u) \cup M_{down}(u)$.

%The following property is satisfied.
%\begin{property}
%\label{pr:move:up}
%A call to MOVE-UP$(v, (u, v))$ does not insert any edge into the sets $M_{down}(u)$ and $M_{up}(u)$. 
%\end{property}

\begin{figure}[htbp]
\centerline{\framebox{
\begin{minipage}{5.5in}
\begin{tabbing}
01.   \= $Y \leftarrow \text{MOVE-UP}(v, (u, v))$  \qquad // See Figure~\ref{new:fig:move:up}. \\
02. \> {\sc If} \= $Y = \text{{\sc True}}$ and \big\{either  $\state[u] = \Upb$ or \\
 \>  \> $\left(\state[u] = \Down \text{ and } \ell(u) > K\right)$\big\}  \\  
03. \> \> \ \ \ \ \ \ \ \ \= $D[u] \leftarrow 1$ \\ 
04. \>  UPDATE-STATUS($u$) \\
05. \>  RETURN $Y$. 
\end{tabbing}
\end{minipage}
}}
\caption{\label{new:fig:pivot:up} PIVOT-UP($v, (u, v)$).}
\end{figure}

\begin{figure}[htbp]
\centerline{\framebox{
\begin{minipage}{5.5in}
\begin{tabbing}
01. \ \= $i_v \leftarrow \ell_v(u, v)$ \\
02. \>  $i_u \leftarrow \ell_u(u, v)$ \\
03. \> {\sc If} $(u, v) \in M_{down}(v)$ \\
04. \> \qquad \= $M_{down}(v) \leftarrow M_{down}(v) \setminus \{(u,v) \}$ \\
05. \> {\sc Else}  \\
06. \> \>  $M_{up}(v) \leftarrow M_{up}(v) \cup \{ (u, v) \}$ \\
07. \> {\sc If} $i_v + 1 > \ell(u)$  \\
08. \> \>  $M_{up}(u) \leftarrow M_{up}(u) \setminus \{ (u,v) \}$ \\
09. \> \>  $M_{down}(u) \leftarrow M_{down}(u) \setminus \{ (u,v) \}$ \\
10. \> {\sc If} $\ell(u, v) = \max(i_v+1, i_u)$ \\
11. \> \> RETURN {\sc False} \\
12. \> $w(u, v) \leftarrow \beta^{-(\ell(u,v)+1)}$ \\
13. \> $W_u \leftarrow W_u - \beta^{-\ell(u,v)} + \beta^{-(\ell(u,v)+1)}$ \\
14. \> $W_v \leftarrow W_v - \beta^{-\ell(u,v)} + \beta^{-(\ell(u,v)+1)}$ \\
15. \> $E_{\ell(u,v)}(u) \leftarrow E_{\ell(u,v)}(u) \setminus \{ (u, v) \}$ \\ 
16. \> $E_{\ell(u,v)+1}(u) \leftarrow E_{\ell(u,v)+1}(u) \cup \{(u,v) \}$  \\
17. \> $E_{\ell(u,v)}(v) \leftarrow E_{\ell(u,v)}(v) \setminus \{ (u, v) \}$ \\
18. \> $E_{\ell(u,v)+1}(v) \leftarrow E_{\ell(u,v)+1}(v) \cup \{(u,v) \}$ \\
19. \> $\ell(u,v) \leftarrow \ell(u,v) +1$ \\
20. \> RETURN {\sc True}
\end{tabbing}
\end{minipage}
}}
\caption{\label{new:fig:move:up} MOVE-UP($v, (u, v)$).}
\end{figure}

\paragraph{Deciding if the node $u$ becomes dirty.} We now continue with the description of the subroutine PIVOT-UP$(v, (u, v))$.  After step (01) in Figure~\ref{new:fig:pivot:up}, it remains to decide whether the node $u$ should become dirty. This decision is made following the three rules specified  in Section~\ref{sub:sec:dirty:node}. Note that if we increase the value of $\ell_v(u, v)$, then it can never lead to an increase in the weight $W_u$. Thus, if $Y = \text{{\sc True}}$, then it means that the weight $W_u$ dropped  during the call to the subroutine MOVE-UP$(v, (u,v))$.  On the other hand, if $Y = \text{{\sc False}}$, then it means that the weight $W_u$ did not change during the call to the subroutine MOVE-UP$(v, (u,v))$. In this event, the node $u$ never becomes dirty. 

As per  Rules~\ref{rule:dirty:up}~--~\ref{rule:dirty:slack}, if  the weight $W_u$ gets reduced, then  $u$ becomes dirty iff either  $\state[u] = \Upb$ or  $\left(\state[u] = \Down, \ell(u) > K\right)$. Thus, the subroutine sets $D[u] \leftarrow 1$ iff two conditions are satisfied: (1) $Y = \text{{\sc True}}$, and (2) either $\state[u] = \Upb$ or $\left( \state[u] = \Down, \ell(u) > K\right)$. 

Finally,  just before terminating the subroutine PIVOT-UP$(v, (u,v))$ in Figure~\ref{new:fig:pivot:up}, we  call the subroutine UPDATE-STATUS$(u)$. The reason  for this call is explained in the beginning of Section~\ref{sub:sec:update:status}.

\begin{lemma}
\label{lm:pivot:up}
The subroutine PIVOT-UP$(v, (u, v))$ takes $O(\log n)$ time. It returns {\sc True} if the weight $w(u, v)$ gets changed, and {\sc False} otherwise. The node $u$ becomes dirty only if the subroutine returns {\sc True}. 
\end{lemma}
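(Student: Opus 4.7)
The plan is to verify the three claims in order by tracing through Figures~\ref{new:fig:pivot:up} and~\ref{new:fig:move:up}. All three are essentially bookkeeping facts that follow by direct inspection of the pseudocode, together with the fact that the relevant sets are stored as balanced search trees.

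\textbf{Running time.} PIVOT-UP consists of a single call to MOVE-UP, an O(1) conditional that flips $D[u]$, a call to UPDATE-STATUS$(u)$, and a return. By Lemma~\ref{lm:time:update:status}, UPDATE-STATUS$(u)$ costs $O(\log n)$. So it suffices to bound MOVE-UP$(v,(u,v))$. I would walk through its 20 lines: each line either reads or updates one of $W_u, W_v, w(u,v), \ell(u,v)$, a membership-style bit such as $(u,v)\in M_{down}(v)$, or performs a single insertion/deletion on one of the balanced search trees $M_{up}(\cdot), M_{down}(\cdot), E_{i}(\cdot)$ that are kept by our data structure (Section~\ref{sub:sec:data:structures}). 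Each such balanced-tree operation costs $O(\log n)$, and there are $O(1)$ of them, so MOVE-UP runs in $O(\log n)$ time and hence so does PIVOT-UP.

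\textbf{Return value and effect on $w(u,v)$.} PIVOT-UP returns exactly the value $Y$ returned by MOVE-UP, so the claim reduces to showing that MOVE-UP returns \textsc{True} iff it changes $w(u,v)$. I would argue by inspecting MOVE-UP: the only line that modifies $w(u,v)$ is line~12, and the only paths through the code are (a) return \textsc{False} at line~11 without reaching line~12, or (b) execute lines~12--19 and return \textsc{True} at line~20. So $w(u,v)$ is changed in exactly the case where \textsc{True} is returned. Moreover, when this branch is taken, line~19 changes $\ell(u,v)$ by $+1$ and line~12 resets $w(u,v)=\beta^{-(\ell(u,v)+1)}$, which is consistent with equation~\eqref{eq:weight:edge}; the path taken at line~10 correctly detects whether the maximum of the two shadow-levels actually increases as a result of raising $\ell_v(u,v)$ by one.

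\textbf{When $u$ becomes dirty.} The value of $D[u]$ is assigned inside PIVOT-UP only at line~03, which is guarded by the condition $Y=\textsc{True}$ in line~02. Hence $D[u]$ can be set to $1$ only if PIVOT-UP returns \textsc{True}. One subtle point I would want to flag (but it is not formally needed for the statement of the lemma): when $Y=\textsc{False}$, UPDATE-STATUS$(u)$ is still invoked at line~04, but since neither $W_u$ nor the sets $M_{up}(u), M_{down}(u), E_i(u)$ are affected in this branch (MOVE-UP bails out at line~11 before modifying any of $u$'s quantities), the call cannot create a dirty node out of a clean one --- UPDATE-STATUS never flips $D[\cdot]$ from $0$ to $1$; it only checks fitness and possibly changes $\state[u]$ or $\ell(u)$.

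The proof is essentially a direct code trace; the only mild subtlety is making the dichotomy at line~10 of MOVE-UP match the intended semantics of ``does $w(u,v)$ change,'' but this is immediate from equation~\eqref{eq:level:edge} and~\eqref{eq:weight:edge}. I do not expect any substantive obstacle.
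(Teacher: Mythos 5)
Your proposal is correct and matches the paper's argument, which likewise invokes Lemma~\ref{lm:time:update:status} for the cost of UPDATE-STATUS$(u)$ and then appeals to a direct inspection of the pseudocode for the remaining claims. Your version simply spells out the code trace (the single-success/failure dichotomy at line~10 of MOVE-UP, the guard at line~02 of PIVOT-UP, and the fact that UPDATE-STATUS never sets $D[u]\leftarrow 1$) that the paper leaves implicit.
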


\begin{proof}
A call to the subroutine UPDATE-STATUS$(y)$ takes $O(\log n)$ time, as per  Lemma~\ref{lm:time:update:status}. The rest of the proof follows from the description of the subroutine.
\end{proof}

\subsection{The subroutine PIVOT-DOWN$(v, (u, v))$.}
\label{sub:sec:pivot:down}

This is described in Figure~\ref{new:fig:pivot:down}. This subroutine is called when the node $v$ is dirty and it wants to decrease its shadow-level $\ell_v(u, v)$ with respect to the edge $(u, v)$.  There are two situations under which such an event can take place: (1)  $\state[v] = \Down$ and  $v$ wants to down-mark the edge $(u, v)$, and (2) $\state[v] =  \Upb$ and  $v$ wants to un-mark the edge $(u, v)$. The subroutine PIVOT-DOWN$(v, (u, v))$ updates the relevant data structures, decides whether the node $u$ should become dirty, and returns {\sc True} if  the weight of the edge $(u,v)$ gets changed and {\sc False} otherwise. Thus, if the subroutine returns {\sc True}, then this amounts to an induced activation of the node $u$. 
This subroutine, however, is {\em not} a mirror-image of the subroutine PIVOT-UP$(v, (u, v))$. The  difference between them is explained below.

In the subroutine PIVOT-DOWN$(v, (u,v))$, suppose that the node $v$ has decreased the value of $\ell_v(u, v)$, and this has increased the weight $W_u$. Furthermore, the node $u$ is currently in a state where  Rules~\ref{rule:dirty:up}~--~\ref{rule:dirty:slack} dictate that it should become dirty when its weight increases.   If this is the case, then the node $u$ attempts to {\em undo} its weight-change by increasing the value of $\ell_u(u,v)$. To take a concrete example, suppose that just before the subroutine PIVOT-DOWN$(v, (u,v))$ is called, we have $\state[v] = \Upb$, $\ell(v) = i$, $\ell_v(u,v) = i+1$, $\state[u] = \Up$, $\ell(u) = i$ and $\ell_u(u, v) = i$. The node $v$ now decreases the value of $\ell_v(u, v)$  by one, and un-marks the edge $(u, v)$. Thus, the weight $w(u, v)$ changes from $\beta^{-(i+1)}$ to $\beta^{-i}$. This also increases the weight $W_u$  by an amount $\beta^{-i} - \beta^{-(i+1)}$. The node $u$  will now undo  this change by up-marking the edge $(u, v)$, which will increase  $\ell_u(u, v)$ by one. This will bring the weight $W_u$ back to its initial value. In contrast,  the subroutine PIVOT-UP$(v, (u, v))$ does not allow the node $u$ to perform such ``undo'' operations. This ``undo'' operation performed by  $u$ in PIVOT-DOWN$(v, (u,v))$ will be crucial in bounding the  update time of our algorithm.

\paragraph{The subroutine MOVE-DOWN$(v, (u,v))$. } Step (01) in Figure~\ref{new:fig:pivot:down}  calls  MOVE-DOWN$(v, (u, v))$. This subroutine (described in Figure~\ref{new:fig:move:down}) updates the relevant data structures as the value of $\ell_v(u,v)$ decreases by one, and returns {\sc True} if the weight $w(u, v)$  gets changed and {\sc False} otherwise. 
To see an example where   MOVE-DOWN$(v, (u, v))$ returns {\sc False}, consider a situation where $\state[v] = \Down$, $\ell(v) = i$, $\ell_v(u, v) = i$, and $\ell_u(u, v) = \ell(u) = i$. In this instance, even after the node $v$ decreases the value of $\ell_v(u, v)$ by down-marking the edge  $(u, v)$, the weight $w(u, v)$ does not change.

\begin{figure}[htbp]
\centerline{\framebox{
\begin{minipage}{5.5in}
\begin{tabbing}
01.   \= $Y \leftarrow \text{MOVE-DOWN}(v, (u, v))$  \qquad // See Figure~\ref{new:fig:move:down}. \\
02. \> {\sc If} $\state[u] = \Up$  \\
03. \> \ \ \ \  \= {\sc If} $Y = \text{{\sc True}}$ \\
04. \> \> \ \ \ \  \= {\sc If} $(u,v) \notin M_{up}(u) \bigwedge \ell(u) \geq \ell_v(u, v)$ \\
05. \> \> \> \ \ \ \ \= MOVE-UP($u, (u,v)$) \\
06. \> \> \> \> UPDATE-STATUS($u$) \\
07. \> \> \> \> RETURN {\sc False} \\
08. \> \> \> {\sc Else} \\
09. \> \> \> \> $D[u] \leftarrow 1$ \\
10. \> \> \> \> UPDATE-STATUS($u$) \\
11. \> \> \> \> RETURN $Y$. \\
12. \> \> {\sc Else} \\
13. \> \> \> UPDATE-STATUS($u$) \\
14. \> \> \> RETURN $Y$. \\
15. \> {\sc Else if} $\state[u] = \Downb$ \\
16. \> \> {\sc If} $Y = \text{{\sc True}}$, {\sc Then} \\
17. \> \> \> {\sc If} $(u, v) \in M_{down}(u) \bigwedge \ell_v(u, v) < \ell(u)$  \\
18. \> \> \> \> MOVE-UP($u, (u,v)$) \\
19. \> \> \> \> UPDATE-STATUS($u$) \\
20. \> \> \> \> RETURN {\sc False} \\
21. \> \> \> {\sc Else} \\
22. \> \> \> \> $D[u] \leftarrow 1$ \\
23. \> \> \> \> UPDATE-STATUS($u$) \\
24. \> \> \> \> RETURN $Y$. \\
25. \> \> {\sc Else} \\
26. \> \> \> UPDATE-STATUS($u$) \\
27. \> \> \> RETURN $Y$. \\
28. \> {\sc Else} \\ 
29. \> \> UPDATE-STATUS($u$) \\
30. \> \> RETURN $Y$.
\end{tabbing}
\end{minipage}
}}
\caption{\label{new:fig:pivot:down} PIVOT-DOWN($v, (u, v)$). Steps (05) and (18) correspond to ``undo'' operations  by the node $u$.}
\end{figure}

\begin{figure}[htbp]
\centerline{\framebox{
\begin{minipage}{5.5in}
\begin{tabbing}
01. \  \= $i_v \leftarrow \ell_v(u, v)$ \\
02. \>  $i_u \leftarrow \ell_u(u, v)$ \\
03. \> {\sc If} $(u, v) \in M_{up}(v)$ \\  
04. \> \qquad \= $M_{up}(v) \leftarrow M_{up}(v) \setminus \{(u,v) \}$ \\
05. \> {\sc Else}  \\
06. \> \> $M_{down}(v) \leftarrow M_{down}(v) \cup \{ (u, v) \}$ \\
07. \> {\sc If} $\ell(u, v) = \max(i_v-1, i_u)$ \\
08. \> \> RETURN {\sc False} \\
09. \> $w(u, v) \leftarrow \beta^{-(\ell(u,v)-1)}$ \\
10. \> $W_u \leftarrow W_u - \beta^{-\ell(u,v)} + \beta^{-(\ell(u,v)-1)}$ \\
11. \> $W_v \leftarrow W_v - \beta^{-\ell(u,v)} + \beta^{-(\ell(u,v)-1)}$ \\
12. \> $E_{\ell(u,v)}(u) \leftarrow E_{\ell(u,v)}(u) \setminus \{ (u, v) \}$ \\
13. \>  $E_{\ell(u,v)-1}(u) \leftarrow E_{\ell(u,v)-1}(u) \cup \{(u,v) \}$  \\
14. \> $E_{\ell(u,v)}(v) \leftarrow E_{\ell(u,v)}(v) \setminus \{ (u, v) \}$ \\
15. \> $E_{\ell(u,v)-1}(v) \leftarrow E_{\ell(u,v)-1}(v) \cup \{(u,v) \}$ \\
16. \> $\ell(u,v) \leftarrow \ell(u,v) -1$ \\
17. \> RETURN {\sc True}
\end{tabbing}
\end{minipage}
}}
\caption{\label{new:fig:move:down} MOVE-DOWN($v, (u, v)$).}
\end{figure}

Unlike the subroutine MOVE-UP$(v, (u, v))$, here we need not worry about Invariant~\ref{main:inv:shadow:level} getting violated, for the following reason. Set $v = x$ and $u = y$ in Invariant~\ref{main:inv:shadow:level} just as we did while considering the subroutine MOVE-UP$(v, (u, v))$. If $\ell_u(u, v) =\ell(u)$, the Invariant~\ref{main:inv:shadow:level} clearly remains satisfied even as $\ell_v(u,v)$ decreases by one. If $\ell_u(u, v) \neq \ell(u)$, then by Invariant~\ref{main:inv:shadow:level} we have $\ell_v(u,v) \leq \ell(u)$ just before the call to MOVE-DOWN$(v, (u,v))$. In this case as well, Invariant~\ref{main:inv:shadow:level} continues to remain satisfied even as $\ell_v(u, v)$ decreases by one. 

%The following property is satisfied.
%\begin{property}
%\label{pr:move:down}
%A call to MOVE-DOWN$(v, (u, v))$ does not insert any edge into the sets $M_{down}(u)$ and $M_{up}(u)$. 
%\end{property}

\paragraph{Deciding if the node $u$   becomes dirty.}  We  continue with the description of  PIVOT-DOWN$(v, (u, v))$.  After step (01) in Figure~\ref{new:fig:pivot:down}, it remains to decide whether (a) the node $u$ is about to become dirty, and if the answer is yes, then whether (b)  the node can escape this fate by successfully executing an ``undo'' operation. Decision (a) is taken following the Rules~\ref{rule:dirty:up}~--~\ref{rule:dirty:slack}.

Since the subroutine PIVOT-DOWN$(v, (u,v))$ is called when the node $v$ wants to decrease the value of $\ell_v(u, v)$, this can never lead to a decrease in the value of $W_u$. In other words, step (01) in Figure~\ref{new:fig:pivot:down} can only  increase the weight $W_u$. Specifically, if $Y = \text{{\sc True}}$, then the weight $W_y$ increases. In contrast, if $Y = \text{{\sc False}}$, then the weight $W_u$ does not change at all. In the latter event, the node $u$ never becomes dirty, and the question of $u$ attempting to execute an ``undo'' operation does not arise. In the former event,  Rules~\ref{rule:dirty:up}~--~\ref{rule:dirty:slack} dictate that the node $u$ is about to become dirty iff $\state[u] \in \{\Up, \Downb\}$. This is the only situation where we have to check if the node $u$ can execute a successful ``undo'' operation. This situation can be split into two mutually exclusive and exhaustive cases (1) and (2), as described below. In every other situation, the node $u$ does not become dirty, it does not perform an undo operation, and the subroutine PIVOT-DOWN$(v, (u,v))$  returns the same value as $Y$. Finally, just before terminating the subroutine PIVOT-DOWN$(v, (u, v))$ we always call  UPDATE-STATUS$(u)$. The  reason for this step is explained in the beginning of Section~\ref{sub:sec:update:status}.

\paragraph{Case 1:} $Y = \text{{\sc True}}$ and $\state[u] = \Up$.

\noindent See steps (03) -- (11) in Figure~\ref{new:fig:pivot:down}. In this case, either $(u, v) \in M_{up}(u)$ or $(u, v) \notin M_{up}(u)$. In the former event, the edge $(u,v)$ has already been up-marked by $u$, and hence $u$ cannot increase the value of $\ell_u(u,v)$ any further. In the latter event, we have $\ell_u(u, v) = \ell(u)$. Before up-marking the edge $(u, v)$, the node $u$ should ensure that it satisfies Invariant~\ref{main:inv:shadow:level} (set $u = y$ and $v = x$). Hence, we must have $\ell_v(u, v) \leq \ell(u)$ if the node $u$ is to execute an undo operation. To summarise, we have to sub-cases.

\paragraph{Case 1-a:}  $(u, v) \notin M_{up}(u)$ and $\ell_v(u, v) \leq \ell(u)$. In this event,  increasing the value of $\ell_u(u, v)$ by one changes the weight $w(u,v)$ from $\beta^{-\ell(u)}$ to $\beta^{-(\ell(u)+1)}$. This undo operation is performed by calling the subroutine MOVE-UP$(u, (u,v))$.

\paragraph{Case 1-b:} Either $(u, v) \in M_{up}(u)$ and $\ell_v(u, v) > \ell(u)$. In this event, the node $u$ cannot perform an undo operation and  becomes dirty as per Rule~\ref{rule:dirty:up}.

\paragraph{Case 2:} $Y = \text{{\sc True}}$ and $\state[u] = \Downb$.

\noindent See steps (16) -- (24) in Figure~\ref{new:fig:pivot:down}. In this case, either $(u, v) \in M_{down}(u)$ or $(u, v) \notin M_{down}(u)$. In the latter event, the only way $u$ can increase the value of $\ell_u(u,v)$ is by up-marking the edge $(u, v)$. But this would result in the set $M_{up}(u)$ becoming non-empty, which in turn would violate a constraint in row (6) of Table~\ref{fig:different:states}. Hence, the node $u$ can perform an undo operation only if $(u, v) \in M_{down}(u)$. Further, if $\ell_v(u, v) \geq \ell(u)$ and $(u, v) \in M_{down}(u)$, then the weight $w(u, v)$  remains equal to $\beta^{-\ell_v(u,v)}$  even as   the value of $\ell_u(u,v)$ changes from $\ell(u) - 1$ to $\ell(u)$. This prevents  $u$ from executing an undo operation.   To summarise, there are two sub-cases.

\paragraph{Case 2-a:} We have $(u, v) \in M_{down}(u)$ and $\ell_v(u, v) < \ell(u)$. In this event, increasing the value of $\ell_u(u,v)$ by one changes the weight $w(u,v)$ from $\beta^{-(\ell(u)-1)}$ to $\beta^{-\ell(u)}$. This undo operation is performed by calling the subroutine MOVE-UP$(u, (u,v))$. 

\paragraph{Case 2-b:} Either $(u, v) \in M_{down}(u)$ or $\ell_v(u,v) \geq \ell(u)$. In this event, the node $u$ cannot perform an undo operation and becomes dirty as per Rule~\ref{rule:dirty:up}.

\begin{lemma}
\label{lm:pivot:down}
The subroutine PIVOT-DOWN$(v, (u, v))$ takes $O(\log n)$ time. It returns {\sc True} if the weight $w(u, v)$ gets changed, and {\sc False} otherwise. The node $u$ becomes dirty only if the subroutine returns {\sc True}. 
\end{lemma}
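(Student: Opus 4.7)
The plan is to prove the three claims—time bound, correctness of the returned Boolean, and the dirty-only-if-True guarantee—largely by inspection of the pseudocode in Figure~\ref{new:fig:pivot:down}, in close analogy with the proof of Lemma~\ref{lm:pivot:up}. The only nontrivial wrinkle, and the main obstacle, will be justifying the two undo branches at lines (05) and (18), where the subroutine returns {\sc False} even though MOVE-DOWN returned {\sc True}.

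For the $O(\log n)$ time bound I would first note that MOVE-DOWN$(v,(u,v))$ performs only $O(1)$ operations on balanced search trees and $O(1)$ arithmetic updates to the stored weights, so it runs in $O(\log n)$ time. Each branch of PIVOT-DOWN then executes at most one additional call to MOVE-UP$(u,(u,v))$ and exactly one call to UPDATE-STATUS$(u)$; the former is again $O(\log n)$, and the latter is $O(\log n)$ by Lemma~\ref{lm:time:update:status}. The remaining bookkeeping is constant-time, yielding an $O(\log n)$ bound overall.

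For the return-value correctness, the scenarios in which MOVE-DOWN returns $Y=\text{{\sc False}}$, or in which $Y=\text{{\sc True}}$ and no undo is triggered, are immediate: the subroutine simply returns $Y$, which by definition of MOVE-DOWN indicates whether $w(u,v)$ changed. The hard part is the undo branches (Cases 1-a and 2-a in the running description). There I would use Invariant~\ref{inv:shadow:level} together with the state constraints in Table~\ref{fig:different:states} to pin down the pre-MOVE-DOWN values of $\ell_v(u,v)$ and $\ell_u(u,v)$ uniquely. In Case 1-a, $\state[u]=\Up$ forces $M_{down}(u)=\emptyset$, so combined with $(u,v)\notin M_{up}(u)$ we obtain $\ell_u(u,v)=\ell(u)$; the conditions $Y=\text{{\sc True}}$ and $\ell_v(u,v)\leq \ell(u)$ (post-MOVE-DOWN), together with Invariant~\ref{inv:shadow:level}, then force the old $\ell_v(u,v)$ to equal $\ell(u)+1$. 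An explicit two-line computation shows that $w(u,v)$ moves from $\beta^{-(\ell(u)+1)}$ to $\beta^{-\ell(u)}$ and back to $\beta^{-(\ell(u)+1)}$ across the MOVE-DOWN/MOVE-UP pair, so returning {\sc False} is correct. Case 2-a is analogous: starting from $(u,v)\in M_{down}(u)$ (hence $\ell_u(u,v)=\ell(u)-1$) and $\ell_v(u,v)<\ell(u)$ (post-MOVE-DOWN), Invariant~\ref{inv:shadow:level} pins down the old $\ell_v(u,v)$ to $\ell(u)$, and the analogous computation shows the weight is restored.

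For the dirty-only-if-True claim I would simply inspect the pseudocode: $D[u]\leftarrow 1$ is executed only at lines (09) and (22), both of which sit inside branches guarded by $Y=\text{{\sc True}}$ together with a failed undo attempt, and in both cases the subroutine subsequently returns $Y=\text{{\sc True}}$ at lines (11) or (24). Since UPDATE-STATUS$(u)$ itself never writes to $D[u]$ (it only reads it, per Section~\ref{sub:sec:update:status}), no other code path can cause $u$ to become dirty during the call.
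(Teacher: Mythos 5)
Your proof is correct and follows the same strategy as the paper's one-line proof, which simply cites Lemma~\ref{lm:time:update:status} for UPDATE-STATUS and then defers to ``the description of the subroutine''; you have fleshed out exactly the case analysis that the paper's prose (the Case 1-a/1-b/2-a/2-b discussion in Section~\ref{sub:sec:pivot:down}) relies on. In particular your verification that the MOVE-DOWN/MOVE-UP pair restores $w(u,v)$ in the two undo branches, and your observation that $D[u]\leftarrow 1$ occurs only on paths returning $Y=\text{{\sc True}}$, match the reasoning the paper leaves implicit.
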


\begin{proof}
A call to the subroutine UPDATE-STATUS$(y)$ takes $O(\log n)$ time, as per Lemma~\ref{lm:time:update:status}. The rest of the proof follows from the description of the subroutine.
\end{proof}

\section{The subroutine FIX-DIRTY-NODE$(v)$}
\label{sec:fix:dirty:node}

 Note that the node $v$ undergoes a natural activation when an edge $(u,v)$ is inserted into or deleted from the graph. In contrast, the node $v$ undergoes an induced activation when some neighbour $x$ of $v$ calls the subroutine PIVOT-UP$(x, (x, v))$ or PIVOT-DOWN$(x, (x, v))$, and that subroutine returns {\sc True}.

The subroutine FIX-DIRTY-NODE$(v)$  is called immediately after the node $v$ becomes dirty due to a natural or an induced activation. Depending on the current state of  $v$, the subroutine   up-marks, down-marks or un-marks some of its incident edges $(u, v) \in E$. This involves increasing or decreasing the shadow-level $\ell_v(u, v)$ by one, for which the subroutine respectively calls PIVOT-UP$(v, (u,v))$ or PIVOT-DOWN$(v, (u,v))$. We say that a given call to PIVOT-UP$(v, (u,v))$ or PIVOT-DOWN$(v, (u,v))$ is a {\em success} if the weight $w(u,v)$ gets changed due to the call (i.e., the call returns {\sc True}), and a {\em failure} otherwise (i.e., the call returns {\sc False}). We  ensure that  one call to the subroutine FIX-DIRTY-NODE$(v)$ leads to at most one success. 

To summarise,  the subroutine  FIX-DIRTY-NODE$(v)$  makes  a series of calls to PIVOT-UP$(v, (u,v))$ or PIVOT-DOWN$(v, (u,v))$.  We terminate the subroutine  immediately after the first such call    returns {\sc True}.  We also  make the node $v$  clean just before the subroutine FIX-DIRTY-NODE$(v)$ terminates. Hence, Lemmas~\ref{lm:pivot:up},~\ref{lm:pivot:down} imply the following observation.

\begin{observation}
\label{ob:fix:dirty:node}
The node $v$ becomes clean at the end of the subroutine FIX-DIRTY-NODE$(v)$. Furthermore, during a call to the subroutine FIX-DIRTY-NODE$(v)$, at most one neighbour of the node $v$ becomes dirty. 
\end{observation}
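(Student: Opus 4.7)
The plan is to read off both parts of the observation directly from the structure of FIX-DIRTY-NODE$(v)$ as described in the text, invoking Lemmas~\ref{lm:pivot:up} and \ref{lm:pivot:down} to handle the only nontrivial point, namely which calls can create new dirty nodes.

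First I would dispose of the first claim. The subroutine FIX-DIRTY-NODE$(v)$ is described as making a series of calls to PIVOT-UP$(v,(u,v))$ or PIVOT-DOWN$(v,(u,v))$, and the text specifies that immediately before termination the subroutine flips $v$'s dirty bit, i.e.\ sets $D[v] \leftarrow 0$. Since none of the intermediate calls touch $D[v]$ (Lemmas~\ref{lm:pivot:up} and \ref{lm:pivot:down} only modify $D[u]$ at the other endpoint), this suffices: at the moment FIX-DIRTY-NODE$(v)$ returns, $D[v]=0$, so $v$ is clean.

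Next, for the second claim I would track which calls inside FIX-DIRTY-NODE$(v)$ can possibly set some neighbour's dirty bit to~$1$. By Lemma~\ref{lm:pivot:up}, a call PIVOT-UP$(v,(u,v))$ turns the neighbour $u$ dirty only if the call returns {\sc True}; by Lemma~\ref{lm:pivot:down}, the same holds for PIVOT-DOWN$(v,(u,v))$. Moreover, each such call concerns one specific incident edge $(u,v)$, so at most the single neighbour $u$ corresponding to that edge can become dirty during that call. I would emphasise here that the ``undo'' branches of PIVOT-DOWN (steps (05) and (18) in Figure~\ref{new:fig:pivot:down}) do not contradict this: in those branches the call ultimately returns {\sc False}, and $D[u]$ is never set to $1$ along that code path.

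Finally, I would combine these with the termination rule of FIX-DIRTY-NODE$(v)$ stated in the text: the subroutine stops immediately after the first call to PIVOT-UP$(v,\cdot)$ or PIVOT-DOWN$(v,\cdot)$ that returns {\sc True}. Hence at most one call made during FIX-DIRTY-NODE$(v)$ returns {\sc True}, and each such call can turn at most one neighbour dirty; every other call returns {\sc False} and therefore, by Lemmas~\ref{lm:pivot:up} and \ref{lm:pivot:down}, cannot create a new dirty neighbour. The only foreseeable subtlety is the bookkeeping around the undo branches in PIVOT-DOWN and the fact that UPDATE-STATUS$(u)$ is invoked after each pivot, but UPDATE-STATUS$(u)$ is specified to possibly alter $\state[u]$ and $\ell(u)$ only when $u$ is already clean, and it never sets $D[u] \leftarrow 1$; so it does not contribute any additional dirty neighbours. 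This completes the count and yields the observation.
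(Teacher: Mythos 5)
Your proof is correct and follows essentially the same route as the paper's: invoke Lemmas~\ref{lm:pivot:up} and \ref{lm:pivot:down} to note that only a call returning {\sc True} can dirty the opposite endpoint, combine this with the termination rule (stop after the first {\sc True}), and observe that the subroutine explicitly sets $D[v]\leftarrow 0$. Your extra bookkeeping about the undo branches of PIVOT-DOWN and the behaviour of UPDATE-STATUS is sound and just fills in detail the paper leaves implicit; the only tiny imprecision is that $D[v]\leftarrow 0$ actually occurs at line~(01) of the inner FIX-* subroutines rather than ``immediately before termination,'' but this matches the paper's own loose phrasing and does not affect the conclusion.
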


\begin{figure}[htbp]
\centerline{\framebox{
\begin{minipage}{5.5in}
\begin{tabbing}
01. \  \=   {\sc If} $\text{{\sc State}}[v] = \text{{\sc Up}}$ \\
02. \> \qquad \= $\text{FIX-UP}(v)$ \\ 
03. \> {\sc Else if} $\text{{\sc State}}[v] = \text{{\sc Down-B}}$ \\
04. \> \> $\text{FIX-DOWN-B}(v)$\\ 
05. \> {\sc Else if} $\text{{\sc State}}[v] = \text{{\sc Down}}$ and $\ell(y) > K$ \\
06. \> \> $\text{FIX-DOWN}(v)$ \\ 
07. \> {\sc Else if} $\text{{\sc State}}[v] = \text{{\sc Up-B}}$ \\
08. \> \> $\text{FIX-UP-B}(v)$  \\   
09. \> UPDATE-STATUS($v$)
\end{tabbing}
\end{minipage}
}}
\caption{\label{new:fig:fix:dirty} FIX-DIRTY-NODE($v$).}
\end{figure}

We now describe the subroutine FIX-DIRTY-NODE$(v)$ in a bit more detail. See Figure~\ref{new:fig:fix:dirty}. Note that the node $v$ becomes dirty only if it experiences an activation, and the subroutine FIX-DIRTY-NODE$(v)$ is called immediately after the node $v$ becomes dirty.  Thus, Rule~\ref{rule:dirty:slack} and Corollary~\ref{cor:activation} imply that at the beginning of the subroutine FIX-DIRTY-NODE$(v)$ we must have: either (1) $\state[v] = \Up$, or (2) $\state[v] = \Downb$, or (3) $\state[v] = \Down$ and $\ell(y) > K$, or (4) $\state[v] = \Upb$. Accordingly,  we   call one of the four  subroutines: FIX-UP$(v)$, FIX-DOWN-B$(v)$, FIX-DOWN$(v)$ and FIX-UP-B$(v)$. For the rest of  Section~\ref{sec:fix:dirty:node}, we focus on  describing  these four  subroutines. Note that we call  UPDATE-STATUS$(v)$ just before terminating the subroutine FIX-DIRTY-NODE$(v)$, for a reason that is explained in the beginning of  Section~\ref{sub:sec:update:status}. We now give a bound on the runtime of the subroutine, which follows from Lemmas~\ref{lm:fix:up},~\ref{lm:fix:down:b},~\ref{lm:fix:down},~\ref{lm:fix:up:b} and~\ref{lm:time:update:status}.

\begin{lemma}
\label{lm:fix:dirty:node}
The subroutine FIX-DIRTY-NODE$(v)$ takes  $O(\log^2 n)$ time.
\end{lemma}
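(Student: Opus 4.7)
The plan is to prove the bound by case analysis over the four possible states of $v$ at the beginning of the call, namely $\Up$, $\Downb$, $\Down$ (with $\ell(v) > K$), and $\Upb$, together with a final call to UPDATE-STATUS$(v)$. Inspecting the pseudocode in Figure~\ref{new:fig:fix:dirty}, the subroutine's cost decomposes into the cost of exactly one of the four branches FIX-UP$(v)$, FIX-DOWN-B$(v)$, FIX-DOWN$(v)$, FIX-UP-B$(v)$, plus one call to UPDATE-STATUS$(v)$.

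First I would invoke Lemma~\ref{lm:time:update:status}, which gives an $O(\log n)$ bound for the final call to UPDATE-STATUS$(v)$. This term is absorbed into the overall $O(\log^2 n)$ bound. The remaining work is to bound the cost of the selected FIX-$\ast$ branch.

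For each of the four branches, I would simply cite the corresponding Lemmas~\ref{lm:fix:up}, \ref{lm:fix:down:b}, \ref{lm:fix:down}, and \ref{lm:fix:up:b}, each of which states an $O(\log^2 n)$ bound. The underlying reason (made precise in those lemmas) is that each FIX-$\ast$ subroutine repeatedly calls PIVOT-UP$(v, (u,v))$ or PIVOT-DOWN$(v, (u,v))$, stopping at the first success or after $O(\log n)$ failures; by Lemmas~\ref{lm:pivot:up} and~\ref{lm:pivot:down}, each such call costs $O(\log n)$, so the total is $O(\log n) \cdot O(\log n) = O(\log^2 n)$. Since exactly one branch is executed per call to FIX-DIRTY-NODE$(v)$, this gives the claimed worst-case bound.

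The only mildly delicate point will be to observe that no other work is performed by FIX-DIRTY-NODE$(v)$ beyond selecting a branch via $O(1)$ tests on $\state[v]$ and $\ell(v)$ and invoking UPDATE-STATUS$(v)$ at the end. The per-branch $O(\log^2 n)$ cost is the dominant term, so summing the constantly many additive contributions yields an overall worst-case running time of $O(\log^2 n)$, which is precisely what Lemma~\ref{lm:fix:dirty:node} asserts. Since the lemma is essentially bookkeeping over already-stated results, I do not foresee any obstacle beyond faithfully invoking the cited lemmas.
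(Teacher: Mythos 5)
Your proposal is correct and matches the paper's own proof, which likewise derives the bound directly from Lemmas~\ref{lm:fix:up}, \ref{lm:fix:down:b}, \ref{lm:fix:down}, \ref{lm:fix:up:b} and~\ref{lm:time:update:status}. (Minor imprecision only: three of the four branch lemmas actually give $O(\log n)$, with only FIX-DOWN$(v)$ requiring $O(\beta^5 L \cdot \log n) = O(\log^2 n)$, but this does not affect the conclusion.)
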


\begin{figure}[htbp]
\centerline{\framebox{
\begin{minipage}{5.5in}
\begin{tabbing}
01. \  \= $D[v] \leftarrow 0$,  $i \leftarrow \ell(v)$ \\
02. \>   Pick an edge $(u, v) \in E_i(v)$. \\ 
03. \>  $\text{PIVOT-UP}(v, (u, v))$  
\end{tabbing}
\end{minipage}
}}
\caption{\label{new:fig:fix:up} FIX-UP($v$).}
\end{figure}

\subsection{FIX-UP$(v)$.} See Figure~\ref{new:fig:fix:up}. This subroutine is called when a node $v$  with $\state[v] = \Up$ becomes dirty due to an activation. This activation must have increased the weight $W_v$. See Rule~\ref{rule:dirty:up} and Case (1) of its subsequent justification. Let $i = \ell(v)$ be the current level of the node. Since $\state[v] = \Up$, we must have $E_i(v) \neq \emptyset$ as per row (1) of Table~\ref{fig:different:states}. The node $v$ picks any edge $(u, v) \in E_i(v)$ and up-marks that edge by calling the subroutine PIVOT-UP$(v, (u,v))$. See the justification for Rule~\ref{rule:dirty:up}. Since $(u, v) \in E_i(v)$ just before this step, we must have $\ell_u(u, v) \leq i$. This means that increasing the shadow-level $\ell_v(u, v)$  from $i$ to $(i+1)$ changes the weight $w(u, v)$ from $\beta^{-i}$ to $\beta^{-(i+1)}$. In other words, the very first call to PIVOT-UP$(v, (u,v))$ becomes a {\em success}. Thus, we  terminate the subroutine. Lemma~\ref{lm:fix:up} now follows from Lemma~\ref{lm:pivot:up}.

\begin{lemma}
\label{lm:fix:up}
The runtime of  FIX-UP$(v)$ is $O(\log n)$.
\end{lemma}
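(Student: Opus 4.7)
The plan is to simply bound the runtime of each of the three lines in Figure~\ref{new:fig:fix:up} and sum them up. Line (01) consists of writing to the bit $D[v]$ and reading off the current level $\ell(v)$ of the node, which by the data structures maintained in Section~\ref{sub:sec:data:structures} costs only $O(1)$.

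For line (02), I would appeal to the fact (stated in Section~\ref{sub:sec:data:structures}) that for every level $i \in [K,L]$ the set $E_i(v)$ is maintained explicitly as a balanced search tree. Since we need only an arbitrary element of $E_i(v)$ and the set is guaranteed to be non-empty (this is the key precondition ensured by row (1) of Table~\ref{fig:different:states}, which holds because $v$ is dirty in state $\Up$ and hence by Corollary~\ref{cor:activation} was already in state $\Up$ and fit immediately before the triggering activation), picking a representative element reduces to reading the root of the search tree, which takes $O(\log n)$ time (in fact $O(1)$ if we maintain a pointer, but $O(\log n)$ suffices for our target bound).

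Finally, for line (03), I would directly invoke Lemma~\ref{lm:pivot:up}, which guarantees that a single call to PIVOT-UP$(v,(u,v))$ takes $O(\log n)$ time. Summing the three contributions yields the claimed $O(\log n)$ bound. There is no real obstacle here: the only subtlety worth flagging is that $E_i(v) \neq \emptyset$ at the moment the edge is picked, which is precisely guaranteed by the state invariants together with Corollary~\ref{cor:activation}, so the selection in line (02) is well-defined.
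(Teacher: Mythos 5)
Your proposal is correct and follows essentially the same route as the paper: the cost is dominated by the single call to PIVOT-UP$(v,(u,v))$, which Lemma~\ref{lm:pivot:up} bounds by $O(\log n)$, and the non-emptiness of $E_{\ell(v)}(v)$ needed to make line (02) well-defined comes from row (1) of Table~\ref{fig:different:states}. The paper additionally remarks that this single call is necessarily a \emph{success} (since $\ell_u(u,v)\leq \ell(v)$ for $(u,v)\in E_{\ell(v)}(v)$), but that observation serves the later correctness analysis rather than the runtime bound, so its absence from your argument is not a gap.
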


\begin{figure}[htbp]
\centerline{\framebox{
\begin{minipage}{5.5in}
\begin{tabbing}
01. \ \= $D[v] \leftarrow 0$,   $k \leftarrow 0$ \\
02. \> {\sc While} $k < \beta^{5}$ \\
03. \>  \qquad \= $k \leftarrow k+1$ \\
04. \> \> {\sc If} $M_{down}(v) = \emptyset$ \\
05. \> \> \qquad BREAK \\
06. \> \>  Pick an edge $(u, v) \in M_{down}(v)$. \\ 
07. \> \> $X \leftarrow \text{PIVOT-UP}(v, (u, v))$ \\
08. \> \> {\sc If} $X = \text{{\sc True}}$ \\
09. \> \> \qquad  BREAK 
\end{tabbing}
\end{minipage}
}}
\caption{\label{new:fig:fix:down:b} FIX-DOWN-B($v$).}
\end{figure}

\subsection{FIX-DOWN-B$(v)$.} See Figure~\ref{new:fig:fix:down:b}.  This subroutine is called when a node $v$  with $\state[v] = \Downb$ becomes dirty due to an activation. This activation must have increased the weight $W_v$. See Rule~\ref{rule:dirty:up} and Case (2) of its subsequent justification.   Since $\state[v] = \Downb$, we must have $M_{down}(v) \neq \emptyset$ as per row (6) of Table~\ref{fig:different:states}. 

\smallskip
\noindent {\em The node $v$ picks an edge $(u, v) \in M_{down}(v)$, and un-marks it by calling  PIVOT-UP$(v, (u, v))$.}

\smallskip
\noindent
We keep repeating the above step until one of three events occurs: (1) The set $M_{down}(v)$ becomes empty. (2) We make the $\beta^{5}$-th call to PIVOT-UP$(v, (u, v))$.  (3) We encounter the first call to PIVOT-UP$(v, (u, v))$ which leads to a change in the weight $w(u, v)$. We then terminate the subroutine. By Lemma~\ref{lm:pivot:up},  each iteration of the {\sc While} loop in Figure~\ref{new:fig:fix:down:b} takes $O(\log n)$ time. This gives us the following lemma.

\begin{lemma}
\label{lm:fix:down:b}
The subroutine  FIX-DOWN-B$(v)$ takes $O(\beta^{5}  \log n) = O(\log n)$ time, for constant $\beta$.
\end{lemma}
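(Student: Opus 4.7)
The plan is straightforward: bound the cost of each iteration of the \textsc{While} loop and multiply by the maximum number of iterations. First I would observe that the loop terminates after at most $\beta^{5}$ iterations, since the loop guard $k < \beta^{5}$ ensures termination regardless of whether either of the two \textsc{Break} statements on lines 05 and 09 fires earlier.

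Next I would bound the per-iteration cost. Inside the body: incrementing the counter and testing $M_{down}(v) = \emptyset$ take $O(1)$ time (using a size counter or the root pointer of the balanced search tree representing $M_{down}(v)$); picking an arbitrary edge $(u,v) \in M_{down}(v)$ takes $O(\log n)$ time via the balanced search tree; the call to $\text{PIVOT-UP}(v, (u,v))$ takes $O(\log n)$ time by Lemma~\ref{lm:pivot:up}; and the final test on the returned value $X$ is $O(1)$. Summing, each iteration costs $O(\log n)$ time.

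Combining the two bounds gives a total cost of $O(\beta^{5} \log n)$, which is $O(\log n)$ since $\beta$ is a constant (fixed in equation~\eqref{eq:beta:K:L}). There is no real obstacle here; the only subtlety worth mentioning is that we rely on the data-structure representation from Section~\ref{sub:sec:data:structures} (balanced search trees for $M_{down}(v)$) to both extract an arbitrary element and test emptiness in $O(\log n)$ time, and on Lemma~\ref{lm:pivot:up} for the cost of each individual $\text{PIVOT-UP}$ call; no further argument about the number of successes or failures is needed, since the worst-case bound $\beta^{5}$ on the iteration count is built into the subroutine itself.
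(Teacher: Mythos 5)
Your proof is correct and matches the paper's argument: the paper likewise bounds the iteration count by the loop guard $k < \beta^{5}$ and charges $O(\log n)$ per iteration via Lemma~\ref{lm:pivot:up}. Your additional detail on the data-structure operations is consistent with Section~\ref{sub:sec:data:structures} and adds nothing that changes the argument.
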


\begin{figure}[htbp]
\centerline{\framebox{
\begin{minipage}{5.5in}
\begin{tabbing}
01. \ \= $D[v] \leftarrow 0$,  $i \leftarrow \ell(v)$,  $k \leftarrow 0$ \\
02. \> {\sc While} $k < \beta^{5} L$ \\
03. \>  \qquad \= $k \leftarrow k+1$ \\
04. \> \> {\sc If} $E_i(v) \setminus M_{down}(v) = \emptyset$ \\
05. \> \> \qquad  BREAK \\
06. \> \>  Pick an edge $(u, v) \in E_{i}(v) \setminus M_{down}(v)$. \\
07. \> \> $X \leftarrow \text{PIVOT-DOWN}(v, (u, v))$ \\
08. \> \> {\sc If} $X = \text{{\sc True}}$ \\
09. \> \> \qquad  BREAK 
\end{tabbing}
\end{minipage}
}}
\caption{\label{new:fig:fix:down} FIX-DOWN($v$).}
\end{figure}

\subsection{FIX-DOWN$(v)$.} See Figure~\ref{new:fig:fix:down}.
This subroutine is called when a node $v$  with $\state[v] = \Down$ and $\ell(v) > K$ becomes dirty due to an activation. This activation must have decreased the weight $W_v$. See Rule~\ref{rule:dirty:down} and Case (1) of its subsequent justification.  Let $i = \ell(v)$ be the current level of the node $v$. Since $\state[v] = \Down$ and $\ell(y) > K$, we must have $E_i(v) \setminus M_{down}(v) \neq \emptyset$ as per row (2) of Table~\ref{fig:different:states}. 

\smallskip
\noindent {\em The node  $v$ picks an edge $(u, v) \in E_i(v) \setminus  M_{down}(v)$, and down-marks it by calling  PIVOT-DOWN$(v, (u, v))$.}

\smallskip
\noindent
We keep repeating the above step until one of three events occurs: (1) The set $E_i(v) \setminus M_{down}(v)$ becomes empty. (2) We make the $\beta^{5} L$-th call to PIVOT-DOWN$(v, (u, v))$.  (3) We encounter the first call to PIVOT-DOWN$(v, (u, v))$ which leads to a change in the weight $w(u, v)$. We then terminate the subroutine.

We now explain how to select an edge from  $E_i(v) \setminus M_{down}(v)$ in step (06) of Figure~\ref{new:fig:fix:down}. Recall that we maintain  the sets $E_i(v)$ and $M_{down}(v)$ as balanced search trees as per Section~\ref{sub:sec:data:structures}. Specifically, we maintain the elements of $E_i(v)$  in a {\em particular order}. This ordered list  is partitioned into two disjoint blocks: The first block consists of the edges in $E_i(v)\setminus M_{down}(v)$, and the second block consists of the edges in $E_i(v) \cap M_{down}(v)$. During a given iteration of the {\sc While} loop in Figure~\ref{new:fig:fix:down}, we pick an edge $(u, v)$ that comes first  in this ordering of $E_i(v)$ and check if $(u, v) \in M_{down}(v)$. If yes, then we know for sure that $E_i(v) \setminus M_{down}(v) = \emptyset$, and hence we terminate the subroutine. Else if $(u, v) \notin M_{down}(v)$, then $v$ down-marks the edge by calling PIVOT-DOWN$(v, (u,v))$. Now, consider two cases.

\begin{enumerate}
\item The call to PIVOT-DOWN$(v, (u,v))$ is a {\em failure}. It means that the weight and the level of the edge $(u,v)$ do not change during the call. Hence, at the end of the call we get: $(u,v) \in E_i(v)$ and $(u, v) \in M_{down}(v)$. At this point we delete the edge $(u,v)$ from $E_i(v)$. Immediately afterward we again insert the edge $(u, v)$ back  to  $E_i(v)$, but this time $(u, v)$ occupies the last position in the ordering of $E_i(v)$. Hence, the ordering of $E_i(v)$ remains correctly partitioned into two blocks as described above. 

\item The call to PIVOT-DOWN$(v, (u, v))$ is a {\em success}. It means that the weight and the level of the edge $(u,v)$ changes during the call. At the end of the call we get: $(u, v) \notin E_i(v)$ and $(u, v) \in M_{down}(v)$. At this point we terminate the subroutine FIX-DOWN$(v)$. By Lemma~\ref{lm:pivot:down},  an iteration of the {\sc While} loop in Figure~\ref{new:fig:fix:down} takes $O(\log n)$ time.  This implies the lemma below.
\end{enumerate}

\begin{lemma}
\label{lm:fix:down}
The subroutine  FIX-DOWN$(v)$ takes $O(\beta^{5} L \cdot \log n) = O(\log^2 n)$ time, for constant $\beta$.
\end{lemma}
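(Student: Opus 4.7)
The plan is to bound the total runtime by multiplying the number of iterations of the \textsc{While} loop in Figure~\ref{new:fig:fix:down} by the worst-case cost per iteration. By the explicit loop guard $k < \beta^{5} L$, the number of iterations is at most $\beta^{5}L = O(L) = O(\log n)$ for constant $\beta$ (using $L = \lceil \log_\beta n \rceil$ from Eq.~\eqref{eq:beta:K:L}). So the task reduces to showing that each iteration runs in $O(\log n)$ time.

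Per-iteration, the work consists of four pieces: (i) accessing the balanced search trees for $E_i(v)$ and $M_{down}(v)$ to pick a candidate edge and to test the emptiness of $E_i(v)\setminus M_{down}(v)$, (ii) the call to \textsc{Pivot-Down}$(v,(u,v))$ on line~07, (iii) possibly reinserting the failed edge at the tail of the ordering in $E_i(v)$, and (iv) the bookkeeping comparisons in the \textsc{If}/\textsc{Break} statements. For (ii), Lemma~\ref{lm:pivot:down} already gives $O(\log n)$. For (iii), deleting $(u,v)$ from the balanced search tree for $E_i(v)$ and reinserting it at the end of the ordering costs $O(\log n)$, using that $E_i(v)$ is maintained as a balanced search tree (Section~\ref{sub:sec:data:structures}). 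For (iv), the operations are $O(1)$. The only slightly subtle piece is (i): we need to retrieve the first element under the ordering described just after Figure~\ref{new:fig:fix:down}, and we need to test whether that element belongs to $M_{down}(v)$. Both operations are standard on balanced BSTs and take $O(\log n)$; crucially, the ordering invariant (edges in $E_i(v)\setminus M_{down}(v)$ form a prefix, those in $E_i(v)\cap M_{down}(v)$ form a suffix) guarantees that $E_i(v)\setminus M_{down}(v)=\emptyset$ iff the first element of $E_i(v)$ lies in $M_{down}(v)$, so the emptiness check used on line~04 reduces to one minimum query plus one membership query.

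I would then verify that this ordering invariant is preserved across iterations, which is exactly the content of the case analysis immediately following Figure~\ref{new:fig:fix:down}: on a failed call, we explicitly delete $(u,v)$ from $E_i(v)$ and reinsert it at the last position, so it moves from the prefix block into the suffix block (since \textsc{Pivot-Down} leaves $(u,v)\in M_{down}(v)$ by Figure~\ref{new:fig:move:down}, lines 05--06); on a successful call, the edge leaves $E_i(v)$ entirely and the loop terminates, so the invariant need not be re-established. Thus every iteration starts in a state where the prefix/suffix block structure of $E_i(v)$ is intact, and the $O(\log n)$ cost bound for (i) is justified.

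Combining the at most $\beta^{5}L$ iterations with the $O(\log n)$ per-iteration cost, and adding the $O(1)$ initialisation cost on line~01, gives total runtime $O(\beta^{5} L \cdot \log n) = O(\log^{2} n)$ for constant $\beta$. I do not anticipate any real obstacle here: once the ordering invariant on $E_i(v)$ is checked to be preserved, the calculation is a routine sum of $O(\log n)$ costs over at most $O(\log n)$ iterations, and the absorption of the constant $\beta^{5}$ into the $O(\cdot)$ is immediate from $\beta$ being a fixed constant per Eq.~\eqref{eq:beta:K:L}.
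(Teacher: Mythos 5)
Your proof is correct and follows essentially the same approach as the paper: the paper also bounds the iteration count by the explicit loop guard $k<\beta^5 L$ and bounds the per-iteration cost by $O(\log n)$ via Lemma~\ref{lm:pivot:down} together with the prefix/suffix block ordering on $E_i(v)$ described in the text preceding the lemma. You have merely made the per-iteration cost decomposition and the preservation of the ordering invariant slightly more explicit than the paper does, which is fine.
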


\begin{figure}[htbp]
\centerline{\framebox{
\begin{minipage}{5.5in}
\begin{tabbing}
01. \  \= $D[v] \leftarrow 0$, $i \leftarrow \ell(v)$, $k \leftarrow 0$ \\
02. \> {\sc While} $k < \beta^{5}$ \\
03. \>  \ \ \ \ \ \ \ \ \= $k \leftarrow k+1$ \\
04. \> \> {\sc If} $M_{up}(v) = \emptyset$ \\
05. \> \> \qquad  BREAK \\
06. \> \>  Pick an edge $(u, v) \in M_{up}(v)$. \\ 
07. \> \> $X \leftarrow \text{PIVOT-DOWN}(v, (u, v))$ \\
08. \> \> If $X = \text{{\sc True}}$ \\ 
09. \> \> \qquad  BREAK 
\end{tabbing}
\end{minipage}
}}
\caption{\label{new:fig:fix:up:b} FIX-UP-B($v$).}
\end{figure}

\subsection{FIX-UP-B$(v)$.} See Figure~\ref{new:fig:fix:up:b}.
This subroutine is called when a node $v$  with $\state[v] = \Upb$ becomes dirty due to an activation. This activation must have decreased the weight $W_v$. See Rule~\ref{rule:dirty:down} and Case (2) of its subsequent justification.   Since $\state[v] = \Upb$, we must have $M_{up}(v) \neq \emptyset$ as per row (5) of Table~\ref{fig:different:states}. 

\smallskip
\noindent {\em The node $v$ picks an edge $(u, v) \in M_{up}(v)$, and un-marks it by calling  PIVOT-DOWN$(v, (u, v))$.}

\smallskip
\noindent
We keep repeating the above step until one of three events occurs: (1) The set $M_{up}(v)$ becomes empty. (2) We make the $\beta^{5}$-th call to PIVOT-DOWN$(v, (u, v))$.  (3) We encounter the first call to PIVOT-DOWN$(v, (u, v))$ which leads to a change in the weight $w(u, v)$. We then terminate the subroutine. By Lemma~\ref{lm:pivot:down},  each iteration of the {\sc While} loop in Figure~\ref{new:fig:fix:up:b} takes $O(\log n)$ time.  This gives us the following lemma.

\begin{lemma}
\label{lm:fix:up:b}
The subroutine  FIX-UP-B$(v)$ takes $O(\beta^{5} \cdot \log n) = O(\log n)$ time, for constant $\beta$.
\end{lemma}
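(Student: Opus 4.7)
\textbf{Proof proposal for Lemma~\ref{lm:fix:up:b}.} The plan is to bound the runtime by bounding (i) the number of iterations of the {\sc While} loop in Figure~\ref{new:fig:fix:up:b} and (ii) the cost of a single iteration, and then taking the product.

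First I would argue that the {\sc While} loop executes at most $\beta^5$ times. This is immediate from the loop guard: the counter $k$ is initialised to $0$ in step (01), incremented by one at the start of each iteration in step (03), and the loop terminates as soon as $k \geq \beta^5$. Additionally the loop can exit early via either of the two {\sc Break} statements (when $M_{up}(v) = \emptyset$ in step (05), or when the call to PIVOT-DOWN returns {\sc True} in step (09)), so the number of iterations is at most $\beta^5$.

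Next I would bound the cost of a single iteration. The only non-trivial operations are: the emptiness test on $M_{up}(v)$ in step (04); picking an edge from $M_{up}(v)$ in step (06); and the call to $\text{PIVOT-DOWN}(v, (u, v))$ in step (07). Since $M_{up}(v)$ is maintained as a balanced search tree by the data structures in Section~\ref{sub:sec:data:structures}, the emptiness test and the picking of an arbitrary element both take $O(\log n)$ time (in fact $O(1)$ using a root pointer, but $O(\log n)$ suffices). The call to PIVOT-DOWN takes $O(\log n)$ time by Lemma~\ref{lm:pivot:down}. The remaining bookkeeping (incrementing $k$, comparing to $\beta^5$, reading the return value $X$) is $O(1)$. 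Thus each iteration runs in $O(\log n)$ time.

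Combining the two bounds, the total cost of FIX-UP-B$(v)$ is $O(\beta^5 \cdot \log n)$, which equals $O(\log n)$ since $\beta$ is a constant (cf.\ equation~\ref{eq:beta:K:L}). There are no hidden obstacles here: the loop-iteration count is explicit in the code, and the per-iteration bound is inherited directly from Lemma~\ref{lm:pivot:down} together with the standard cost of balanced-search-tree access. The proof is therefore a short two-line argument rather than anything requiring the amortisation or ``failure-counting'' ideas that are needed for the correctness and approximation analyses elsewhere in the paper.
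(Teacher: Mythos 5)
Your proposal is correct and matches the paper's own argument: the loop counter caps the number of iterations at $\beta^5$, and each iteration is dominated by the $O(\log n)$ cost of PIVOT-DOWN from Lemma~\ref{lm:pivot:down} (plus cheap balanced-search-tree access), giving $O(\beta^5 \log n) = O(\log n)$ for constant $\beta$. No difference in approach worth noting.
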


\section{Handling the insertion or deletion of an edge}
\label{sec:insert:delete}

In this section, we explain how our algorithm  handles the insertion/deletion of an edge in the input graph.

\paragraph{Insertion of an edge $(u, v)$.} 
We set $\ell_u(u, v) \leftarrow \ell(u)$, $\ell_v(u, v) \leftarrow \ell(v)$ and $\ell(u, v) \leftarrow \max(\ell_u(u,v), \ell_v(u,v))$. The newly inserted edge gets a weight $w(u, v) \leftarrow \beta^{-\ell(u, v)}$. Hence, each of the node-weights $W_u$ and $W_v$ also increases by $\beta^{-\ell(u, v)}$. This amounts to a natural activation for each of the endpoints $\{u, v\}$.  For every endpoint $x \in \{u, v\}$, we now decide if $x$  should become dirty due to this activation. This decision is taken as per Rules~\ref{rule:dirty:up}~--~\ref{rule:dirty:slack}. We now call the subroutine UPDATE-STATUS$(x)$ for $x \in \{u, v\}$, for reasons explained in the beginning of Section~\ref{sub:sec:update:status}. Finally, we call the subroutine FIX-DIRTY($.$) as described in Figure~\ref{new:fig:dirty}.

\paragraph{Deletion of an edge $(u,v)$.}
 Just before the edge-deletion, its weight was  $w(u, v)$. We first decrease each of the node-weights $W_u, W_v$ by $w(u,v)$. Then we delete all the data structures associated with the edge $(u,v)$. This amounts to a natural activation for each of its endpoints. For every node $x \in \{u, v\}$, we  decide if $x$  should become dirty due to this activation, as per Rules~\ref{rule:dirty:up}~--~\ref{rule:dirty:slack}. At this point, we  call the subroutine UPDATE-STATUS$(x)$ for $x \in \{u, v\}$, for reasons explained in the beginning of Section~\ref{sub:sec:update:status}.  Finally, we call the subroutine FIX-DIRTY($.$) as per Figure~\ref{new:fig:dirty}. 

\begin{figure}[htbp]
\centerline{\framebox{
\begin{minipage}{5.5in}
\begin{tabbing}
   {\sc While} there exists a dirty node $x \in V$: \\
  \qquad  \= FIX-DIRTY-NODE($x$)  \qquad   // See Section~\ref{sec:fix:dirty:node}.
\end{tabbing}
\end{minipage}
}}
\caption{\label{new:fig:dirty} FIX-DIRTY($.$).}
\end{figure}

\paragraph{Two assumptions.} For ease of analysis, we will make two simplifying assumptions. At first glance, these assumptions might seem highly restrictive. But we will explain how the analysis can be extended to the general setting, where these assumptions need not hold, by slightly modifying our algorithm.

\begin{assumption}
\label{assume:dirty}
The insertion or deletion of an edge $(u,v)$ makes at most one of its endpoints dirty. 
\end{assumption}

\noindent {\em Justification.} Consider a scenario where  the insertion or deletion of an edge $(u, v)$ is about to make both its endpoints dirty. Without any loss of generality, suppose that the weight of $v$  increases by $\delta_v$ due to this edge insertion or deletion. Note that $\delta_v$ can also be negative. We  reset the weight $W_v$ to the value it had just before the edge insertion or deletion took place, by setting $W_v \leftarrow W_v - \delta_v$.  In other words,  the node $v$ becomes {\em blind} to the fact that its weight has changed. Clearly, after this simple modification, only the node $u$  becomes dirty. We now go ahead and call the subroutine FIX-DIRTY(.). Starting from the node $u$, this creates a {\em chain} of calls to  FIX-DIRTY-NODE($x$) for different $x \in V$ (see Observation~\ref{ob:fix:dirty:node}). When this chain stops, we go back and update the weight of the other endpoint $v$, by setting $W_v \leftarrow W_v + \delta_v$. So the node $v$ now {\em wakes up} and experiences an activation. If the node $v$ becomes dirty due to this activation, as per Rules~\ref{rule:dirty:up}~--~\ref{rule:dirty:slack}, then we again go ahead and call the subroutine FIX-DIRTY(.). Starting from the node $v$, this creates a second chain of calls to the subroutine FIX-DIRTY-NODE($x$) for different $x \in V$. When this second chain stops, we conclude that we have successively handled the insertion or deletion of the edge $(u, v)$.

\begin{assumption}
\label{assume:weight}
The weight $W_u$ of a node $u$ changes by at most $\beta^{-(\ell(u)+1)}$ due to a natural activation.
\end{assumption}

\noindent {\em Justification.} For any edge $(u, v)$, we have: $\ell(u, v) \geq \ell_u(u, v) \geq \ell(u) -1$, and $w(u, v) = \beta^{-\ell(u,v)}$. So the weight of any edge incident on $u$ is at most $\Delta_u = \beta^{-(\ell(u)-1)}$. Now, suppose that Assumption~\ref{assume:weight} gets violated. Specifically, the weight $W_u$ changes by $\Delta'_u$ due to a natural activation, where $\Delta'_u > \beta^{-(\ell(u)+1)}$. To handle this situation, we fix the node $u$ in $r_u$ {\em rounds}, where $r_u = \Delta'_u/\beta^{-(\ell(u)+1)} \leq \Delta_u/\beta^{-(\ell(u)+1)} \leq \beta^2$. In each round, we change the weight $W_u$ by  $\beta^{-(\ell(u)+1)}$ and call the subroutine FIX-DIRTY(.). This way the node becomes {\em oblivious} to the fact that Assumption~\ref{assume:weight} gets violated. The update time increases by a factor of $r_u$, which is $O(1)$ for constant $\beta$.

\paragraph{Analysis of our algorithm.}
Just before the insertion or deletion of  the edge $(u,v)$, every node in the graph is clean, and every node satisfies the constraints corresponding to its current state as specified by Table~\ref{fig:different:states}. By Assumption~\ref{assume:dirty}, at most one endpoint $x \in \{ u, v \}$ becomes dirty due to this edge insertion/deletion. Hence, at most one node is dirty in the beginning of the call to the subroutine FIX-DIRTY(.). By Observation~\ref{ob:fix:dirty:node}, we get a {\em chain} of calls to the subroutine FIX-DIRTY-NODE$(y)$ for $y \in V$. Each call to FIX-DIRTY-NODE$(y)$ makes at most one neighbour of $y$ dirty, which is fixed at the next iteration of the {\sc While} loop in Figure~\ref{new:fig:dirty}. Thus, at every point in time there is at most one dirty node in the entire graph. We now prove two theorems.

\begin{theorem}
\label{th:no:failure}
While handling a sequence of edges insertions and deletions, our algorithm never HALTS due  to a call to the subroutine UPDATE-STATUS$(y)$.
\end{theorem}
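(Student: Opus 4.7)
My plan is to prove Theorem~\ref{th:no:failure} by induction on the sequence of edge insertions/deletions together with the chains of FIX-DIRTY-NODE calls they trigger. The inductive hypothesis asserts that at every invocation of UPDATE-STATUS$(y)$ the node $y$ is in a HALT-free configuration: if $D[y]=1$ then $y$ is fit in its current state (Table~\ref{fig:different:states}), while if $D[y]=0$ then $y$ is either fit in some state or unfit exactly in the ways that Cases~2-a / 2-b are designed to correct. Alongside this I preserve Invariants~\ref{inv:shadow:level}--\ref{main:inv:shadow:level}, which are explicitly maintained by the MOVE-UP / MOVE-DOWN bookkeeping.

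I first dispatch the routine call sites. Every UPDATE-STATUS call on the non-pivoting endpoint $u$ inside PIVOT-UP / PIVOT-DOWN, and every UPDATE-STATUS call after a natural activation, is preceded by a weight change of at most one edge weight, bounded by $\beta^{-(K-1)}$ via Corollary~\ref{cor:inv:shadow:level} (respectively $\beta^{-(K+1)}$ via Assumption~\ref{assume:weight}). Since $K=20$ while $\beta\ge 5$, this is many orders of magnitude smaller than $1/\beta$, the width of each of $I_2,I_3,I_4$. Combined with the buffer invariant described below, the activation cannot push $W_u$ out of the interval corresponding to $u$'s current state; Rules~\ref{rule:dirty:up}--\ref{rule:dirty:slack} and Corollary~\ref{cor:activation} then make the post-activation $u$ either fit in its unchanged state (when $D[u]=1$) or ready for relabelling in Case~2 (when $D[u]=0$).

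The nontrivial UPDATE-STATUS calls are those at the end of FIX-DIRTY-NODE$(v)$, where $D[v]$ has been reset to $0$ at the top of the FIX-* subroutine. FIX-UP is easiest: its first PIVOT-UP always succeeds (the chosen $(u,v)\in E_i(v)$ has $\ell_u(u,v)\le i$), so $W_v$ drops by exactly $\beta^{-i}(1-1/\beta)<1/\beta$ and $v$ ends fit in $\Up$ or in $\Idle$/$\Upb$/$\Downb$ after crossing into $I_3$, with Case~2-a invoked when $E_i(v)$ just emptied. For FIX-UP-B and FIX-DOWN-B, every failed un-mark strictly removes one edge from $M_{up}(v)$ / $M_{down}(v)$ while leaving all weights unchanged (MOVE-UP / MOVE-DOWN steps (03)--(06)), so $W_v$ never leaves $I_3$; the $\beta^5$ cap is harmless --- reaching it only means $v$ is still clean and fit in $\Downb$ / $\Upb$ at termination.

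The principal obstacle is FIX-DOWN. The crucial observation is that a failed PIVOT-DOWN inside FIX-DOWN alters no weight: it merely moves one edge of $E_i(v)\setminus M_{down}(v)$ into $E_i(v)\cap M_{down}(v)$, via the ordering trick following Figure~\ref{new:fig:fix:down}. Hence $W_v$ is constant across failures, and the single possible success increases $W_v$ by $\beta^{-i}(\beta-1)\ge \beta^{-(i+1)}$, which dominates the $\le\beta^{-(i+1)}$ drop from the triggering activation; so $W_v\ge f(\beta)$ throughout FIX-DOWN. In each of the three exit scenarios --- success, exhaustion of $E_i(v)\setminus M_{down}(v)$, or reaching $k=\beta^5 L$ --- the node $v$ is therefore either fit in $\Down$ or unfit only via the last constraint of row~(2), which is precisely the configuration Case~2-b is designed to handle. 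The technical heart of the formal proof is establishing the buffer invariant --- that every clean node's weight is strictly interior to its interval by margin at least twice the largest possible single-activation change --- by a global potential/charging argument against the number of natural activations between successive level changes of each node, in the spirit of the ``Failures'' paragraph of Section~\ref{sec:overview}.
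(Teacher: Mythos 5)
Your proposal has a genuine gap, concentrated in exactly the cases that matter. The ``buffer invariant'' you rely on --- that every clean node's weight is interior to its interval by at least twice the largest single-activation change --- is false and cannot be established: a node that has just transitioned into state $\Up$ has $W_y$ sitting essentially at the boundary $1-1/\beta$, and a node in $\Up$ may have $W_y$ arbitrarily close to $1$. The algorithm maintains no such buffer; what it maintains instead is the promise that a node's \emph{corrective capacity} (the sizes of $M_{up}(x)$, $M_{down}(x)$, $E_{\ell(x)}(x)\setminus M_{down}(x)$) cannot be exhausted before the weight would escape the dangerous interval. Relatedly, your treatment of FIX-UP-B, FIX-DOWN-B and FIX-DOWN analyses each call in isolation (``the $\beta^5$ cap is harmless'', ``$W_v\ge f(\beta)$ throughout FIX-DOWN''), but a single call is never the problem. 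The HALT scenario is cumulative: a long sequence of activations, each decreasing $W_x$ by up to $\beta^{-\ell(x)}-\beta^{-\ell(x)-1}$ and each followed by a FIX-UP-B call in which \emph{every} un-marking fails, eventually drives $W_x$ below $1-2/\beta$ while $M_{up}(x)\neq\emptyset$ --- at which point $x$ is unfit in every state and Case 2-c fires. Nothing in your per-call analysis excludes this.

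The missing idea is the counting argument the paper runs over the maximal interval $[t_0,t_1]$ during which $\state[x]$ is constant. For $S=\Upb$: the weight must fall by at least $1/\beta-1/\beta^K$ over $[t_0,t_1]$; every uncompensated weight-decreasing activation loses at most $\Delta=\beta^{-\ell(x)}-\beta^{-\ell(x)-1}$, so there are at least $\beta^{\ell(x)-2}$ FIX-UP-B calls with zero successes, each of which removes $\beta^5$ edges from $M_{up}(x)$; hence $|M_{up}(x)|\ge\beta^{\ell(x)+3}$ at $t_0$, contradicting $W_x<1$ (each up-marked edge has weight $\beta^{-\ell(x)-1}$). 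The case $S=\Downb$ is symmetric, and $S=\Down$ needs the analogous per-level accounting: each drop of $\beta^{-(\ell(x)-1)}$ in $W_x$ forces $\beta^5L$ new down-markings, $|E_{\ell(x)}(x)|\le\beta^{\ell(x)}$, so the node descends a level after losing at most $1/(\beta^4L)$ of weight and reaches level $K$ long before $W_x$ can fall from $1-2/\beta$ to $f(\beta)=1-3/\beta$. You do gesture at ``a global potential/charging argument \ldots in the spirit of the Failures paragraph,'' which is indeed the right spirit, but you defer it entirely and attach it to an invariant that does not hold; as written, the proposal does not rule out the only configurations in which UPDATE-STATUS can actually HALT.
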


The proof of Theorems~\ref{th:no:failure}  appears in Section~\ref{sec:th:no:failure}. Recall the discussion in the first paragraph of Section~\ref{sub:sec:update:status}. To summarise that discussion, Theorem~\ref{th:no:failure} ensures that throughout the duration of our algorithm, every node satisfies the constraints corresponding to its current state as per Table~\ref{fig:different:states}. Hence, by Lemma~\ref{lm:different:states}, conditions (2) and (3) of Definition~\ref{def:structure} continue to remain satisfied all the time. This observation, along with Corollary~\ref{cor:inv:shadow:level}, implies that our algorithm successfully maintains a nice-partition as per Definition~\ref{def:structure}. 

In Theorem~\ref{th:update:time}, we  bound the worst-case update time of our algorithm. The proof of this theorem appears in Section~\ref{new:sec:th:update:time}. Intuitively, we  show that after four consecutive calls to FIX-DIRTY-NODE$(x)$ in the {\sc While} loop of Figure~\ref{new:fig:dirty}, the value of $\ell(x)$ decreases by at least one. Since $\ell(x) \in [K,L]$ for every node $x \in V$, there can be at most $4(L-K+1) = O(\log n)$ iterations of the {\sc While} loop of Figure~\ref{new:fig:dirty}. By Lemma~\ref{lm:fix:dirty:node}, each iteration of this {\sc While} loop takes $O(\log^2 n)$ time. Accordingly, the subroutine FIX-DIRTY(.) as described in Figure~\ref{new:fig:dirty} takes $O(\log^3 n)$ time, and this  gives an upper bound on the worst-case update time of our algorithm.

\begin{theorem}
\label{th:update:time}
Our algorithm handles an edge insertion or deletion  in $O(\log^3 n)$ worst-case time.
\end{theorem}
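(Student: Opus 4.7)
The plan is to bound the total cost of the While loop in Figure~\ref{new:fig:dirty} as (cost per iteration) $\times$ (number of iterations). By Lemma~\ref{lm:fix:dirty:node}, each iteration takes $O(\log^2 n)$. By Assumption~\ref{assume:dirty} only one endpoint of the inserted or deleted edge becomes dirty, and by Observation~\ref{ob:fix:dirty:node} each execution of FIX-DIRTY-NODE leaves at most one new dirty node behind. Thus the dirty nodes processed by the loop form a chain $x_1, x_2, \ldots, x_C$ along edges of $G$, where $x_{i+1}$ is the unique neighbour of $x_i$ that becomes dirty during FIX-DIRTY-NODE$(x_i)$. It therefore suffices to show $C = O(\log n)$.

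To this end, I would establish the key claim flagged in the paragraph preceding the theorem: there is an absolute constant $c$ (the overview suggests $c = 4$) such that $\ell(x_{i+c}) < \ell(x_i)$ whenever $i + c \leq C$. Since $\ell(x) \in [K, L]$ with $L - K + 1 = O(\log n)$ by equation~\ref{eq:beta:K:L}, the claim then yields $C \leq c \cdot (L - K + 1) = O(\log n)$, and combining with the per-iteration bound gives the desired $O(\log n) \cdot O(\log^2 n) = O(\log^3 n)$ worst-case update time.

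The main obstacle is proving the key claim, and my approach is a case analysis on $\state[x_i]$ at the moment FIX-DIRTY-NODE$(x_i)$ begins. By Rule~\ref{rule:dirty:slack} and Corollary~\ref{cor:activation}, this state must be one of $\Up$, $\Downb$, $\Down$ (with $\ell(x_i) > K$), or $\Upb$, triggering FIX-UP, FIX-DOWN-B, FIX-DOWN, or FIX-UP-B respectively. In each sub-case, $x_{i+1}$ becomes dirty only through a successful PIVOT-UP or PIVOT-DOWN call on the edge $(x_i, x_{i+1})$; translating the shadow-level condition that forced the weight of $(x_i, x_{i+1})$ to actually change (via Invariant~\ref{inv:shadow:level}) yields an inequality of the form $\ell(x_{i+1}) \leq \ell(x_i)$. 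For PIVOT-DOWN one additionally uses that both ``undo'' branches at lines (04)--(06) and (17)--(19) of Figure~\ref{new:fig:pivot:down} had to fail, which forces $\state[x_{i+1}]$ to lie in a restricted subset of states. Feeding these constraints into a small transition diagram on $(\ell(x_i),\state[x_i])$ pairs, I would argue that the only transitions preserving the level correspond to a handful of specific state pairs (such as $\Up \to \Down$ or $\Downb \to \Upb$), and that such level-preserving transitions cannot occur more than a bounded number of times in a row before a strict level drop is forced. Formalising this transition-diagram analysis is the delicate part of the argument; the remaining ingredients (per-iteration runtime and chain structure) are essentially immediate from the material already developed.
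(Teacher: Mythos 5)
Your outer structure matches the paper's: chain of calls to FIX-DIRTY-NODE, $O(\log^2 n)$ per call from Lemma~\ref{lm:fix:dirty:node}, and the goal of bounding the chain length by $O(\log n)$ via a level-drop argument. But the specific claim you rely on --- that ``translating the shadow-level condition\ldots yields an inequality of the form $\ell(x_{i+1})\leq\ell(x_i)$'' --- is false, and this is exactly the subtlety the paper's proof is organised around.

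Concretely, suppose $\state[x_i]=\Up$ with $\ell(x_i)=j$, and FIX-UP$(x_i)$ up-marks an edge $(x_i,x_{i+1})\in E_j(x_i)$, raising $\ell_{x_i}(x_i,x_{i+1})$ from $j$ to $j+1$. For the weight to change we only need $\ell_{x_{i+1}}(x_i,x_{i+1})\leq j$, and Invariant~\ref{inv:shadow:level} then only gives $\ell(x_{i+1})\leq j+1$. The case $\ell(x_{i+1})=j+1$ with $(x_i,x_{i+1})\in M_{down}(x_{i+1})$ is perfectly possible (this is Case 2 in the paper's Claim~\ref{new:lm:pivot:up}), so the chain can \emph{climb} a level. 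Your transition diagram on $(\ell,\state)$ pairs would eventually have to cope with this, but as sketched the monotonicity premise is wrong, and you explicitly defer the hard part (``Formalising this transition-diagram analysis is the delicate part'') rather than resolving it.

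The paper's fix is to track not $\ell(x)$ but the \emph{down-level}
$\ell^*(x)$, defined as $\ell(x)-1$ if $M_{down}(x)\neq\emptyset$ and $\ell(x)$ otherwise (equation~\ref{eq:down-level}). Since $\state[x]$ determines whether $M_{down}(x)$ is empty, $\ell^*$ is precisely the function of your $(\ell,\state)$ pairs that behaves monotonically. Lemma~\ref{cor:new:lm:pivot:up} shows that a step from a node in state $\Up$ or $\Downb$ has $\ell^*(x_{i+1})\leq\ell^*(x_i)$ \emph{and} forces $\state[x_{i+1}]\in\{\Upb,\Down\}$; Lemma~\ref{cor:new:lm:pivot:down} shows that a step from a node in state $\Down$ or $\Upb$ strictly decreases $\ell^*$. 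Chaining these gives a strict drop in $\ell^*$ every two iterations (Lemma~\ref{lm:update:time}), hence $O(\log n)$ iterations. Your proposal is missing this potential function and the two lemmas about it, which is the actual mathematical content of the theorem's proof; the rest of your outline (chain structure, per-iteration cost, $[K,L]$ range of levels) is correct and matches the paper.
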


The main result of this paper (Theorem~\ref{th:main})  follows from Theorems~\ref{th:no:failure} and~\ref{th:update:time}.

\subsection{Recap of our algorithm.} 
\label{sub:sec:recap}
During the course of our algorithm, the weight of a node $x$ can change only under three scenarios: 
\begin{itemize}
\item (1) An edge  incident to $x$ gets inserted or deleted. 
\item (2) A neighbour $y$ of $x$ makes a call to PIVOT-UP$(y, (x, y))$ or PIVOT-DOWN$(y, (x, y))$ and the call returns {\sc True}.  In this scenario, the weight-change occurs only during the call to MOVE-UP$(y, (x,y))$ or MOVE-DOWN$(y, (x, y))$. 
\item (3) The node $x$ makes a call to FIX-DIRTY-NODE$(x)$. 
\end{itemize}
\noindent Note that scenarios (2) and (3) are {\em symmetric}: a call is made to the subroutine PIVOT-UP$(y, (x, y))$ or PIVOT-DOWN$(y, (x, y))$ only when  $y$ itself is executing FIX-DIRTY-NODE$(y)$. Scenarios (1) and (2) respectively correspond to a natural and an induced activation of $x$. A node $x$ can become dirty only due to a natural or an induced activation, as per Rules~\ref{rule:dirty:up}--~\ref{rule:dirty:slack}. Scenario (3) is the {\em response} of  $x$ after it becomes dirty. At the end of the call to FIX-DIRTY-NODE$(x)$ in scenario (3), the node $x$ becomes clean again. 

During the course of our algorithm, the shadow-level $\ell_y(x, y)$ of an edge $(x, y)$ increases iff a call is made to  MOVE-UP$(y, (x,y))$, and decreases iff a call is made to  MOVE-DOWN$(y, (x,y))$.     These two subroutines are defines in Sections~\ref{sub:sec:pivot:up} and~\ref{sub:sec:pivot:down}. A call to MOVE-DOWN$(y, (x,y))$ is made only if we are executing the subroutine PIVOT-DOWN$(y, (x, y))$. In contrast, a call to MOVE-UP$(y, (x,y))$ is made only if we are executing either the subroutine PIVOT-UP$(y, (x, y))$ or the subroutine PIVOT-DOWN$(x,(x, y))$.

\section{Proof of Theorem~\ref{th:no:failure}}
\label{sec:th:no:failure}
Let $\mathcal{U} = \{ \Up, \Down, \Slack, \Idle, \Upb, \Downb \}$ be the set of all possible states of a node (see Table~\ref{fig:different:states}). For the rest of this section, we assume that our algorithm HALTS at a time-instant  (say) $t_1$ due to a call made to   UPDATE-STATUS$(x)$ for some node  $x \in V$. Suppose that $\state[x] = S$ at time $t_1$. To prove Theorem~\ref{th:no:failure},  it suffices to derive a contradiction for all $S \in \mathcal{U}$. These contradictions are derived in Sections~\ref{new:sec:first}~--~\ref{new:sec:last}. 

 Let $t_0 < t_1$  be the unique time-instant such that: (1)  $\state[x] = S$ throughout the time-interval $[t_0, t_1]$ and  (2) $\state[x] \neq S$ just before time-instant $t_0$.  During the time-interval $[t_0, t_1]$, the node-weight $W_x$ can change due to three types of events: We  classify these types as  A,  B and  C, and specify each of them below.

\begin{itemize}
\item {\em Type A:  An activation of $x$ increases the weight $W_x$.}

\item {\em Type B:  An activation of $x$ decreases the weight $W_x$.}

\item {\em Type C: We call the subroutine FIX-DIRTY-NODE$(x)$.}
\end{itemize}

\smallskip
\noindent  Rules~\ref{rule:dirty:up}~--~\ref{rule:dirty:slack} dictate whether or not the node $x$  becomes dirty after an event of Type A or B.
 A Type C event occurs  when $x$ becomes dirty due to  a Type A or Type B event.  The node $x$ becomes clean again before the call to FIX-DIRTY-NODE$(x)$ ends.

\subsection{Deriving a contradiction for $S = \Upb$}
\label{new:sec:first}
The only way the node $x$ can change its level is if we execute the steps in Case 2-a or 2-b during a call to UPDATE-STATUS$(x)$. This situation can never occur during the time-interval $[t_0, t_1]$, throughout which we have $\state[x] = S = \Upb$. Thus, the node $x$ stays at the same level throughout the time-interval $[t_0, t_1]$. 

In Claims~\ref{new:cl:halt:upb:1} and~\ref{new:cl:halt:upb:2}, we respectively bound the weight $W_x$ at time-instants $t_0$ and $t_1$. In Corollary~\ref{cor:halt:upb}, we use these two claims to bound the change in the weight $W_x$ during the time-interval $[t_0, t_1]$.

\begin{claim}
\label{new:cl:halt:upb:1}
 $W_x \geq 1 - 1/\beta - 1/\beta^K$ at time $t_0$.
\end{claim}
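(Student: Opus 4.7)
The plan is to show that the transition into $\Upb$ at time $t_0$ must occur directly from state $\Up$, and then to bound by $1/\beta^K$ the single weight decrement that triggers that transition.

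First, I would identify every place in the algorithm where an edge can be added to $M_{up}(x)$. Scanning the subroutines in Section~\ref{sec:subroutines}, such an insertion happens only inside MOVE-UP$(x,(u,x))$ (Figure~\ref{new:fig:move:up}, line~06), and a call to MOVE-UP$(x, \cdot)$ arises only from (i) FIX-UP$(x)$ via PIVOT-UP$(x,\cdot)$, which is reached only when $\state[x] = \Up$, or (ii) the ``undo'' step at Case~1-a of PIVOT-DOWN$(v,(v,x))$, which is guarded by $\state[x] = \Up$ (Figure~\ref{new:fig:pivot:down}, lines~02--05). In FIX-DOWN-B$(x)$ the PIVOT-UP call operates on an edge already in $M_{down}(x)$, so MOVE-UP removes it from $M_{down}(x)$ rather than inserting into $M_{up}(x)$; and Case~2-a of PIVOT-DOWN handles an edge in $M_{down}(x)$ analogously. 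Thus $M_{up}(x)$ can grow only while $\state[x] = \Up$.

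Next, being fit in $\Upb$ requires $M_{up}(x) \neq \emptyset$ (row~5 of Table~\ref{fig:different:states}), while each of the remaining candidates $\Down$, $\Slack$, $\Idle$, $\Downb$ forces $M_{up}(x) = \emptyset$. Combining this with the previous paragraph, I would conclude that the state $S'$ of $x$ immediately before $t_0$ must be $\Up$: otherwise $M_{up}(x) = \emptyset$ just before the triggering event, and since $x$ is not in $\Up$ while that event fires, the event cannot enlarge $M_{up}(x)$, contradicting the $\Upb$ fitness at $t_0$. Row~1 of Table~\ref{fig:different:states} then yields $W_x \geq 1 - 1/\beta$ just before the triggering event.

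Finally, I would appeal to the observation in Section~\ref{sub:sec:node:state} that a single natural or induced activation of a node changes its weight by at most $1/\beta^K$, and note that an up-marking executed during a terminating FIX-UP$(x)$ likewise shifts $W_x$ by at most one edge weight, which is bounded by the same quantity. Since the triggering event must strictly decrease $W_x$ (otherwise $W_x \geq 1-1/\beta$ would persist and $x$ would remain fit in $\Up$ rather than $\Upb$), the bound $W_x \geq (1-1/\beta) - 1/\beta^K$ at time $t_0$ follows. The main obstacle is the first step: rigorously enumerating every edit to $M_{up}(x)$ across the mutually recursive PIVOT-UP/PIVOT-DOWN/MOVE-UP/MOVE-DOWN/FIX-$\ast$ interactions, to confirm that the only two routes of growth are the ones listed.
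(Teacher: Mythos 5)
Your proposal is correct and follows essentially the same route as the paper's proof: establish that the transition into $\Upb$ at time $t_0$ can only come from state $\Up$, invoke row (1) of Table~\ref{fig:different:states} to get $W_x \geq 1-1/\beta$ just before the transition, and bound the weight change of that single step by $\beta^{-K}$. The paper simply cites the discussion of Rule~\ref{rule:dirty:up} for the first step, whereas you spell out the enumeration of where $M_{up}(x)$ can grow and explicitly cover the transition occurring via the UPDATE-STATUS call after a FIX-UP; this extra detail is consistent with (and slightly more careful than) the paper's terse argument.
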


\begin{proof}
The node $x$ undergoes an activation at time $t_0$ which changes its state to $\Upb$. As per the discussion in Section~\ref{sub:sec:dirty:node}, the only way this can happen is if $\state[x] = \Up$ just before time $t_0$ and the activation at time $t_0$ decreases  $W_x$ (see Case 1 in the justification for Rule~\ref{rule:dirty:up}). Thus, 
row (1) in Table~\ref{fig:different:states} gives us: $W_x \geq 1 - 1/\beta$ just before time $t_0$.   Since the weight of an edge is at most $\beta^{-K}$, the activation of   $x$ at time $t_0$ changes  $W_x$ by at most $\beta^{-K}$. So  we get: $W_x \geq 1 - 1/\beta - 1/\beta^K$  after the activation of $x$ at time $t_0$.
\end{proof}

\begin{claim}
\label{new:cl:halt:upb:2}
$W_x < 1 - 2/\beta$ and $M_{up}(x) \neq \emptyset$ at time $t_1$.
\end{claim}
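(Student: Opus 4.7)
The plan is to deduce the two conjuncts from invariants maintained throughout $[t_0, t_1]$ combined with the HALT condition at $t_1$.

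First, I would establish two side-invariants at $t_1$. (i) $M_{down}(x) = \emptyset$: the transition into $\Upb$ at $t_0$ forces $M_{down}(x) = \emptyset$ by the state's definition, and during $[t_0, t_1]$ the only action that can insert into $M_{down}(x)$ is $x$'s own PIVOT-DOWN from inside FIX-DOWN; since $\state[x] = \Upb$, $x$ only runs FIX-UP-B, whose PIVOT-DOWN calls target edges already in $M_{up}(x)$ and therefore, by Figure~\ref{new:fig:move:down} lines 3--4, only shrink $M_{up}(x)$ without ever adding to $M_{down}(x)$. (ii) $W_x < 1 - 1/\beta$: this is forced by the $\Upb$ constraint at every clean moment, persists through a dirty-making activation (which must be a weight-decrease by Rule~\ref{rule:dirty:down}, Case 2), and if $x$ were clean at $t_1$ with $W_x \geq 1 - 1/\beta$ then $x$ would be fit in $\Up$ or UPDATE-STATUS would invoke Case 2-a, either way avoiding HALT.

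Next, I would prove $M_{up}(x) \neq \emptyset$ at $t_1$ by contradiction. Suppose $M_{up}(x) = \emptyset$ at $t_1$. Combined with the above invariants, in the clean case $x$ is fit in $\Idle$, $\Down$ (possibly via Case 2-b), or $\Slack$---depending on the interval $W_x$ falls into---which precludes HALT. The sub-scenario $W_x < f(\beta)$ with $\ell(x) > K$ is excluded by bounding cumulative weight drop across $[t_0, t_1]$ using Assumption~\ref{assume:weight} and the fact that FIX-UP-B only ever raises $W_x$. In the dirty case, where UPDATE-STATUS only checks state $\Upb$ and HALTs on unfit, I would backtrack to the earliest time $t' \in (t_0, t_1]$ at which $M_{up}(x)$ became empty: $M_{up}(x)$ shrinks only through $x$'s own PIVOT-DOWN in FIX-UP-B or through a neighbor's MOVE-UP on an incident edge, and in each case the immediately subsequent UPDATE-STATUS$(x)$ call would find $x$ fit in some state different from $\Upb$, contradicting persistence of state through $[t_0, t_1]$.

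Finally, with $M_{up}(x) \neq \emptyset$, $M_{down}(x) = \emptyset$, and $W_x < 1 - 1/\beta$ established at $t_1$, the only way $x$ can be unfit in $\Upb$---which is necessary for HALT regardless of whether $D[x]$ equals $0$ or $1$---is $W_x < 1 - 2/\beta$. This delivers the second conjunct of the claim.

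The main obstacle is the dirty sub-case of the $M_{up}(x) \neq \emptyset$ argument: enumerating every way the last edge of $M_{up}(x)$ can be removed (an $x$-initiated PIVOT-DOWN in FIX-UP-B or a neighbor's MOVE-UP via lines 7--9 of Figure~\ref{new:fig:move:up}) and confirming that in each event the subsequent UPDATE-STATUS$(x)$ call either transitions $x$ out of $\Upb$ or cannot avoid doing so without contradicting the hypothesis that state remains $\Upb$ throughout $[t_0, t_1]$. This bookkeeping must interact delicately with the weight-drop analysis via Assumption~\ref{assume:weight}, since some of these events simultaneously decrease $W_x$.
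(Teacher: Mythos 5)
Your overall skeleton (persist the row-(5) constraints of $\Upb$ across $[t_0,t_1]$, then argue that the HALT forces $W_x<1-2/\beta$) is in the spirit of the paper's, and your final step --- unfitness in $\Upb$ together with $M_{up}(x)\neq\emptyset$, $M_{down}(x)=\emptyset$ and $W_x<1-1/\beta$ forces $W_x<1-2/\beta$ --- is correct. But you take a much longer route than the paper, and the hardest step of your plan does not close. The paper's proof is entirely local to the final activation $\mathbf{a}^*$: since $\state[x]=\Upb$ at the last check before $\mathbf{a}^*$, row (5) of Table~\ref{fig:different:states} already gives $M_{up}(x)\neq\emptyset$, $M_{down}(x)=\emptyset$ and $1-2/\beta\le W_x<1-1/\beta$ just before $\mathbf{a}^*$; the activation inserts nothing into $M_{up}(x)$ or $M_{down}(x)$ and moves $W_x$ by at most $\beta^{-K}$, which pins down the only way the HALT can occur. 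In particular, your detour through ``bounding the cumulative weight drop across $[t_0,t_1]$'' to exclude $W_x<f(\beta)$ is both unnecessary and unavailable: no a priori cumulative bound exists at this point (establishing that a large cumulative drop is impossible is the content of all of Section~\ref{new:sec:first}, and that argument relies on this very claim), whereas the single-activation bound $\beta^{-K}\ll 1/\beta$ applied to a weight that was at least $1-2/\beta$ at the last clean check already keeps $W_x$ above $f(\beta)=1-3/\beta$.

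The genuine gap is in your dirty sub-case for $M_{up}(x)\neq\emptyset$. You enumerate only two ways $M_{up}(x)$ can shrink ($x$'s own PIVOT-DOWN inside FIX-UP-B, and a neighbour's MOVE-UP via steps (07)--(09) of Figure~\ref{new:fig:move:up}) and assert that the UPDATE-STATUS$(x)$ call immediately following the emptying event would re-fit $x$ into a state other than $\Upb$. You omit a third mechanism --- deletion from the graph of an edge belonging to $M_{up}(x)$ --- and, more importantly, the ``backtrack to the earliest emptying time'' argument fails precisely when the emptying event is the final activation itself and simultaneously decreases $W_x$ (as an edge deletion does): then $x$ becomes dirty by Rule~\ref{rule:dirty:down}, the very next UPDATE-STATUS$(x)$ is the call at $t_1$, and Case 1 of that subroutine HALTS on unfitness in the \emph{current} state $\Upb$ rather than moving $x$ to another state, so there is no earlier clean check from which to derive your contradiction. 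As written, your plan therefore does not establish $M_{up}(x)\neq\emptyset$ at $t_1$ in the dirty case; you would need a separate argument about what a single activation can do to $M_{up}(x)$, which is exactly the point at which the paper's proof anchors itself.
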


\begin{proof}
$\state[x] = \Upb$ throughout the time-interval $[t_0, t_1]$. Just before time $t_1$, the node $x$ undergoes an activation, say, {\bf a*}. Subsequent to the activation {\bf a*}, our algorithm HALTS during a call to  UPDATE-STATUS$(x)$ at time $t_1$. From row (5) of Table~\ref{fig:different:states}, we get: $M_{down}(x) = \emptyset$, $M_{up}(x) \neq \emptyset$ and $1-2/\beta \leq W_x < 1-1/\beta$ just before the activation {\bf a*}. No edge gets inserted into the sets $M_{up}(x)$ and $M_{down}(x)$ during the activation {\bf a*}. Since the algorithm HALTS at time $t_1$, the activation {\bf a*} must have changed the weight $W_x$ in such a way that the node $x$ violates the constraints for every state as defined in Table~\ref{fig:different:states}. This can happen only if $W_x < 1 - 2/\beta$ and $M_{up}(x) \neq \emptyset$ at time $t_1$.
\end{proof}

\begin{corollary}
\label{cor:halt:upb}
During the interval $[t_0, t_1]$, the node-weight $W_x$  decreases by at least $1/\beta - 1/\beta^K$.
\end{corollary}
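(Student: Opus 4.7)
The plan is to combine Claims~\ref{new:cl:halt:upb:1} and~\ref{new:cl:halt:upb:2} by a one-line arithmetic subtraction. Claim~\ref{new:cl:halt:upb:1} gives a lower bound on $W_x$ at time $t_0$, namely $W_x \geq 1 - 1/\beta - 1/\beta^K$, and Claim~\ref{new:cl:halt:upb:2} gives a strict upper bound on $W_x$ at time $t_1$, namely $W_x < 1 - 2/\beta$. Subtracting these, the decrease in $W_x$ over the interval $[t_0,t_1]$ is at least
\[
\left(1 - \frac{1}{\beta} - \frac{1}{\beta^K}\right) - \left(1 - \frac{2}{\beta}\right) \;=\; \frac{1}{\beta} - \frac{1}{\beta^K},
\]
which is exactly the claimed bound.

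No additional tools, no case analysis and no appeal to the subroutine structure are needed: the two preceding claims already do all the work, since the corollary's statement is purely about the difference of the weights at the two endpoints of the interval. The only subtle point worth remarking on explicitly is that we may legitimately take the difference of the two bounds pointwise in time even though the weight can fluctuate (increase or decrease) during $[t_0, t_1]$ via events of Types~A, B, and~C; the ``decrease'' referred to in the corollary is the net change $W_x(t_0) - W_x(t_1)$ rather than any monotone quantity. Since there is no real obstacle here, the corollary is essentially an immediate bookkeeping consequence of the two claims preceding it, and its role is to package this numeric gap so that it can be compared, in the contradiction-derivation that follows, against the total amount by which $W_x$ can actually drop while $x$ remains in state $\Upb$.
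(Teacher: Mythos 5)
Your proof is correct and is exactly the paper's argument: the corollary is obtained by subtracting the upper bound on $W_x$ at time $t_1$ from Claim~\ref{new:cl:halt:upb:2} from the lower bound at time $t_0$ from Claim~\ref{new:cl:halt:upb:1}. The paper's own proof is the one-liner ``Follows from Claims~\ref{new:cl:halt:upb:1} and~\ref{new:cl:halt:upb:2},'' so your write-up merely makes the same arithmetic explicit.
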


\begin{proof}
Follows from Claims~\ref{new:cl:halt:upb:1} and~\ref{new:cl:halt:upb:2}.
\end{proof}

\begin{claim}
\label{new:cl:halt:upb:dirty}
An event of Type A does not make the node $x$ dirty. An event of Type B makes the node $x$ dirty.
\end{claim}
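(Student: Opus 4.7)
The plan is to observe that this claim follows essentially by direct appeal to Rule~\ref{rule:dirty:down} applied to the state $\Upb$. By definition of the interval $[t_0,t_1]$, we have $\state[x] = \Upb$ throughout, and in particular at the moment immediately preceding every Type A or Type B event that occurs in this interval. So I would begin by recalling Rule~\ref{rule:dirty:down}, which says that a node in state $\Upb$ becomes dirty after an activation that decreases its weight, and explicitly does not become dirty after an activation that increases its weight.

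From here the claim splits into its two halves. For a Type A event, the activation of $x$ increases $W_x$; since $\state[x] = \Upb$ at that moment, Rule~\ref{rule:dirty:down} (the converse clause) says $x$ does not become dirty. For a Type B event, the activation of $x$ decreases $W_x$; since $\state[x] = \Upb$ at that moment, Rule~\ref{rule:dirty:down} says $x$ does become dirty. Each direction is thus a one-line invocation of the rule.

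The only thing I need to verify, and the only place where something could conceivably go wrong, is that the state of $x$ really is $\Upb$ at the moment the activation takes place, as opposed to only right before or right after. But this is already guaranteed by the setup: $[t_0, t_1]$ is the maximal interval on which $\state[x] = S = \Upb$, and any Type A or Type B event during $[t_0, t_1]$ happens while $\state[x] = \Upb$. Moreover, by Corollary~\ref{cor:activation}, an activation that makes $x$ dirty does not change its state, so the state $\Upb$ persists through the activation itself. No computation of weight changes or enumeration of states is needed beyond this. Thus the proof is short and  consists of citing Rule~\ref{rule:dirty:down} twice, once for each direction.
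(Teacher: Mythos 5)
Your proof is correct and matches the paper's argument exactly: the paper likewise notes that $\state[x] = \Upb$ throughout $[t_0, t_1]$ and then cites Rule~\ref{rule:dirty:down} for both directions. The extra care you take about the state persisting through the activation is consistent with the paper's setup and does not change the argument.
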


\begin{proof}
Throughout the time-interval $[t_0, t_1]$, we have $\state[x] = \Upb$. So the claim follows from Rule~\ref{rule:dirty:down}.
\end{proof}

In the next three claims, we bound the change in  $W_x$ that can result from an event of Type B or C. 

\begin{claim}
\label{new:cl:halt:upb:activation:B}
The node-weight $W_x$ decreases by at most $\Delta = \beta^{-\ell(x)} - \beta^{-\ell(x)-1}$ due to a Type B event.
\end{claim}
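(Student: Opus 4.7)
The plan is to enumerate the possible Type B events and bound the weight-decrease in each case separately. A Type B event is an activation of $x$ (natural or induced, see Section~\ref{sub:sec:recap}) that decreases $W_x$, and throughout $[t_0, t_1]$ the node $x$ sits in state $\Upb$ at a fixed level $\ell(x)$ (the latter was already justified in the opening paragraph of Section~\ref{new:sec:first}, since level changes happen only through Cases 2-a/2-b of UPDATE-STATUS, neither of which applies in state $\Upb$).

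First I would handle a natural activation. The only natural activation that decreases $W_x$ is the deletion of an incident edge (insertion would increase $W_x$). By Assumption~\ref{assume:weight}, the weight $W_x$ changes by at most $\beta^{-(\ell(x)+1)}$ during a natural activation. Rewriting $\Delta = \beta^{-\ell(x)} - \beta^{-(\ell(x)+1)} = (\beta-1)\,\beta^{-(\ell(x)+1)}$ and using $\beta \geq 5$, we get $\beta^{-(\ell(x)+1)} \leq (\beta-1)\,\beta^{-(\ell(x)+1)} = \Delta$, so the natural case is immediate.

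Next I would handle an induced activation. Since $\state[x] = \Upb$ implies $M_{down}(x) = \emptyset$, every incident edge $(x,y) \in E$ has $\ell_x(x,y) \in \{\ell(x), \ell(x)+1\}$; in particular, $\ell_x(x,y) \geq \ell(x)$. Inspecting the subroutines, the only way a neighbour can decrease $W_x$ is through a MOVE-UP$(y, (x,y))$ call that actually changes $w(x,y)$: an undo performed by $x$ itself inside PIVOT-DOWN is excluded here because undos in PIVOT-DOWN (Figure~\ref{new:fig:pivot:down}) are triggered only when the affected endpoint is in state $\Up$ or $\Downb$, and $x$ is in neither. Now, for MOVE-UP$(y,(x,y))$ to actually change $w(x,y)$, we need $\ell_y(x,y) \geq \ell_x(x,y)$ just before the call, so the old edge-level is $\ell(x,y) = \max(\ell_x(x,y), \ell_y(x,y)) = \ell_y(x,y) \geq \ell(x)$. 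The resulting decrease in $w(x,y)$ is $\beta^{-\ell(x,y)} - \beta^{-(\ell(x,y)+1)}$, and by monotonicity of the map $i \mapsto \beta^{-i} - \beta^{-(i+1)}$ this is at most $\beta^{-\ell(x)} - \beta^{-(\ell(x)+1)} = \Delta$, as required.

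The only place where a little care is needed is verifying that no other mechanism decreases $W_x$ during the induced case, in particular that an undo by $x$ cannot contribute while $\state[x] = \Upb$; this is the main obstacle, but it is resolved directly by reading off the case split in PIVOT-DOWN. Combining the two cases yields the stated bound on $W_x$'s decrease per Type B event.
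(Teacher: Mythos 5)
Your proof is correct and follows essentially the same route as the paper's: split into natural versus induced activations, use Assumption~\ref{assume:weight} for the natural case, and for the induced case exploit $M_{down}(x)=\emptyset$ (from $\state[x]=\Upb$) to deduce $\ell(x)\le\ell_x(x,y)\le\ell_y(x,y)$, so the per-edge decrease is at most $\beta^{-\ell(x)}-\beta^{-(\ell(x)+1)}$. The extra care you take in ruling out the undo branch of PIVOT-DOWN is sound but not strictly needed, since that branch fires only when $\state[x]\in\{\Up,\Downb\}$ and, even when it does fire, it nets to zero change and does not constitute an induced activation; otherwise the reasoning matches the paper's.
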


\begin{proof}
If the Type B event occurs due a natural activation of $x$, then the claim follows from Assumption~\ref{assume:weight} since $\beta^{-\ell(x) - 1} \leq \beta^{-\ell(x)} - \beta^{-\ell(x)-1}$ as long as $\beta \geq 2$. For the rest of the proof, suppose that the Type B event occurs due to an induced activation. This means that the Type B event results from some neighbour $y$ of $x$ increasing the value of $\ell_y(x, y)$ from, say, $i$ to $(i+1)$. For this to change the weight $w(u,v)$, we must have $\ell_x(x, y) \leq i$. Since $\state[x] = \Upb$, row (5) of Table~\ref{fig:different:states} implies that $M_{down}(x) = \emptyset$ and hence $\ell(x) \leq \ell_x(x, y) \leq i$.  
It follows that the weight  $W_x$ decreases by  $\beta^{-i} - \beta^{-(i+1)} \leq \beta^{-\ell(x)}  - \beta^{-(\ell(x)+1)}$.  
\end{proof}

Consider an event of Type C. This event occurs when we call the subroutine FIX-DIRTY-NODE$(x)$. Since $\state[x] = \Upb$, this in turn leads to a call to the subroutine FIX-UP-B$(x)$. See Figures~\ref{new:fig:fix:dirty} and~\ref{new:fig:fix:up:b}. Hence, during a Type C event, the node $x$ un-marks one or more incident edges $(x, y) \in M_{up}(x)$ by calling the subroutine PIVOT-DOWN$(x, (x, y))$. If the un-marking of an edge $(x, y)$ changes its weight $w(x, y)$, then we say that the un-marking is a {\em success}; otherwise the un-marking is a {\em failure}. Figure~\ref{new:fig:fix:dirty} ensures that an event of Type C leads to at most one success.

\begin{claim}
\label{new:cl:halt:upb:activation:C}
If a Type C event leads to a success, then it increases the node-weight $W_x$ by $\Delta = \beta^{-\ell(x)} - \beta^{-\ell(x) - 1}$.
\end{claim}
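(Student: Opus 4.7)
The plan is to unpack what a Type C event actually does when $\state[x] = \Upb$. Since $\state[x] = \Upb$ throughout $[t_0,t_1]$, by Figure~\ref{new:fig:fix:dirty} each Type C event is a call to FIX-DIRTY-NODE$(x)$ that dispatches to FIX-UP-B$(x)$. Following Figure~\ref{new:fig:fix:up:b}, inside FIX-UP-B$(x)$ the node picks an edge $(u,x) \in M_{up}(x)$ and invokes PIVOT-DOWN$(x,(u,x))$. A ``success'' means this PIVOT-DOWN call returns {\sc True}, i.e. the weight $w(u,x)$ has changed at the end of the call.

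Next I would analyze what happens inside that PIVOT-DOWN call. The first action (step 01 of Figure~\ref{new:fig:pivot:down}) is MOVE-DOWN$(x,(u,x))$, which decreases $\ell_x(u,x)$ by one. Since $(u,x) \in M_{up}(x)$, we had $\ell_x(u,x) = \ell(x)+1$ before the call, in particular $\ell_x(u,x) \neq \ell(x)$, so Invariant~\ref{main:inv:shadow:level} (applied with $y=x$, so that the ``other endpoint'' $u$ satisfies $\ell_u(u,x)\le \ell(x)$) gives $\ell_u(u,x) \leq \ell(x)$. Therefore just before MOVE-DOWN we have $\ell(u,x) = \max(\ell_u(u,x),\ell(x)+1) = \ell(x)+1$ and $w(u,x) = \beta^{-(\ell(x)+1)}$, and just after MOVE-DOWN the level becomes $\max(\ell_u(u,x),\ell(x)) = \ell(x)$ with $w(u,x) = \beta^{-\ell(x)}$. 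This is a genuine weight change, so MOVE-DOWN returns {\sc True}, and $W_x$ has risen by exactly $\beta^{-\ell(x)} - \beta^{-(\ell(x)+1)} = \Delta$.

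The only remaining concern is whether the ``undo'' step inside PIVOT-DOWN reverts this change. Looking at Figure~\ref{new:fig:pivot:down}, the undo (steps 05 and 18) is precisely a call MOVE-UP$(u,(u,x))$ that pushes $\ell_u(u,x)$ back up by one, restoring $w(u,x)$ to $\beta^{-(\ell(x)+1)}$; and whenever an undo is executed, PIVOT-DOWN returns {\sc False} (steps 07 and 20). Since we are assuming the Type C event is a success, no undo occurred, so the weight change from MOVE-DOWN is permanent. Hence the net effect of the Type C event on $W_x$ is an increase of exactly $\Delta$, as claimed.

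I do not anticipate any real obstacle here; the argument is a direct bookkeeping of PIVOT-DOWN/MOVE-DOWN combined with Invariant~\ref{main:inv:shadow:level}. The only subtlety worth flagging is to be explicit that ``success'' is defined as PIVOT-DOWN returning {\sc True}, which logically excludes the undo branch, so we do not need a separate argument that the undo fails.
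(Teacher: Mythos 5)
Your proof is correct and follows essentially the same approach as the paper's, though it is more explicit. The paper argues conditionally — ``for this to change the weight $w(x,y)$, the value of $\ell(x,y)$ must also have decreased from $\ell(x)+1$ to $\ell(x)$'' — without pinning down $\ell_u(u,x)$. You strengthen this by invoking Invariant~\ref{main:inv:shadow:level} to show unconditionally that $\ell_u(u,x)\le\ell(x)$ (so MOVE-DOWN always returns {\sc True} on an edge of $M_{up}(x)$), and then you observe that a failure can only come from the undo branch, which forces a {\sc False} return and hence is excluded by the success hypothesis. Both routes reach the same conclusion; yours just makes the mechanism inside PIVOT-DOWN/MOVE-DOWN explicit where the paper leaves it implicit. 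One minor imprecision to be aware of: a Type C event that leads to a success may also contain earlier failed calls to PIVOT-DOWN, so strictly speaking you should add a sentence noting that those failures leave $W_x$ unchanged (either MOVE-DOWN changed the weight and the undo reverted it, or no change occurred at all) — the paper has the same implicit assumption, which it defers to Claim~\ref{new:cl:halt:upb:activation:C:failure}.
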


\begin{proof}
Let the success correspond to the un-marking of the edge $(x, y)$.
 Just before this un-marking, we have $(x, y) \in M_{up}(x)$ and hence $\ell_x(x, y) = \ell(x) + 1$. The un-marking reduces the value of $\ell_x(x, y)$ from $\ell(x) + 1$ to $\ell(x)$. For this to change the weight $w(x, y)$, the value of $\ell(x, y)$ must also have decreased from $\ell(x)+1$ to $\ell(x)$ due to the un-marking. This means that the weight $w(x, y)$ increases by an amount $\Delta = \beta^{-\ell(x)} - \beta^{-\ell(x) - 1}$ due to the un-marking. Since any Type C event leads to at most one success,  the weight $W_x$ also changes by exactly $\Delta$ during the Type C event under consideration.
 \end{proof}

\begin{claim}
\label{new:cl:halt:upb:activation:C:failure}
If a Type C event does not lead to a success, then it does not change the weight $W_x$, and $\beta^5$ edges get deleted from the set $M_{up}(x)$ due to such a Type C event. 
\end{claim}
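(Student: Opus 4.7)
The plan is to verify the two parts of the claim separately, mostly by direct inspection of the pseudocode. For the weight-invariance part, recall that a Type C event with $\state[x] = \Upb$ triggers FIX-UP-B$(x)$, whose body makes a sequence of calls to PIVOT-DOWN$(x, (u, x))$ on edges $(u, x) \in M_{up}(x)$. By Lemma~\ref{lm:pivot:down}, PIVOT-DOWN$(v, (u, v))$ returns {\sc True} iff the weight $w(u, v)$ is altered. Since a ``failed'' Type C event means, by hypothesis, that every such call returns {\sc False}, no incident edge-weight of $x$ changes during the event, and therefore $W_x$ is preserved.

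For the counting part, I would inspect MOVE-DOWN$(x, (u, x))$ in Figure~\ref{new:fig:move:down}. Lines 03--04 show that whenever MOVE-DOWN is invoked on an edge $(u, x)$ already lying in $M_{up}(x)$, the edge is removed from $M_{up}(x)$ unconditionally, regardless of whether MOVE-DOWN subsequently returns {\sc True} or {\sc False}. Each iteration of the {\sc While} loop in FIX-UP-B picks a fresh $(u, x) \in M_{up}(x)$ and calls PIVOT-DOWN, which opens with MOVE-DOWN. Hence every iteration deletes exactly one edge from $M_{up}(x)$, and the counting claim reduces to showing that the loop really completes all $\beta^5$ iterations.

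The only two early-exit conditions in FIX-UP-B are (i) $M_{up}(x) = \emptyset$ and (ii) the PIVOT-DOWN call returning {\sc True}. Condition (ii) is ruled out by the no-success hypothesis, so the remaining and main task of the proof is to rule out condition (i). Suppose it fires. Then FIX-DIRTY-NODE$(x)$ immediately calls UPDATE-STATUS$(x)$ with $D[x] = 0$ (the flag was cleared on line 01 of FIX-UP-B). At that moment three facts hold simultaneously: $M_{up}(x) = \emptyset$; $M_{down}(x) = \emptyset$, since entering $\Upb$ required it and FIX-UP-B's MOVE-DOWN calls only take the $M_{up}$-removal branch in lines 03--04 of MOVE-DOWN (never the $M_{down}$-insertion branch in lines 05--06, because every picked edge already sits in $M_{up}(x)$); and by the weight-invariance part, $1 - 2/\beta \leq W_x < 1 - 1/\beta$. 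Thus $x$ is fit in state $\Idle$, so UPDATE-STATUS$(x)$ would change $\state[x]$ to $\Idle$, contradicting the standing assumption that $\state[x] = \Upb$ throughout $[t_0, t_1]$. The delicate step worth flagging is verifying that $M_{down}(x)$ stays empty during FIX-UP-B; once that is secured, the $\Idle$ state is available and the contradiction follows.
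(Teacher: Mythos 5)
Your proof is correct and follows essentially the same route as the paper's: no success implies no weight change, and the early exit via $M_{up}(x)=\emptyset$ is ruled out because the ensuing call to UPDATE-STATUS$(x)$ would then change the state of $x$, contradicting $\state[x]=\Upb$ throughout $[t_0,t_1]$. Your additional checks (that each MOVE-DOWN call unconditionally removes the picked edge from $M_{up}(x)$ via its lines 03--04, and that $x$ would in fact be fit in state $\Idle$) only make explicit what the paper leaves implicit.
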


\begin{proof}
Consider a Type C event  that does not lead to a success. During this event, each time the node $x$ un-marks an edge $(x, y)$, it leads to a failure and does not change the weight $w(x, y)$. Thus, the weight $W_x$ also does not change due to such an event of Type C.

Suppose that the Type C event under consideration  leads to zero success and less than $\beta^5$ failures. This  implies that the subroutine FIX-UP-B$(x)$ terminates due to step (05) in Figure~\ref{new:fig:fix:up:b}, and thus $M_{up}(x) = \emptyset$ at this point in time. Next,  the subroutine FIX-DIRTY-NODE$(x)$    calls UPDATE-STATUS$(x)$  as per step (09) in Figure~\ref{new:fig:fix:dirty}, which in turn changes the state of the node $x$ since we cannot simultaneously have $\state[x] = \Upb$ and $M_{up}(x) = \emptyset$. See row (5) of Table~\ref{fig:different:states}. However, this leads us to a contradiction, for we have assumed that $\state[x] = \Upb$ throughout the time-interval $[t_0, t_1]$.
\end{proof}

Let $n_B$ and $n_C$ respectively denote the number of  Type B and Type C events during the time-interval $[t_0, t_1]$. Let $n_C^s$ (resp. $n_C^f$) denote the number of Type C events during the time-interval $[t_0, t_1]$ that lead  (resp. do not lead) to a success. Clearly, we have: $n_C = n_C^s + n_C^f$. 
By Claim~\ref{new:cl:halt:upb:dirty}, every Type B event is followed by a Type C event. Hence, we get: $n_B \leq n_C = n_C^s + n_C^f$, which implies that:
\begin{equation}
\label{eq:F:2}
n_C^f \geq n_B  - n_C^s 
\end{equation}
Any change in the weight $W_x$ during the time-interval $[t_0, t_1]$ results  from an event of Type A, B or C.  Now,  an event of Type A  increases the weight $W_x$, an event of Type B decreases the weight $W_x$ by at most $\Delta$ (see Claim~\ref{new:cl:halt:upb:activation:B}),  an event of Type C that leads to a success increases the weight $W_x$ by $\Delta$ (see Claim~\ref{new:cl:halt:upb:activation:C}), and an event of Type C that does not lead to a success leaves the value of $W_x$ unchanged (see Claim~\ref{new:cl:halt:upb:activation:C:failure}). Since the weight $W_x$ decreases by at least $1/\beta - 1/\beta^K$ during the time-interval $[t_0, t_1]$ (see Corollary~\ref{cor:halt:upb}), we get:
\begin{equation}
\label{eq:F:3}
(n_B - n_C^s) \cdot \Delta \geq 1/\beta - 1/\beta^K
\end{equation}
 Claim~\ref{new:cl:halt:upb:activation:B} gives: $1/\Delta \geq \beta^{\ell(x)}$. By eq~(\ref{eq:beta:K:L}), we have $1/\beta - 1/\beta^K \geq 1/\beta^2$. Thus, eq~(\ref{eq:F:2}) and~(\ref{eq:F:3}) give: \begin{equation}
\label{eq:F:4}
n_C^f \geq (1/\Delta) \cdot (1/\beta - 1/\beta^K) \geq \beta^{\ell(x)-2} 
\end{equation}
By Claim~\ref{new:cl:halt:upb:activation:C:failure}, for each Type C event that contributs to $n_C^f$, the node $x$ deletes $\beta^5$ edges from  $M_{up}(x)$. Hence, eq.~(\ref{eq:F:4})  implies that during the time-interval $[t_0, t_1]$, the node $x$ deletes $n_C^f \cdot \beta^5 \geq \beta^{\ell(x)+3}$ edges from $M_{up}(x)$.  Furthermore, the node $x$ never inserts an edge into the set $M_{up}(x)$ during the time-interval $[t_0, t_1]$, for  $\state[x] = \Upb$ throughout this time-interval (see Figure~\ref{new:fig:fix:up:b} and Section~\ref{sub:sec:recap}). Thus, we have:
\begin{equation}
\label{eq:F:5}
|M_{up}(x)| \geq \beta^{\ell(x)+3} \text{ at time-instant } t_0.
\end{equation} 
Note that every edge $(x, v) \in M_{up}(x)$ has $\ell_x(x, v) = \ell(x) + 1$ and $\ell_v(x, v) \leq \ell(x)$ by Invariant~\ref{main:inv:shadow:level}. Thus, the weight of every edge $(x, v) \in M_{up}(x)$ is given by $w(x, y) = \beta^{-\ell(x)-1}$. By equation~\ref{eq:F:5}, we now derive that $W_x \geq \sum_{(x, v) \in M_{up}(x)} w(x,v) \geq |M_{up}(x)| \cdot \beta^{-\ell(x) - 1} > 1$ at time-instant $t_0$. This leads to a contradiction, since $\state[x] = \Upb$ at time $t_0$ and hence row(5) of Table~\ref{fig:different:states} requires that $W_x < 1-1/\beta$.

\subsection{Deriving a contradiction for $S = \Up$.}

Just before time $t_1$, the node $x$ has $\state[x] = \Up$, $M_{down}(x) = \emptyset$, $E_{\ell(x)}(x) \neq \emptyset$, and $1-1/\beta \leq W_x < 1$. See row (1) of Table~\ref{fig:different:states}. Consider the activation of $x$ at time $t_1$ that results in a call to UPDATE-STATUS$(x)$ during which the algorithm HALTS. There are two cases to consider here.

\paragraph{Case 1.} The activation decreases the weight $W_x$. Recall case 1 in the justification for Rule~\ref{rule:dirty:up}. This activation does not make the node $x$ dirty. It either remains in state $\Up$, or switches to state $\Upb$ if $1-2/\beta \leq W_x < 1-1/\beta$ after the activation. A call to UPDATE-STATUS$(x)$ following this activation will never make the algorithm HALT.

\paragraph{Case 2.} The activation increases the weight $W_x$. If it is a natural activation, then by Assumption~\ref{assume:weight} this increases the weight $W_x$ by at most $\beta^{-(\ell(x)+1)} \leq \beta^{-\ell(x)} - \beta^{-(\ell(x)+1)}$. See equation~\ref{eq:beta:K:L}.  Otherwise, the activation is induced, and suppose that it happens because some neighbour $y$ of $x$ decreases the value of $\ell_y(x, y)$ from $i+1$ to $i$ (say). Then this increases the weight $W_x$ by $\beta^{-i} - \beta^{-(i+1)} \leq \beta^{-\ell(x)} - \beta^{-(\ell(x)+1)}$. The inequality holds since we must have $i \geq \ell(x)$. We conclude that regardless of whether the activation of $x$ is natural or induced, it increases the weight $W_x$ by at most $\beta^{-\ell(x)} - \beta^{-(\ell(x)+1)}$. Immediately after this activation, the node $x$ becomes dirty (see Rule~\ref{rule:dirty:up}) and calls the subroutine FIX-DIRTY-NODE$(x)$, which in turn, calls the subroutine FIX-UP$(x)$. See Figures~\ref{new:fig:fix:dirty} and~\ref{new:fig:fix:up}. During the call to the subroutine FIX-UP$(x)$, the node $x$ up-marks some edge $(x, v) \in E_{\ell(x)}(x)$. This up-marking reduces the weight $W_x$ by exactly $\beta^{-\ell(x)} - \beta^{-(\ell(x) + 1)}$. Thus, this reduction in the value of $W_x$ is sufficient to compensate for the increase in the value of $W_x$ due to the preceding activation. Hence, when the subroutine UPDATE-STATUS$(x)$ eventually gets called after the up-marking of the edge $(x, y)$, we have $W_x < 1$. During the call to UPDATE-STATUS$(x)$, if we find out that $1-2/\beta \leq W_x < 1-1/\beta$, then we change the state of $x$ to $\Upb$. Else if we find out that $1-1/\beta \leq W_x < 1$ and $E_{\ell(x)}(x) = \emptyset$, then the node $x$ moves  to a higher level while remaining in state $\Up$. Finally, if we find out that $1-1/\beta \leq W_x < 1$ and $E_{\ell(x)}(x) \neq \emptyset$, then the node $x$ remains in the same level and in the same state $\Up$. Thus, under no situation does the call to UPDATE-STATUS$(x)$ makes our algorithm HALT. 

\subsection{Deriving a contradiction for $S = \Idle$.}

Just before time $t_1$, the node $x$ has $\state[x] = \Idle$, $M_{up}(x) = M_{down}(x) = \emptyset$, and $1-2/\beta \leq W_x < 1-1/\beta$. See row (4) of Table~\ref{fig:different:states}. Recall case 2 in the justification for Rule~\ref{rule:dirty:slack}. The node $x$ does not become dirty due to the activation at time $t_1$. It either remains in state $\Idle$, or switches to state $\Down$ if $f(\beta) \leq W_x < 1-1/\beta$, or switches to state $\Up$ if $1-1/\beta \leq W_x < 1$. A call to UPDATE-STATUS$(x)$ following this activation will never make the algorithm HALT.

\subsection{Deriving a contradiction for $S = \Slack$.}

Just before time $t_1$, the node $x$ has $\state[x] = \Slack$, $M_{up}(x) = M_{down}(x) = \emptyset$, $\ell(y) = K$, and $0 \leq W_x < f(\beta)$. See row (3) of Table~\ref{fig:different:states}. Recall case 1 in the justification for Rule~\ref{rule:dirty:slack}. The node $x$ does not become dirty due to the activation at time $t_1$. It either remains in state $\Slack$, or switches to state $\Down$ if $f(\beta) \leq W_x < 1-1/\beta$. A call to UPDATE-STATUS$(x)$ following this activation will never make the algorithm HALT.

\subsection{Deriving a contradiction for $S = \Downb$.}

The only way the node $x$ can change its level is if we execute the steps in Case 2-a or 2-b during a call to UPDATE-STATUS$(x)$. This situation can never occur during the time-interval $[t_0, t_1]$, throughout which we have $\state[x] = S = \Downb$. Thus, the node $x$ stays at the same level throughout the time-interval $[t_0, t_1]$. 

In Claims~\ref{new:cl:halt:downb:1} and~\ref{new:cl:halt:downb:2}, we respectively bound the weight $W_x$ at time-instants $t_0$ and $t_1$. In Corollary~\ref{cor:halt:downb}, we use these two claims to bound the change in the weight $W_x$ during the time-interval $[t_0, t_1]$.

\begin{claim}
\label{new:cl:halt:downb:1}
 $W_x < 1 - 2/\beta + 1/\beta^K$ at time $t_0$.
\end{claim}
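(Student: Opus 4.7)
The plan is to mirror the argument used in Claim~\ref{new:cl:halt:upb:1}, replacing the role of state $\Up$ by state $\Down$ and reversing the direction of the weight change. The key observation is that the transition into $\Downb$ at time $t_0$ must originate from a specific preceding state, and this pins down a tight upper bound on $W_x$ before the activating event.

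First I would identify the state of $x$ just before $t_0$. By definition of $t_0$, we have $\state[x] = \Downb$ throughout $[t_0, t_1]$ but $\state[x] \neq \Downb$ immediately before $t_0$. Consulting the discussion of state transitions in Section~\ref{sub:sec:node:state} (and the justification of Rules~\ref{rule:dirty:up}--\ref{rule:dirty:slack} in Section~\ref{sub:sec:dirty:node}), the only way a node can enter state $\Downb$ is from state $\Down$ via an activation that increases $W_x$: indeed, a node acquires a nonempty $M_{down}(x)$ only while in $\Down$, and the weight range $[1-2/\beta, 1-1/\beta)$ of $\Downb$ is adjacent (upwards) to the range of $\Down$. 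By Corollary~\ref{cor:activation}, the activation at $t_0$ only changes the state because $x$ stays clean through the call to UPDATE-STATUS$(x)$ that performs the switch, and so it moves to the neighboring interval $I_3$.

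Next, applying row~(2) of Table~\ref{fig:different:states} to the state of $x$ just before $t_0$, we have $W_x < 1 - 2/\beta$ at that moment. Finally, since every edge in the graph has weight at most $\beta^{-K}$ (as $K$ is the minimum possible level in a nice-partition), any single activation — whether natural (by Assumption~\ref{assume:weight}, which gives an even sharper bound $\beta^{-(\ell(x)+1)}$) or induced — can change $W_x$ by at most $\beta^{-K}$. Consequently, immediately after the activation at time $t_0$, the weight satisfies
\[
W_x < \paren{1 - 2/\beta} + \beta^{-K} = 1 - 2/\beta + 1/\beta^K,
\]
which is the desired bound.

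The argument is essentially routine once the preceding state is identified; the only potentially subtle point is ensuring that no other transition can enter $\Downb$ at $t_0$ (for example, from $\Idle$ or $\Up$). This is handled by noting that $M_{down}(x) \neq \emptyset$ is required in $\Downb$, and the sets $M_{down}(\cdot)$ and $M_{up}(\cdot)$ are only modified inside the pivot subroutines invoked from the states $\Up$, $\Down$, $\Downb$, $\Upb$; combined with the weight-range constraints in Table~\ref{fig:different:states}, this rules out all alternatives except a transition from $\Down$ via a weight-increasing activation.
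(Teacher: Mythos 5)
Your proof is correct and follows essentially the same route as the paper's: identify that the transition into $\Downb$ at $t_0$ must come from state $\Down$ via a weight-increasing activation, invoke row (2) of Table~\ref{fig:different:states} to get $W_x < 1-2/\beta$ just before $t_0$, and bound the per-activation change by $\beta^{-K}$. Your extra argument ruling out other source states (via the $M_{down}(x)\neq\emptyset$ requirement) is a more explicit version of what the paper delegates to the justification of Rule~\ref{rule:dirty:down}.
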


\begin{proof}
There is an activation of the node $x$ at time $t_0$ which changes its state to $\Downb$. As per the discussion in Section~\ref{sub:sec:dirty:node}, the only way this can happen is if $\state[x] = \Down$ just before time $t_0$ and the activation at time $t_0$ increases  $W_x$ (see Case 1 in the justification for Rule~\ref{rule:dirty:down}). Thus, 
row (2) in Table~\ref{fig:different:states} gives: $W_x < 1 - 2/\beta$ just before time $t_0$.   As the weight of an edge is at most $\beta^{-K}$, the activation of   $x$ at time $t_0$ changes  $W_x$ by at most $\beta^{-K}$. Hence, $W_x < 1 - 2/\beta + 1/\beta^K$  after the activation of $x$ at time-instant $t_0$.
\end{proof}

\begin{claim}
\label{new:cl:halt:downb:2}
$W_x \geq 1 - 1/\beta$ and $M_{down}(x) \neq \emptyset$ at time $t_1$.
\end{claim}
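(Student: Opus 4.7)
The claim is the exact mirror image of Claim~\ref{new:cl:halt:upb:2}, so I would follow the same template as that proof, with the roles of $\Up$/$\Downb$ swapped and the direction of the weight change reversed. Throughout $[t_0,t_1]$ we have $\state[x]=\Downb$, which is maintained by the invariants of the algorithm; in particular, just before time $t_1$ the node must still satisfy the constraints of row~(6) of Table~\ref{fig:different:states}. The idea is then to pinpoint the single activation happening at time $t_1$, argue that it cannot insert anything new into $M_{up}(x)$ or $M_{down}(x)$, and conclude that the only way the subsequent call to UPDATE-STATUS$(x)$ can HALT is if the two inequalities asserted in the claim hold.

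\textbf{Step 1: Set up the activation at $t_1$.} Call this activation \textbf{a*}. By row~(6) of Table~\ref{fig:different:states}, just before \textbf{a*} we have $M_{up}(x)=\emptyset$, $M_{down}(x)\neq\emptyset$, and $1-2/\beta\leq W_x<1-1/\beta$. The activation is either natural (insertion/deletion of an edge incident on $x$) or induced (a neighbour calling a PIVOT routine on an edge $(x,y)$). In either case, it changes only $W_x$ (and the level/weight of a single edge, if induced); it does not modify the sets $M_{up}(x)$ or $M_{down}(x)$ of the node $x$ itself, as can be read off from the code of MOVE-UP and MOVE-DOWN (these subroutines modify $M_{up}$ and $M_{down}$ only at the \emph{active} endpoint, not the other one, except possibly to \emph{delete} the edge from $M_{up}(x)\cup M_{down}(x)$ in MOVE-UP when $i_v+1>\ell(x)$).

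\textbf{Step 2: Use HALT to force the inequalities.} Because the algorithm HALTS inside UPDATE-STATUS$(x)$ at time $t_1$, the case analysis of UPDATE-STATUS$(x)$ (Section~\ref{sub:sec:update:status}) tells us that $x$ is unfit in \emph{every} state of Table~\ref{fig:different:states}, and cannot even be rescued by the level-change rules (cases 2-a, 2-b). Since $M_{up}(x)$ is still empty and $M_{down}(x)$ is still nonempty after \textbf{a*}, the value of $W_x$ is what rules $x$ out of each state. Examining Table~\ref{fig:different:states} row by row with $M_{up}(x)=\emptyset$ and $M_{down}(x)\neq\emptyset$: $x$ can be fit in $\Down$ iff $f(\beta)\le W_x<1-2/\beta$; in $\Downb$ iff $1-2/\beta\le W_x<1-1/\beta$; and is unfit in $\Up$, $\Slack$, $\Idle$, $\Upb$ because of the $M_{down}(x)\neq\emptyset$ constraint. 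So the only way $x$ can be unfit everywhere is if $W_x<f(\beta)$ or $W_x\ge 1-1/\beta$. The first possibility is ruled out by the granularity of activations: just before \textbf{a*} we had $W_x\ge 1-2/\beta$, and \textbf{a*} changes $W_x$ by at most $\beta^{-K}\ll 1/\beta$ (by Assumption~\ref{assume:weight} for natural activations, and by the bound $\beta^{-\ell_y(x,y)}(1-1/\beta)\le\beta^{-K}$ for induced ones), hence $W_x\ge 1-2/\beta-\beta^{-K}>f(\beta)$. Thus we must have $W_x\ge 1-1/\beta$, and $M_{down}(x)\neq\emptyset$ as observed above.

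\textbf{Main obstacle.} The one subtle point is verifying that \textbf{a*} cannot alter $M_{up}(x)$ or $M_{down}(x)$ in a way that changes the conclusion; in particular, an induced activation carried out via MOVE-UP$(y,(x,y))$ may \emph{remove} $(x,y)$ from $M_{up}(x)\cup M_{down}(x)$ (lines 07--09 of Figure~\ref{new:fig:move:up}). The relevant worry is that this could cause $M_{down}(x)$ to become empty. But emptying $M_{down}(x)$ would only add the state $\Idle$ (and possibly $\Down$, $\Slack$, $\Up$) to the list of candidate fits, and the same weight analysis above rules those out as well -- so the conclusion $W_x\ge 1-1/\beta$ survives, and we can check separately that $M_{down}(x)$ must still be nonempty, because otherwise row~(4) of Table~\ref{fig:different:states} would allow $x$ to sit in state $\Idle$ (or $\Up$), contradicting the HALT. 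This case-check is routine but must be written out carefully.
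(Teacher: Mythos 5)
Your proof is correct and follows essentially the same route as the paper's: read off the row-(6) constraints just before the activation \textbf{a*}, observe that \textbf{a*} cannot insert edges into $M_{up}(x)$ or $M_{down}(x)$, and conclude from the HALT that the only remaining possibility is $W_x \geq 1-1/\beta$ with $M_{down}(x) \neq \emptyset$. The paper compresses your Step 2 and the ``main obstacle'' discussion into a single sentence (``This can happen only if\dots''), so your state-by-state check, the granularity bound ruling out $W_x < f(\beta)$, and the verification that $M_{down}(x)$ cannot have been emptied are exactly the details the paper's argument implicitly relies on.
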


\begin{proof}
$\state[x] = \Downb$ throughout the time-interval $[t_0, t_1]$. Just before time $t_1$, the node $x$ undergoes an activation, say, {\bf a*}. Subsequent to the activation {\bf a*}, our algorithm HALTS during a call to  UPDATE-STATUS$(x)$ at time $t_1$. From row (6) of Table~\ref{fig:different:states}, we get: $M_{down}(x) \neq \emptyset$, $M_{up}(x) = \emptyset$ and $1-2/\beta \leq W_x < 1-1/\beta$ just before the activation {\bf a*}. No edge gets inserted into the sets $M_{up}(x)$ and $M_{down}(x)$ during the activation {\bf a*}. Since the algorithm HALTS at time $t_1$, the activation {\bf a*} must have changed the weight $W_x$ in such a way that the node $x$ violates the constraints for every state as defined in Table~\ref{fig:different:states}. This can happen only if $W_x \geq 1 - 1/\beta$ and $M_{down}(x) \neq \emptyset$ at time $t_1$.
\end{proof}

\begin{corollary}
\label{cor:halt:downb}
During the interval $[t_0, t_1]$, the node-weight $W_x$  increases by at least $1/\beta - 1/\beta^K$.
\end{corollary}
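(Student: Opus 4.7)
The plan is to simply combine the two preceding claims by taking the difference of the bounds they provide on $W_x$ at the endpoints of the interval $[t_0, t_1]$. Specifically, Claim~\ref{new:cl:halt:downb:1} gives an upper bound $W_x < 1 - 2/\beta + 1/\beta^K$ at time $t_0$, while Claim~\ref{new:cl:halt:downb:2} gives a lower bound $W_x \geq 1 - 1/\beta$ at time $t_1$. Subtracting the former from the latter yields that the weight must have increased by strictly more than $(1 - 1/\beta) - (1 - 2/\beta + 1/\beta^K) = 1/\beta - 1/\beta^K$ over the interval.

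There is really no obstacle here; the corollary is a one-line arithmetic consequence of the two claims it cites. The only thing to check is the sign conventions: Claim~\ref{new:cl:halt:downb:1} bounds $W_x$ from above at the earlier time $t_0$, and Claim~\ref{new:cl:halt:downb:2} bounds $W_x$ from below at the later time $t_1$, so their combination does indeed lower-bound the net increase $W_x(t_1) - W_x(t_0)$. I would state this as a two-sentence proof, essentially writing
\[
W_x(t_1) - W_x(t_0) > \left(1 - \tfrac{1}{\beta}\right) - \left(1 - \tfrac{2}{\beta} + \tfrac{1}{\beta^K}\right) = \tfrac{1}{\beta} - \tfrac{1}{\beta^K},
\]
and then concluding the stated bound.

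The actual work of this subsection is concentrated in the two claims (and mirrors the $\Upb$-case treated in Section~\ref{new:sec:first}); the corollary merely packages them into the form needed for the subsequent counting argument that will bound $n_C^f$ and derive a contradiction from $|M_{down}(x)|$ being too large at time $t_0$ to be consistent with row (6) of Table~\ref{fig:different:states}.
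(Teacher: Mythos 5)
Your proof is correct and matches the paper's approach exactly: the paper's proof is the one-line remark ``Follows from Claims~\ref{new:cl:halt:downb:1} and~\ref{new:cl:halt:downb:2},'' and you spell out the same subtraction of the two bounds. Nothing further is needed.
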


\begin{proof}
Follows from Claims~\ref{new:cl:halt:downb:1} and~\ref{new:cl:halt:downb:2}.
\end{proof}

\begin{claim}
\label{new:cl:halt:downb:dirty}
An event of Type B does not make the node $x$ dirty. An event of Type A makes the node $x$ dirty.
\end{claim}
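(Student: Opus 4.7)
The plan is to observe that this claim is essentially a direct restatement of Rule~\ref{rule:dirty:up} applied to the current state of $x$. Throughout the time-interval $[t_0, t_1]$ we have $\state[x] = \Downb$, and the rules determining whether an activation makes a node dirty depend only on (i) the node's state, (ii) the direction in which its weight changes, and (iii) in some cases its level.

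First, I would recall that Rule~\ref{rule:dirty:up} handles exactly the two states $\Up$ and $\Downb$ jointly: it asserts that a node in either of these states becomes dirty after an activation that increases its weight, and does \emph{not} become dirty after an activation that decreases its weight. Since $\state[x] = \Downb$ throughout $[t_0, t_1]$, this rule is applicable to every activation of $x$ in that interval.

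Applying Rule~\ref{rule:dirty:up} to a Type A event (an activation of $x$ that increases $W_x$) yields the second sentence of the claim: such an event makes $x$ dirty. Applying the same rule to a Type B event (an activation of $x$ that decreases $W_x$) yields the first sentence: such an event does not make $x$ dirty. The additional discussion in Case~2 of the justification of Rule~\ref{rule:dirty:up} is also consistent with $\state[x] = \Downb$ being preserved throughout $[t_0, t_1]$, since a Type B event from the $\Downb$ state only triggers a transition to $\Down$ once $W_x$ drops below $1-2/\beta$, which by Claim~\ref{new:cl:halt:downb:2} does not occur before $t_1$.

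I do not expect any obstacle here: the claim is a one-line consequence of Rule~\ref{rule:dirty:up}, included purely to set up the amortisation argument for the $\Downb$ case (in parallel with Claim~\ref{new:cl:halt:upb:dirty} for the $\Upb$ case). The only thing to double-check is that no hidden case escapes the rule—i.e.\ that $\state[x]$ really remains $\Downb$ throughout $[t_0, t_1]$, which is guaranteed by the choice of $t_0$ as the most recent time-instant at which $\state[x]$ changed to $\Downb$.
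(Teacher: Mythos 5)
Your proposal is correct and matches the paper's own proof, which likewise consists of observing that $\state[x] = \Downb$ throughout $[t_0, t_1]$ and then invoking Rule~\ref{rule:dirty:up} directly. The extra consistency checks you include are fine but not needed.
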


\begin{proof}
Throughout the time-interval $[t_0, t_1]$, we have $\state[x] = \Downb$. The claim follows from Rule~\ref{rule:dirty:up}.
\end{proof}

In the next three claims -- \ref{new:cl:halt:downb:activation:B},~\ref{new:cl:halt:downb:activation:C} and~\ref{new:cl:halt:downb:activation:C:failure} --  we bound the change in the weight $W_x$ that can result from an event of Type A or C. 

\begin{claim}
\label{new:cl:halt:downb:activation:B}
The node-weight $W_x$ increases by at most $\Delta = \beta^{-\ell(x)} - \beta^{-\ell(x)-1}$ due to a Type A event.
\end{claim}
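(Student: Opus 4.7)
The plan is to mirror the argument used for Claim~\ref{new:cl:halt:upb:activation:B}, flipping ``decreases'' to ``increases'' and exploiting the dual state $\Downb$. A Type~A event is either a natural or an induced activation that raises $W_x$. For a natural activation, Assumption~\ref{assume:weight} bounds the change by $\beta^{-(\ell(x)+1)}$, and since $\beta\geq 5$ by equation~\eqref{eq:beta:K:L} we have $\beta^{-(\ell(x)+1)}\leq \beta^{-\ell(x)}-\beta^{-(\ell(x)+1)}=\Delta$, so this sub-case is immediate.

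The substantive case is an induced Type~A activation: some neighbour $y$ of $x$ decreases $\ell_y(x,y)$ from $i+1$ to $i$, and this decrease actually changes $w(x,y)$. Following the analysis in Claim~\ref{new:cl:halt:upb:activation:B}, the weight-change condition forces $\ell_x(x,y)\leq i$, and the resulting increase in $W_x$ equals $\beta^{-i}-\beta^{-(i+1)}$. Since $\beta^{-i}-\beta^{-(i+1)}$ is decreasing in $i$, proving $i\geq \ell(x)$ would give the bound. The main obstacle, and the asymmetry with the $\Upb$ analysis, is this: in the $\Upb$ case $M_{down}(x)=\emptyset$ forces $\ell_x(x,y)\geq \ell(x)$, which together with $\ell_x(x,y)\leq i$ yields $i\geq \ell(x)$ for free; but here $\state[x]=\Downb$ only guarantees $M_{up}(x)=\emptyset$, hence $\ell_x(x,y)\leq \ell(x)$, and in fact $(x,y)\in M_{down}(x)$ is allowed and would give $\ell_x(x,y)=\ell(x)-1$, leaving open the bad case $i=\ell(x)-1$.

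I will close this gap using the ``undo'' branch of PIVOT-DOWN in Figure~\ref{new:fig:pivot:down}. The neighbour $y$ reaches $x$ through a call PIVOT-DOWN$(y,(x,y))$; after MOVE-DOWN succeeds, steps~(15)--(20) inspect $\state[x]=\Downb$, and if $(x,y)\in M_{down}(x)$ and the new value $\ell_y(x,y)=i$ satisfies $i<\ell(x)$, they invoke MOVE-UP$(x,(x,y))$. This undo increments $\ell_x(x,y)$ from $\ell(x)-1$ to $\ell(x)$, restores $\ell(x,y)$ from $i$ back to $i+1$, and cancels the weight increase on $(x,y)$; the outer PIVOT-DOWN then returns \textsc{False}, so per Section~\ref{sub:sec:recap} there is no net weight change for $x$ and hence no Type~A event to bound. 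Consequently, whenever a Type~A induced activation does occur, the undo guard fails, meaning either (i)~$(x,y)\notin M_{down}(x)$, in which case $M_{up}(x)=\emptyset$ forces $\ell_x(x,y)=\ell(x)$, so $\ell(x)=\ell_x(x,y)\leq i$; or (ii)~$(x,y)\in M_{down}(x)$ with $i\geq \ell(x)$ directly. In both sub-cases $i\geq \ell(x)$, so the increase is at most $\beta^{-\ell(x)}-\beta^{-(\ell(x)+1)}=\Delta$, completing the proof. The main subtlety to verify carefully in the write-up is the first sub-case reasoning that $\ell_x(x,y)=\ell(x)$ combined with the weight-change condition $\ell_x(x,y)\leq i$ forces $i\geq\ell(x)$, rather than merely $i\geq\ell(x)-1$.
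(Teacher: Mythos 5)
Your proposal is correct and takes essentially the same route as the paper's own proof: Assumption~\ref{assume:weight} for natural activations, the observation that a change in $w(x,y)$ forces $\ell_x(x,y)\leq i$, the fact that $\state[x]=\Downb$ gives $M_{up}(x)=\emptyset$, and the ``undo'' branch of PIVOT-DOWN to eliminate the problematic case $i=\ell(x)-1$ with $(x,y)\in M_{down}(x)$. The paper merely organizes the argument as a forward four-case analysis on $i$ versus $\ell(x)$ and $\ell_x(x,y)$, whereas you work backward from the assumption that the weight actually changed; the substance is identical.
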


\begin{proof}
If the Type A event occurs due a natural activation of $x$, then the claim follows from Assumption~\ref{assume:weight} since $\beta^{-\ell(x) - 1} \leq \beta^{-\ell(x)} - \beta^{-\ell(x)-1}$ as long as $\beta \geq 2$ (see equation~\ref{eq:beta:K:L}). For the rest of the proof, suppose that the Type A event occurs due to an induced activation. This means that the Type A event results from some neighbour $y$ of $x$ decreasing the value of $\ell_y(x, y)$ from, say, $(i+1)$ to $i$. Now, we consider the following  cases depending on the values of $i$ and $\ell_x(x, y)$.

\begin{itemize}
\item {\em Case 1.} $i \geq \ell(x)$. 

\smallskip
\noindent In this case,  it follows that the weight $W_x$ increases by $\beta^{-i} - \beta^{-(i+1)} \leq \Delta$ due to this Type A event. 

\item {\em Case 2.} $i = \ell(x) - 1$ and $\ell_x(x, y) \geq \ell(x)$. 

\smallskip
\noindent In this case, the weight $w(x, y)$ does not change as the value of $\ell_y(x, y)$ drops from $(i+1)$ to $i$. 

\item {\em Case 3.} $i = \ell(x) - 1$ and $\ell_x(x, y) = \ell(x) - 1$. 

\smallskip
\noindent Recall that the value of $\ell_y(x, y)$ drops from $(i+1)$ to $i$ due to a call to the subroutine PIVOT-DOWN$(y, (x, y))$, which is described in Figure~\ref{new:fig:pivot:down}. It is easy to check that in this case the call to PIVOT-DOWN$(y, (x, y))$ executes the steps (16) -- (20) in Figure~\ref{new:fig:pivot:down}, with $v = y$ and $u = x$. Hence, because of the undo operation performed by the node $x$ the weight $w(x, y)$ remains unchanged, and the call to PIVOT-DOWN$(y, (x,y))$  returns {\sc False} in step (20) in Figure~\ref{new:fig:pivot:down}.

\item {\em Case 4.} $i < \ell(x) - 1$.

\smallskip
\noindent Since $\ell_x(x, y) \geq \ell(x) - 1$, in this case the weight $w(x, y)$ does not change as the value of $\ell_y(x, y)$ decreases from $(i+1)$ to $i$.
\end{itemize}
\noindent Since $\ell_x(x, y) \geq \ell(x) - 1$, the above four cases cover all possible situations. It follows that only case 1 can happen during a Type A event, for in every other case the weight $W_x$ does not get changed. However, note that in case 1 the weight $W_x$ increases by at most $\Delta$. This concludes the proof of the lemma.
\end{proof}

Consider an event of Type C. This event occurs when we call the subroutine FIX-DIRTY-NODE$(x)$. Since $\state[x] = \Downb$, this in turn leads to a call to the subroutine FIX-DOWN-B$(x)$. See Figures~\ref{new:fig:fix:dirty} and~\ref{new:fig:fix:down:b}. Hence, during a Type C event, the node $x$ un-marks one or more incident edges $(x, y) \in M_{down}(x)$ by calling the subroutine PIVOT-UP$(x, (x, y))$. If the un-marking of an edge $(x, y)$ changes its weight $w(x, y)$, then we say that the un-marking is a {\em success}; otherwise the un-marking is a {\em failure}. Figure~\ref{new:fig:fix:down:b} ensures that an event of Type C leads to at most one success.

\begin{claim}
\label{new:cl:halt:downb:activation:C}
If a Type C event leads to a success, then it decreases the node-weight $W_x$ by $\beta^{-(\ell(x)-1)} - \beta^{-\ell(x)}$, which is more than $\Delta$ as defined in Claim~\ref{new:cl:halt:downb:activation:B}.
\end{claim}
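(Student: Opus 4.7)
The plan is to trace through what happens during a successful un-marking in \text{FIX-DOWN-B}$(x)$ and compute the resulting weight change exactly, then compare with $\Delta$.

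First, I would fix the successful un-marking. Since $\state[x] = \Downb$ during $[t_0, t_1]$, a Type C event corresponds to a call to FIX-DIRTY-NODE$(x)$ which invokes FIX-DOWN-B$(x)$ (Figures~\ref{new:fig:fix:dirty} and~\ref{new:fig:fix:down:b}). There, the node $x$ picks an edge $(u, x) \in M_{down}(x)$ and calls PIVOT-UP$(x, (u, x))$, which attempts to raise $\ell_x(u, x)$ from $\ell(x)-1$ to $\ell(x)$. The un-marking is successful precisely when MOVE-UP$(x, (u,x))$ returns {\sc True}, i.e., when $w(u,x)$ changes. Moreover, Figure~\ref{new:fig:fix:down:b} breaks out of the {\sc While} loop as soon as the first success occurs, so a single Type C event contributes at most one success, hence the total change in $W_x$ during the event equals the change on that one edge.

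Next I would analyze the shadow-levels of $(u, x)$ just before the un-marking. We have $\ell_x(u, x) = \ell(x) - 1$ by definition of $M_{down}(x)$, and since $\ell_x(u, x) \neq \ell(x)$, Invariant~\ref{main:inv:shadow:level} (applied with $y = x$ and the other endpoint $u$) gives $\ell_u(u, x) \leq \ell(x)$. For the un-marking to actually alter the edge level $\ell(u, x) = \max(\ell_x(u,x), \ell_u(u,x))$, it must be the case that $\ell_u(u, x) \leq \ell(x) - 1$ before the un-marking: otherwise the edge level is already $\ell_u(u,x) = \ell(x)$ and remains so afterward, making the call return {\sc False} (a failure). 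Thus, in a successful event, before the un-marking $\ell(u, x) = \ell(x) - 1$, and after the un-marking $\ell(u, x) = \ell(x)$.

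I would then read off the weight change. Before: $w(u, x) = \beta^{-(\ell(x)-1)}$. After: $w(u, x) = \beta^{-\ell(x)}$. Therefore $W_x$ decreases by exactly
\[
\beta^{-(\ell(x)-1)} - \beta^{-\ell(x)} = (\beta - 1)\beta^{-\ell(x)}.
\]
Finally, comparing with $\Delta = \beta^{-\ell(x)} - \beta^{-(\ell(x)+1)} = (1 - 1/\beta)\beta^{-\ell(x)}$ from Claim~\ref{new:cl:halt:downb:activation:B}, their ratio is $(\beta - 1)/((\beta - 1)/\beta) = \beta \geq 5$ by equation~\eqref{eq:beta:K:L}, so the decrease strictly exceeds $\Delta$. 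No step seems particularly subtle; the only thing to be careful about is showing that a successful un-marking forces $\ell_u(u,x) \leq \ell(x)-1$ rather than $=\ell(x)$, which is handled by the edge-level-change criterion above.
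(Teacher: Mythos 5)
Your proof is correct and follows essentially the same route as the paper: identify the single successful un-marking, note that success forces the edge level to rise from $\ell(x)-1$ to $\ell(x)$, and read off the weight decrease $\beta^{-(\ell(x)-1)} - \beta^{-\ell(x)}$. Your explicit verification that this exceeds $\Delta$ (by a factor of $\beta$) and your justification via Invariant~\ref{main:inv:shadow:level} that a success requires $\ell_u(u,x) \leq \ell(x)-1$ are just slightly more detailed versions of steps the paper leaves implicit.
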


\begin{proof}
Let the success correspond to the un-marking of the edge $(x, y)$.
 Just before this un-marking, we have $(x, y) \in M_{down}(x)$ and hence $\ell_x(x, y) = \ell(x) - 1$. The un-marking increases the value of $\ell_x(x, y)$ from $\ell(x) - 1$ to $\ell(x)$. For this to change the weight $w(x, y)$, the value of $\ell(x, y)$ must also have increased from $\ell(x)-1$ to $\ell(x)$ due to the un-marking. This means that the weight $w(x, y)$ decreases by an amount $\beta^{-(\ell(x)-1)} - \beta^{-\ell(x)}$ due to the un-marking. Since any Type C event leads to at most one success,  the weight $W_x$ also decreases by exactly this amount during this Type C event.
 \end{proof}

\begin{claim}
\label{new:cl:halt:downb:activation:C:failure}
If a Type C event does not lead to a success, then it does not change the weight $W_x$, and $\beta^5$ edges get deleted from the set $M_{down}(x)$ due to this Type C event. 
\end{claim}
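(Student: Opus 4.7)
The plan is to mirror the argument used to prove Claim~\ref{new:cl:halt:upb:activation:C:failure}, with the roles of up-marking and down-marking swapped throughout. Since $\state[x] = \Downb$ during the Type C event, by Figure~\ref{new:fig:fix:dirty} the subroutine FIX-DIRTY-NODE$(x)$ reduces to a call to FIX-DOWN-B$(x)$, whose pseudocode is given in Figure~\ref{new:fig:fix:down:b}. The loop in FIX-DOWN-B$(x)$ repeatedly picks an edge $(x,y) \in M_{down}(x)$ and calls PIVOT-UP$(x, (x, y))$, terminating if either (i) $M_{down}(x)$ becomes empty, (ii) the counter reaches $\beta^5$, or (iii) the call to PIVOT-UP returns {\sc True}.

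First I would argue that $W_x$ is unchanged. Since the event leads to no success, option (iii) never triggers, and every call PIVOT-UP$(x, (x, y))$ inside the loop returns {\sc False}. By Lemma~\ref{lm:pivot:up}, such a call does not change the weight $w(x, y)$. As the weight $W_x$ can only change through such calls during a Type C event, we conclude that $W_x$ is unchanged.

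Next I would rule out termination through option (i). Suppose, for contradiction, that the loop exits because $M_{down}(x) = \emptyset$ after at most $\beta^5 - 1$ iterations. Then FIX-DIRTY-NODE$(x)$ proceeds to call UPDATE-STATUS$(x)$ at step (09) of Figure~\ref{new:fig:fix:dirty}. By row (6) of Table~\ref{fig:different:states}, being in state $\Downb$ requires $M_{down}(x) \neq \emptyset$, so the node $x$ is unfit in state $\Downb$. The call to UPDATE-STATUS$(x)$ would therefore either change $\state[x]$ to some other state (in the clean case) or make our algorithm HALT (in the dirty case). Either possibility contradicts the standing assumption that $\state[x] = \Downb$ throughout $[t_0, t_1]$. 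Hence option (ii) must trigger, and the loop executes its full $\beta^5$ iterations.

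Finally, I would verify that each such iteration deletes one edge from $M_{down}(x)$. In each iteration, the loop picks an edge $(x, y) \in M_{down}(x)$ and calls PIVOT-UP$(x, (x, y))$, which in turn invokes MOVE-UP$(x, (x, y))$ at step (01) of Figure~\ref{new:fig:pivot:up}. Setting $v = x$ and $u = y$ in Figure~\ref{new:fig:move:up}, the test at step (03) evaluates to true, so step (04) removes $(x, y)$ from $M_{down}(x)$. Moreover, no edge is ever inserted into $M_{down}(x)$ during this Type C event: as explained in Section~\ref{sub:sec:recap}, an edge can be added to $M_{down}(x)$ only through a call to MOVE-DOWN$(x, \cdot)$, which occurs only during PIVOT-DOWN$(x, \cdot)$ invoked inside FIX-DOWN$(x)$ or FIX-UP-B$(x)$, neither of which is executed because $\state[x] = \Downb$. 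The main subtlety, already handled above, is ensuring that the loop cannot short-circuit through option (i); once this is dispatched, the claimed $\beta^5$ deletions follow immediately.
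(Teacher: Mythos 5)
Your proposal is correct and follows essentially the same route as the paper's proof: the weight is unchanged because every un-marking is a failure, termination via $M_{down}(x)=\emptyset$ is ruled out by the ensuing call to UPDATE-STATUS$(x)$ contradicting $\state[x]=\Downb$ throughout $[t_0,t_1]$, and hence the loop runs for the full $\beta^5$ iterations, each deleting one edge from $M_{down}(x)$. Your extra verification that each iteration removes an edge via step~(04) of MOVE-UP and that no edge is re-inserted is a correct elaboration of a step the paper leaves implicit.
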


\begin{proof}
Consider a Type C event  that does not lead to a success. During this event, each time the node $x$ un-marks an edge $(x, y)$, it leads to a failure and does not change the weight $w(x, y)$. Thus, the weight $W_x$ also does not change due to such an event of Type C.

Suppose that the Type C event under consideration  leads to zero success and less than $\beta^5$ failures. This  implies that the subroutine FIX-DOWN-B$(x)$ terminates due to step (05) in Figure~\ref{new:fig:fix:down:b}, and thus $M_{down}(x) = \emptyset$ at this point in time. Next,  the subroutine FIX-DIRTY-NODE$(x)$    calls UPDATE-STATUS$(x)$  as per step (09) in Figure~\ref{new:fig:fix:dirty}, which in turn changes the state of the node $x$ since we cannot simultaneously have $\state[x] = \Downb$ and $M_{down}(x) = \emptyset$. See row (6) of Table~\ref{fig:different:states}. However, this leads us to a contradiction, for we have assumed that $\state[x] = \Downb$ throughout the time-interval $[t_0, t_1]$.
\end{proof}

Let $n_A$ and $n_C$ respectively denote the number of Type A and Type C events during the time-interval $[t_0, t_1]$. Let $n_C^s$ (resp. $n_C^f$) denote the number of Type C events during the time-interval $[t_0, t_1]$ that lead  (resp. do not lead) to a success. Clearly, we have: $n_C = n_C^s + n_C^f$. 
By Claim~\ref{new:cl:halt:downb:dirty}, every Type A event is followed by a Type C event. Hence, we infer that $n_A \leq n_C = n_C^s + n_C^f$.  Rearranging the terms, we get:
\begin{equation}
\label{eq:F:2:downb}
n_C^f \geq n_A  - n_C^s 
\end{equation}
Any change in the weight $W_x$ during the time-interval $[t_0, t_1]$ results solely from an event of Type A, B or C.  Now,  an event of Type A  increases the weight $W_x$ by at most $\Delta$ (see Claim~\ref{new:cl:halt:downb:activation:B}), an event of Type B decreases the weight $W_x$,  an event of Type C that leads to a success decreases the weight $W_x$ by more than $\Delta$ (see Claim~\ref{new:cl:halt:downb:activation:C}), and an event of Type C that does not lead to a success leaves the value of $W_x$ unchanged (see Claim~\ref{new:cl:halt:downb:activation:C:failure}). Since the weight $W_x$ increases by at least $1/\beta - 1/\beta^K$ during the time-interval $[t_0, t_1]$ (see Corollary~\ref{cor:halt:downb}), we get:
\begin{equation}
\label{eq:F:3:downb}
(n_A - n_C^s) \cdot \Delta \geq 1/\beta - 1/\beta^K
\end{equation}
 Claim~\ref{new:cl:halt:downb:activation:B} gives: $1/\Delta \geq \beta^{\ell(x)}$. By eq~(\ref{eq:beta:K:L}), we have $1/\beta - 1/\beta^K \geq 1/\beta^2$. Thus, eq~(\ref{eq:F:2:downb}) and~(\ref{eq:F:3:downb}) give: \begin{equation}
\label{eq:F:4:downb}
n_C^f \geq (1/\Delta) \cdot (1/\beta - 1/\beta^K) \geq \beta^{\ell(x)-2} 
\end{equation}
By Claim~\ref{new:cl:halt:downb:activation:C:failure}, during each Type C event that contributes to $n_C^f$, the node $x$ deletes $\beta^5$ edges from the set $M_{down}(x)$. Hence, equation~\ref{eq:F:4:downb}  implies that during the time-interval $[t_0, t_1]$, the node $x$ deletes $n_C^f \cdot \beta^5 \geq \beta^{\ell(x)+3}$ edges from $M_{down}(x)$.  Furthermore, the node $x$ never inserts an edge into the set $M_{down}(x)$ during the time-interval $[t_0, t_1]$, for  $\state[x] = \Downb$ throughout the time-interval $[t_0, t_1]$. Thus, we have:
\begin{equation}
\label{eq:F:5:downb}
|M_{down}(x)| \geq \beta^{\ell(x)+3} \text{ at time-instant } t_0.
\end{equation} 
Note that every edge $(x, v) \in M_{down}(x)$ has $\ell_x(x, v) = \ell(x) - 1$ and $\ell_v(x, v) \leq \ell(x)$ by Invariant~\ref{main:inv:shadow:level}. Thus, the weight of every edge $(x, v) \in M_{down}(x)$ is given by $w(x, y) \geq \beta^{-\ell(x)}$. By equation~\ref{eq:F:5:downb}, we get: $W_x \geq \sum_{(x, v) \in M_{down}(x)} w(x,v) \geq |M_{up}(x)| \cdot \beta^{-\ell(x)} > 1$ at time-instant $t_0$. This leads to a contradiction, since $\state[x] = \Downb$ at time $t_0$ and hence row(6) of Table~\ref{fig:different:states} requires that $W_x < 1-1/\beta$.

\subsection{Deriving a contradiction for $S = \Down$.}
\label{new:sec:last}

Suppose that when the node $x$ enters state $\Down$ at time $t_0$, we have $\ell(x) = i > K$. This means that the node was either in state $\Downb$ or in state $\Idle$ just before time $t_0$ (see the discussion in Section~\ref{sub:sec:dirty:node}). Hence, the weight $W_x$ is very close to  $1-2/\beta$ at time $t_0$.

If an adversary wants to ensure that our algorithm HALTS at time $t_1$, then her best bet is to apply the following strategy. Keep reducing the weight $W_x$ in a series of activations, and hope that by time $t_1$ we reach a scenario where: $W_x < f(\beta) = 1-3/\beta$, $\ell(x) = j > K$ and there are many edges $(x, y)$ with $\ell_y(x, y) < j$. This prevents the node $x$ from going down to level $K$ and switch to state $\Slack$. Hence, the algorithm must HALT at this point.

For the above strategy to work out as per plan, the adversary must be able to reduces the weight $W_x$ from being close to $1-2/\beta$ at time $t_0$ to being less than $f(\beta) = 1-3/\beta$ at time $t_1$. In other words, by a series of activations the adversary must be able to reduce the weight $W_x$ by $1/\beta$ during the time-interval $[t_0, t_1]$.

Each of these activations reduce the weight $W_x$ by at most $\beta^{-(\ell(x)-1)} - \beta^{-\ell(x)}$: this bound is achieved when an edge $(x, y) \in M_{down}(x)$ has the value of $\ell_y(x, y)$ increased from $\ell(x) - 1$ to $\ell(x)$.\footnote{This always results from an induced activation. By Assumption~\ref{assume:weight}, a natural activation reduces the weight $W_x$ by at most $\beta^{-(\ell(x)+1)} \leq \beta^{-(\ell(x) - 1)} - \beta^{-\ell(x)}$. } After every such  activation, we call the subroutine FIX-DIRTY-NODE$(x)$, which in turn calls FIX-DOWN$(x)$. The call to FIX-DOWN$(x)$ consists of a series of down-markings of edges $(x, y) \in E_{\ell(x)}(x) \setminus M_{down}(x)$ by the node $x$. If a down-marking results in the weight $W_x$ getting increased, then we say that the down-marking is a {\em success}, else we say that the down-marking is a {\em failure}. Each call to FIX-DOWN$(x)$ results in either one success or $\beta^5 L$ failures. In case of a success, the weight $W_x$ increases by $\beta^{-(\ell(x)-1)} - \beta^{\ell(x)}$, for the level of the edge being down-marked by $x$ changes from $\ell(x)$ to $\ell(x)-1$. Note that in case of a success, the amount by which $W_x$ increases is sufficient to compensate for the decrease in $W_x$ that lead to the call to FIX-DOWN$(x)$. To summarise, the adversary ensures that each activation of $x$ reduces the weight $W_x$ by at most $\beta^{-(\ell(x)-1)} - \beta^{-\ell(x)}$. This is followed by a call to FIX-DOWN$(x)$, which either brings the weight $W_x$ back to its initial value, or results in $\beta^5 L$ failures. In other words, each time the weight $W_x$ decreases by $\beta^{-(\ell(x) - 1)} - \beta^{-\ell(x)} \leq \beta^{-(\ell(x) -1)}$, the node $x$ down-marks $\beta^5 L $ many new edges from $E_{\ell(x)}(x) \setminus M_{down}(x)$. Initially, the node $x$ cannot have more than $\beta^{-\ell(x)}$ many edges in $E_{\ell(x)}(x)$, for each such edge has weight $\beta^{-\ell(x)}$ and the node-weight $W_x$ clearly cannot exceed one. Thus, after $\beta^{\ell(x)}/(\beta^5 L)$ many such events, each of which reduces the weight $W_x$ by at most $\beta^{-(\ell(x) - 1)}$, we would have $E_{\ell(x)}(x) \setminus M_{down}(x) = \emptyset$. At that point the node $x$ will move down one level below. During this interval, the weight $W_x$ would drop by at most $(\beta^{\ell(x)}/(\beta^5 L) ) \cdot \beta^{-(\ell(x) - 1)} \leq 1/(\beta^4 L)$.

Thus, we infer that each time the weight $W_x$ drops by $1/(\beta^4 L)$, the level of $x$ drops by one. Accordingly, much before the weight $W_x$ drops from $1-2/\beta$ to below $f(\beta) = 1 - 3/\beta$, the node $x$ would reach level $K$. This leads to a contradiction.

\section{Proof of Theorem~\ref{th:update:time}}
\label{new:sec:th:update:time}

We first introduce the concept of the {\em down-level} $\ell^*(x)$ of a node $x \in V$. This is defined below.
\begin{equation}
\label{eq:down-level}
\ell^*(x)  = \begin{cases} \ell(x) - 1 & \text{ if } M_{down}(x) \neq \emptyset; \\
\ell(x) & \text{ otherwise}.
\end{cases}
\end{equation}
Recall that if $\ell(x) = K$, then $M_{down}(x) = \emptyset$.\footnote{This holds since every edge $(x, y) \in M_{down}(x)$ has $\ell_x(x,y) = \ell(x) - 1$, and since $\ell_x(x, y) \in [K, L]$.} Hence, we get: $\ell^*(x) \in [K, L]$ for every node $x \in V$.  We will prove the following lemma.

\begin{lemma}
\label{lm:update:time}
After two consecutive iterations of the {\sc While} loop in Figure~\ref{new:fig:dirty}, the value of $\ell^*(x)$ decreases by at least one.
\end{lemma}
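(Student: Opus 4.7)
The plan is to read the lemma as the claim $\ell^*(x_{i+2}) \le \ell^*(x_i)-1$, where $x_1,x_2,\ldots$ denote the successive dirty nodes processed by the \textsc{While} loop of Figure~\ref{new:fig:dirty}. By Observation~\ref{ob:fix:dirty:node} this chain is well-defined (at most one dirty node exists at any time), and since $\ell^*(\cdot)\in[K,L]$, the lemma forces the chain to terminate within $2(L-K+1)=O(\log n)$ iterations, which combined with Lemma~\ref{lm:fix:dirty:node} yields the $O(\log^3 n)$ update time of Theorem~\ref{th:update:time}.

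The core of the argument is a case analysis on $\state[x_i]$, which by Rule~\ref{rule:dirty:slack} and Corollary~\ref{cor:activation} must lie in $\{\Up,\Downb,\Down,\Upb\}$. Writing $y=x_{i+1}$ and letting $(x_i,y)$ be the edge on which the successful \textsc{Move-Up} or \textsc{Move-Down} call inside FIX-DIRTY-NODE$(x_i)$ takes place, in each case the hypothesis that $w(x_i,y)$ genuinely changes, combined with Invariants~\ref{inv:shadow:level} and~\ref{main:inv:shadow:level}, pins down $\ell_y(x_i,y)$ relative to $\ell(x_i)$. Together with the fact that $\state[y]$ constrains which of $M_{up}(y),M_{down}(y)$ can be non-empty, this yields two sub-claims: (A) if $\state[x_i]\in\{\Up,\Downb\}$ then $\ell^*(x_{i+1})\le\ell^*(x_i)$; (B) if $\state[x_i]\in\{\Down,\Upb\}$ then $\ell^*(x_{i+1})\le\ell^*(x_i)-1$. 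The strict decrease in (B) comes from the ``undo'' steps of PIVOT-DOWN (lines 05 and 18 of Figure~\ref{new:fig:pivot:down}): they suppress precisely those sub-cases in which $\ell(y)$ could be as large as $\ell(x_i)$, so whenever $y$ escapes the undo and becomes genuinely dirty one has $\ell(y)\le\ell(x_i)-1$, which then converts to $\ell^*(y)\le\ell^*(x_i)-1$.

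The second ingredient is a parity observation. When $\state[x_i]\in\{\Up,\Downb\}$, the action of FIX-DIRTY-NODE$(x_i)$ is an \emph{increase} of $\ell_{x_i}(x_i,y)$ and hence strictly \emph{decreases} $W_y$; by Rules~\ref{rule:dirty:up}--\ref{rule:dirty:slack} the new dirty node must then satisfy $\state[x_{i+1}]\in\{\Upb,\Down\text{ with }\ell>K\}$. Symmetrically, $\state[x_i]\in\{\Down,\Upb\}$ produces $\state[x_{i+1}]\in\{\Up,\Downb\}$. Consequently the states of $x_1,x_2,\ldots$ strictly alternate between the two groups, and applying (A) and (B) to the two consecutive steps $x_i\to x_{i+1}\to x_{i+2}$ in either order gives $\ell^*(x_{i+2})\le\ell^*(x_i)-1$.

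The main obstacle will be executing (A) and (B) without mistakes, since each choice of $\state[x_i]$ splits into several sub-cases indexed by $\state[y]$ and by whether $(x_i,y)$ currently lies in $M_{up}(y)$ or $M_{down}(y)$. In each sub-case one has to (i) use that the weight changes to bound $\ell_y(x_i,y)$ from above, (ii) in the $\{\Down,\Upb\}$ cases use that no undo triggers to bound $\ell(y)$ further from above, and (iii) finally convert the bound on $\ell(y)$ into one on $\ell^*(y)$ via a separate check on whether $M_{down}(y)=\emptyset$, a dichotomy which is forced by $\state[y]$ except in the $\Down$ sub-case where both possibilities must be treated in parallel.
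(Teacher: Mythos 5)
Your proposal is correct and follows essentially the same route as the paper: the paper's proof is exactly your sub-claims (A) and (B) (its Lemmas~\ref{cor:new:lm:pivot:up} and~\ref{cor:new:lm:pivot:down}, proved by the case analysis on $\state[u]$ and membership in $M_{up}(u)/M_{down}(u)$ that you outline), combined over two consecutive iterations; the only cosmetic difference is that the paper does not use your reverse parity claim, instead invoking a weaker ``never increases'' corollary for the step following a $\{\Down,\Upb\}$ iteration. One small caution for when you execute (B): your intermediate assertion that escaping the undo forces $\ell(y)\le\ell(x_i)-1$ is not literally true (in the $\Downb$ sub-case one can have $\ell(y)=j$ with $M_{down}(y)\neq\emptyset$), but your own step (iii) — converting to $\ell^*(y)$ via the $M_{down}(y)$ dichotomy — is precisely what repairs this, as in the paper's Claim~\ref{new:lm:pivot:down}.
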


Since there are $(L-K+1)$ possible values for $\ell^*(x)$,  Lemma~\ref{lm:update:time} implies that the {\sc While} loop in Figure~\ref{new:fig:dirty} runs for at most $2(L-K+1) = O(\log n)$ iterations. By Lemma~\ref{lm:fix:dirty:node}, each iteration of the {\sc While} loop in Figure~\ref{new:fig:dirty} takes $O(\log^2 n)$. This gives a total worst-case update time of $O(\log^3 n)$ for handling one edge insertion or deletion in the input graph. 
We devote the rest of this section to the proof of Lemma~\ref{lm:update:time}. 

Consider three consecutive iterations of the {\sc While} loop in Figure~\ref{new:fig:dirty}. For $i \in \{1, 2, 3\}$, let $x_i$ be the node considered in the $i^{th}$ iteration. Furthermore, for $i \in \{1, 2, 3\}$, let $\ell^*_i$ be the value of $\ell^*(x_i)$ just before the call to the subroutine FIX-DIRTY-NODE$(x_i)$ in the $i^{th}$ iteration. We will show that $\ell^*_1 > \ell^*_3$. As per Figure~\ref{new:fig:fix:dirty}, we must have $\state[x_1] \in \{ \Up, \Downb, \Upb, \Down \}$ just before the $1^{st}$ iteration of the {\sc While} loop in Figure~\ref{new:fig:dirty}.  In order to prove Lemma~\ref{lm:update:time}, we  consider two mutually exclusive and exhaustive  cases.

\paragraph{Case 1.} $\state[x_1] \in \{\Up, \Downb\}$ just before the $1^{st}$ iteration of the {\sc While} loop in Figure~\ref{new:fig:dirty}. 

\medskip
\noindent In this case, Lemma~\ref{cor:new:lm:pivot:up} implies that $\ell^*_1 \geq \ell^*_2$ and $\state[x_2] \in \{ \Down, \Upb \}$ just before the $2^{nd}$ iteration of the {\sc While} loop in Figure~\ref{new:fig:dirty}. Now, applying Lemma~\ref{cor:new:lm:pivot:down}, we get: $\ell^*_1 \geq \ell^*_2 > \ell^*_3$. 

\paragraph{Case 2.} $\state[x_1] \in \{\Down, \Upb\}$ just before the $1^{st}$ iteration of the {\sc While} loop in Figure~\ref{new:fig:dirty}. 

\medskip
\noindent In this case, Lemma~\ref{cor:new:lm:pivot:down} implies that $\ell^*_1 > \ell^*_2$  just before the $2^{nd}$ iteration of the {\sc While} loop in Figure~\ref{new:fig:dirty}. Now, applying Corollary~\ref{cl:final}, we again get: $\ell^*_1 > \ell^*_2 \geq \ell^*_3$.

\begin{lemma}
\label{cor:new:lm:pivot:up}
Consider a call to  FIX-DIRTY-NODE$(v)$, and suppose that  $\state[v] \in \{ \Up, \Downb \}$ and $\ell^*(v) = i$ just before the call. If a neighbour $u$ of $v$ becomes dirty because of this call, then  $\ell^*(u) \leq i$ and $\state[u] \in \{ \Upb, \Down\}$ at the termination of the subroutine FIX-DIRTY-NODE$(v)$.
\end{lemma}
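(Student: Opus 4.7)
The plan is to unpack FIX-DIRTY-NODE$(v)$ according to $\state[v]$. When $\state[v] = \Up$, Figure~\ref{new:fig:fix:dirty} invokes FIX-UP$(v)$, which picks some edge $(u,v) \in E_{\ell(v)}(v)$ and calls PIVOT-UP$(v,(u,v))$; when $\state[v] = \Downb$, it invokes FIX-DOWN-B$(v)$, which repeatedly picks edges $(u,v) \in M_{down}(v)$ and calls PIVOT-UP$(v,(u,v))$ until the first success. In both cases, inspecting lines~02--03 of Figure~\ref{new:fig:pivot:up}, a neighbour $u$ becomes dirty only if MOVE-UP returns True and additionally either $\state[u] = \Upb$, or $\state[u] = \Down$ with $\ell(u) > K$. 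The state test is evaluated before UPDATE-STATUS$(u)$ is called; UPDATE-STATUS$(u)$ then runs with $D[u]=1$ and falls into Case~1 of Section~\ref{sub:sec:update:status}, which cannot modify $\state[u]$ without HALT'ing (ruled out by Theorem~\ref{th:no:failure}). The only remaining work inside FIX-DIRTY-NODE$(v)$ is UPDATE-STATUS$(v)$, which does not touch $u$'s data. Hence at termination $\state[u]$ is still in $\{\Upb, \Down\}$, as required.

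For $\ell^*(u) \leq i$, consider first $\state[v] = \Up$. Then $M_{down}(v) = \emptyset$ and so $\ell^*(v) = \ell(v) = i$; the picked edge satisfies $\ell(u,v) = i$, and combining this with $\ell_v(u,v) \geq \ell(v)$ (which follows from $M_{down}(v) = \emptyset$) yields $\ell_v(u,v) = i$ and hence $\ell_u(u,v) \leq i$. Invariant~\ref{inv:shadow:level} now gives $\ell(u) \leq \ell_u(u,v) + 1 \leq i+1$. If $\ell(u) \leq i$, then $\ell^*(u) \leq \ell(u) \leq i$ immediately. Otherwise $\ell(u) = i+1$, and the sandwich $\ell(u)-1 = i \leq \ell_u(u,v) \leq i$ forces $\ell_u(u,v) = \ell(u)-1$, so $(u,v) \in M_{down}(u)$; this in particular rules out $\state[u] = \Upb$ and is consistent with $\state[u] = \Down$. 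It remains to check that MOVE-UP does not remove $(u,v)$ from $M_{down}(u)$: line~07 tests $i_v + 1 > \ell(u)$, here $i+1 > i+1$, which is false, so $(u,v)$ survives. Thus $M_{down}(u) \neq \emptyset$ and $\ell^*(u) = \ell(u) - 1 = i$.

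The case $\state[v] = \Downb$ is analogous and is the main source of book-keeping difficulty. Here $M_{down}(v) \neq \emptyset$ gives $\ell^*(v) = \ell(v) - 1 = i$, and the edge picked from $M_{down}(v)$ satisfies $\ell_v(u,v) = i$. For MOVE-UP to change $w(u,v)$ (which is necessary for $u$ to become dirty), one needs $\ell_u(u,v) \leq i$: otherwise $\max(i,\ell_u(u,v))$ already equals $\max(i+1,\ell_u(u,v))$. Invariant~\ref{inv:shadow:level} again gives $\ell(u) \leq i+1$, and the identical dichotomy applies: either $\ell(u) \leq i$ and the bound is immediate, or $\ell(u) = i+1$, in which case the pinching forces $\ell_u(u,v) = i$, so $(u,v) \in M_{down}(u)$, and the test in line~07 of MOVE-UP reads $i+1 > i+1$, which fails, so the edge survives and $\ell^*(u) = i$. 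The delicate point in both cases is that the strict inequality in line~07 of MOVE-UP is exactly what prevents an untimely purge of $(u,v)$ from $M_{down}(u)$ driving $\ell^*(u)$ up to $i+1$; once this is verified, nothing between the successful PIVOT-UP and the end of FIX-DIRTY-NODE$(v)$ alters $u$'s data structures, so the bound persists through termination.
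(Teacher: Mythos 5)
Your proof is correct and follows essentially the same route as the paper: a case split on $\state[v]\in\{\Up,\Downb\}$ showing that in either case the successful PIVOT-UP raises $\ell_v(u,v)$ from $i$ to $i+1$ with $\ell_u(u,v)\leq i$, followed by the dichotomy $\ell(u)\leq i$ versus $\ell(u)=i+1$ with $(u,v)\in M_{down}(u)$ surviving because the test in line~07 of MOVE-UP fails. The paper merely packages this core argument as a separate claim (Claim~\ref{new:lm:pivot:up}) rather than inlining it, so no substantive difference.
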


The proof of Lemma~\ref{cor:new:lm:pivot:up} appears in Section~\ref{sec:up}.

\begin{lemma}
\label{cor:new:lm:pivot:down}
Consider a call to  FIX-DIRTY-NODE$(v)$, and suppose that  $\state[v] \in \{ \Down, \Upb \}$ and $\ell^*(v) = i$ just before the call. If a neighbour $u$ of $v$ becomes dirty because of this call, then  $\ell^*(u) < i$  at the termination of the subroutine FIX-DIRTY-NODE$(v)$.
\end{lemma}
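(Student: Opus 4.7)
Since $\state[v] \in \{\Down, \Upb\}$ when FIX-DIRTY-NODE$(v)$ is entered, control is routed to either FIX-DOWN$(v)$ or FIX-UP-B$(v)$ (Figure~\ref{new:fig:fix:dirty}), and both subroutines interact with neighbours solely through calls to PIVOT-DOWN$(v,(u,v))$. Therefore the only place where a neighbour $u$ can be set dirty is on lines~(09) or~(22) of Figure~\ref{new:fig:pivot:down}. My plan is to show that whenever $u$ is set dirty there, the edge-level data and the invariants already force $\ell^*(u) < \ell^*(v) = i$, after which I will argue that $\ell^*(u)$ cannot drift upward between this moment and the termination of FIX-DIRTY-NODE$(v)$.

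First I would split into four sub-cases according to $(\state[v], \state[u]) \in \{\Down, \Upb\} \times \{\Up, \Downb\}$. In each sub-case I would read off three constraints from the code: (i) $Y = \mathrm{True}$, i.e.\ MOVE-DOWN actually lowered $\ell(u,v)$; (ii) $\state[u]$ as given; (iii) the corresponding "undo" test on line~(04) or~(17) of Figure~\ref{new:fig:pivot:down} fails. When $\state[v] = \Down$, the edge is picked from $E_{\ell(v)}(v) \setminus M_{down}(v)$ so $\ell_v(u,v) = \ell(v)$ before MOVE-DOWN and becomes $\ell(v)-1$ after, and condition~(i) translates to $\ell_u(u,v) \le \ell(v)-1$. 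When $\state[v] = \Upb$, the edge comes from $M_{up}(v)$ so $\ell_v(u,v) = \ell(v)+1$ before and $\ell(v)$ after; condition~(i) is automatic, and Invariant~\ref{main:inv:shadow:level} immediately yields $\ell_u(u,v) \le \ell(v)$. Plugging the implicit shadow-level formula ($\ell_u(u,v) = \ell(u)+1$ when $(u,v) \in M_{up}(u)$, and $\ell_u(u,v) = \ell(u)$ when $\state[u]=\Downb$ with $(u,v) \notin M_{down}(u)$, using $M_{up}(u)=\emptyset$) into these inequalities, or invoking the raw comparison $\ell(u) < \ell_v^{\text{new}}(u,v)$ that appears in the "undo fails" branch, I would derive $\ell(u) \le \ell(v)-2$ in both $\Down$ sub-cases and $\ell(u) \le \ell(v)-1$ in both $\Upb$ sub-cases. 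Since $\ell^*(u) = \ell(u)-1$ iff $\state[u]=\Downb$, and $\ell^*(v) \ge \ell(v)-1$ always (with $\ell^*(v) = \ell(v)$ in the $\Upb$ case), a short verification in each of the four combinations yields $\ell^*(u) < \ell^*(v) = i$.

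Finally, I would show that $\ell^*(u)$ is unchanged between the moment $D[u] \leftarrow 1$ is set and the termination of FIX-DIRTY-NODE$(v)$. The immediate call UPDATE-STATUS$(u)$ inside PIVOT-DOWN runs the $D[u]=1$ branch (Case~1 of Section~\ref{sub:sec:update:status}), which only verifies fitness and, by Theorem~\ref{th:no:failure}, does not HALT, so it alters neither $\ell(u)$ nor $M_{down}(u)$; the enclosing FIX-DOWN or FIX-UP-B then BREAKs because $Y = \mathrm{True}$; and the closing UPDATE-STATUS$(v)$ inside FIX-DIRTY-NODE$(v)$ modifies only $v$'s own level and $M_{up/down}(v)$, never any field of $u$. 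The main obstacle I anticipate is the $\state[v]=\Down$ case: there $\ell^*(v)$ may already equal $\ell(v)-1$ while $\ell^*(u)$ equals $\ell(u)$, so a margin of two between $\ell(u)$ and $\ell(v)$ is genuinely required. Extracting that margin forces me to use the strict success bound $\ell_u(u,v) \le \ell(v)-1$ from clause~(i) together with the off-by-one shift implicit in the failure of the "undo" test~(iii); neither clause on its own is strong enough.
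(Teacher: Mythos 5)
Your proposal follows essentially the same route as the paper's own proof (Section~\ref{sec:down} together with Claim~\ref{new:lm:pivot:down}): it reduces everything to the two kinds of calls to PIVOT-DOWN$(v,(u,v))$, uses $Y=\text{{\sc True}}$ to bound $\ell_u(u,v)$ by the new value $j$ of $\ell_v(u,v)$, uses the failure of the ``undo'' test on lines~(04)/(17) to exclude the boundary configuration, and then converts to $\ell^*(u)<j\leq i$ via the definition of the down-level. One small correction: in the sub-cases with $\state[u]=\Downb$ you can only derive $\ell(u)\leq j$ (for instance $\ell(u)=\ell(v)-1$ with $\ell_u(u,v)\in\{\ell(u)-1,\ell(u)\}$ is fully consistent in the $(\Down,\Downb)$ combination, since the test $\ell_v(u,v)<\ell(u)$ on line~(17) then fails), not $\ell(u)\leq\ell(v)-2$ as you state; the conclusion nevertheless survives because $\ell^*(u)=\ell(u)-1$ in exactly those sub-cases.
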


The proof of Lemma~\ref{cor:new:lm:pivot:down} appears in Section~\ref{sec:down}.

\begin{corollary}
\label{cl:final}
Consider a call to  FIX-DIRTY-NODE$(v)$, and suppose that   $\ell^*(v) = i$ just before the call. If a neighbour $u$ of $v$ becomes dirty because of this call, then  $\ell^*(u) \leq i$  at the termination of the subroutine FIX-DIRTY-NODE$(v)$.
\end{corollary}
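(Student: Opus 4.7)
The plan is to obtain Corollary~\ref{cl:final} as an immediate consequence of the two preceding lemmas, by splitting on the state of the node $v$ at the moment the subroutine FIX-DIRTY-NODE$(v)$ is called. No new analysis of the algorithm is needed; the work has already been done in Lemmas~\ref{cor:new:lm:pivot:up} and~\ref{cor:new:lm:pivot:down}, and the task is just to observe that their hypotheses together cover every possible scenario and that each conclusion implies $\ell^*(u) \leq i$.

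First I would argue that $\state[v]$ must lie in the set $\{\Up, \Downb, \Down, \Upb\}$ at the start of the call. This is because FIX-DIRTY-NODE$(v)$ is only invoked on a dirty node $v$, and by Rules~\ref{rule:dirty:up}--\ref{rule:dirty:slack} together with Corollary~\ref{cor:activation}, a dirty node cannot be in state $\Idle$ or $\Slack$, and a dirty node in state $\Down$ must have $\ell(v) > K$. This is also consistent with the four cases enumerated by the conditional in Figure~\ref{new:fig:fix:dirty}.

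Next I would perform the case split. In Case 1, where $\state[v] \in \{\Up, \Downb\}$, I apply Lemma~\ref{cor:new:lm:pivot:up} with the same $i = \ell^*(v)$ to conclude that any neighbour $u$ of $v$ that becomes dirty during the call satisfies $\ell^*(u) \leq i$ at the end of the call. In Case 2, where $\state[v] \in \{\Down, \Upb\}$ (recalling that in the $\Down$ subcase we additionally have $\ell(v) > K$, which is precisely the hypothesis under which FIX-DOWN$(v)$ is invoked), I apply Lemma~\ref{cor:new:lm:pivot:down} to conclude that $\ell^*(u) < i$, which in particular gives $\ell^*(u) \leq i$. The two cases are mutually exclusive and exhaustive, so the bound $\ell^*(u) \leq i$ holds unconditionally.

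There is essentially no obstacle here: the corollary is a strict weakening of the disjunction of Lemmas~\ref{cor:new:lm:pivot:up} and~\ref{cor:new:lm:pivot:down}, stated in this form so that it can be invoked uniformly (in Case 2 of the proof of Lemma~\ref{lm:update:time}) without having to know which of the four states $v$ was in. The only subtlety worth flagging is to double-check that the ``$\Down$ with $\ell(v) > K$'' qualifier on dirty nodes in $\Down$ state is consistent with applying Lemma~\ref{cor:new:lm:pivot:down}, which it is, since that lemma is stated for $\state[v] \in \{\Down, \Upb\}$ without further restriction on $\ell(v)$.
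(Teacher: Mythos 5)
Your proof is correct and follows the same route as the paper: observe that FIX-DIRTY-NODE$(v)$ is only reached with $\state[v] \in \{\Up, \Downb, \Upb, \Down\}$, then apply Lemma~\ref{cor:new:lm:pivot:up} in the first two states and Lemma~\ref{cor:new:lm:pivot:down} in the other two. The extra justifications you give (why those four states are the only possibilities, and why the $\Down$ case necessarily has $\ell(v) > K$) are sound and merely spell out what the paper leaves implicit.
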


\begin{proof}
Since we have called  FIX-DIRTY-NODE$(v)$, we must have $\state[v] \in \{ \Up, \Downb, \Up, \Down \}$ as per Figure~\ref{new:fig:fix:dirty}. The corollary now follows from Lemmas~\ref{cor:new:lm:pivot:up} and~\ref{cor:new:lm:pivot:down}.
\end{proof}

\subsection{Proof of Lemma~\ref{cor:new:lm:pivot:up}.}
\label{sec:up}

See Figures~\ref{new:fig:fix:up} and~\ref{new:fig:fix:down:b}. Since $\state[v] \in \{ \Up, \Downb \}$, a neighbour $u$ of $v$ can become dirty only if we call PIVOT-UP$(v, (u,v))$ during the execution of FIX-DIRTY-NODE$(v)$. Recall that $\ell^*(v) = i$, and consider two possible cases.

\paragraph{Case 1.} $\state[v] = \Up$. In this case, we have $M_{down}(v) = \emptyset$ as per row (1) of Table~\ref{fig:different:states}. So equation~\ref{eq:down-level} implies that $\ell(v) = i$. In the call to PIVOT-UP$(v, (u,v))$, the node $v$ up-marks the edge $(u,v)$  by increasing the value of $\ell_v(u,v)$ from $i$ to $(i+1)$. By Claim~\ref{new:lm:pivot:up}, the node $u$ becomes dirty only if $\ell^*(u) \leq i$ and $\state[u] \in \{ \Down, \Upb\}$. 

\paragraph{Case 2.} $\state[v] = \Downb$. In this case, we have $M_{down}(v) \neq \emptyset$ as per row (6) of Table~\ref{fig:different:states}. So equation~\ref{eq:down-level} implies that $\ell(u) = i+1$. In the call to PIVOT-UP$(v, (u,v))$, the node $v$ un-marks the edge $(u,v)$ by increasing the value of  $\ell_v(u,v)$ from $i$ to $(i+1)$. By Claim~\ref{new:lm:pivot:up}, the node $u$ becomes dirty only if $\ell^*(u) \leq i$ and $\state[u] \in \{ \Down, \Upb\}$. 

\medskip
\noindent 
 Lemma~\ref{cor:new:lm:pivot:up} follows as we terminate the subroutine FIX-DIRTY-NODE$(v)$ just after $u$ becomes dirty.

\begin{claim}
\label{new:lm:pivot:up}
Consider a call to the subroutine PIVOT-UP$(v, (u,v))$ as  in Figure~\ref{new:fig:pivot:up}. Suppose that  $\ell_v(u,v)$ increase from $j$ to $(j+1)$ during step (01). If  $u$ becomes dirty during  this call, then  $\ell^*(u) \leq j$ and $\state[u] \in \{ \Upb, \Down \}$ at the end of the subroutine.
\end{claim}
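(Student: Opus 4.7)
The plan is to first read off from the code what it actually means for $u$ to become dirty during PIVOT-UP, then show that both $\state[u]$ and $\ell(u)$ are frozen by the time the routine returns, and finally do a short case split on $\ell(u)$ to pin down $\ell^*(u)$.

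First I would inspect Figure~\ref{new:fig:pivot:up}: $u$ becomes dirty at step~(03) only if two conditions hold simultaneously, namely $Y = \text{{\sc True}}$ (so MOVE-UP actually changed $w(u,v)$) and either $\state[u] = \Upb$ or $(\state[u] = \Down$ and $\ell(u) > K)$. In particular $\state[u] \in \{\Upb, \Down\}$ at that moment. I would then observe that MOVE-UP itself never touches $\state[u]$ or $\ell(u)$, and that the subsequent call to UPDATE-STATUS$(u)$ at step~(04) falls into Case~1 of Section~\ref{sub:sec:update:status} because $D[u] = 1$; by Theorem~\ref{th:no:failure} that call cannot HALT and therefore leaves both $\state[u]$ and $\ell(u)$ untouched. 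This already yields $\state[u] \in \{\Upb, \Down\}$ when PIVOT-UP returns.

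To get a handle on $\ell^*(u)$, I would unwind what $Y = \text{{\sc True}}$ means inside MOVE-UP. Writing $i_u = \ell_u(u,v)$ and $i_v = j$, the old edge-level is $\max(j, i_u)$ and the new one is $\max(j+1, i_u)$. The test at step~(10) of Figure~\ref{new:fig:move:up} returns {\sc False} exactly when these two maxima coincide, so $Y = \text{{\sc True}}$ forces $i_u \leq j$. Combined with Invariant~\ref{inv:shadow:level}, which gives $i_u \geq \ell(u) - 1$, this yields $\ell(u) \leq j+1$.

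The delicate step, and the main obstacle, is the boundary case $\ell(u) = j+1$. Here $i_u = j = \ell(u) - 1$, so by definition $(u,v) \in M_{down}(u)$ just before the call. The guard at step~(07) of Figure~\ref{new:fig:move:up} reads $i_v + 1 > \ell(u)$, which in this case evaluates to $j+1 > j+1$, i.e.\ false, so MOVE-UP does \emph{not} remove $(u,v)$ from $M_{down}(u)$. Hence $M_{down}(u) \neq \emptyset$ when PIVOT-UP returns, and equation~\eqref{eq:down-level} gives $\ell^*(u) = \ell(u) - 1 = j$. This case is automatically consistent with $\state[u] \in \{\Upb, \Down\}$, since $\Upb$ would require $M_{down}(u) = \emptyset$, which we have just contradicted, so we must be in state $\Down$. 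In the easier case $\ell(u) \leq j$, equation~\eqref{eq:down-level} already yields $\ell^*(u) \leq \ell(u) \leq j$ directly. Either way $\ell^*(u) \leq j$, completing the bound.
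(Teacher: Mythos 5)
Your proposal is correct and follows essentially the same route as the paper's proof: read off from step~(02) of Figure~\ref{new:fig:pivot:up} that dirtiness forces $Y=\text{{\sc True}}$ and $\state[u]\in\{\Upb,\Down\}$, deduce $\ell_u(u,v)\leq j$ and hence $\ell(u)\leq j+1$ from the success of MOVE-UP and Invariant~\ref{inv:shadow:level}, and split on $\ell(u)\leq j$ versus $\ell(u)=j+1$, in the latter case observing that the guard at step~(07) of Figure~\ref{new:fig:move:up} fails so $(u,v)$ stays in $M_{down}(u)$ and $\ell^*(u)=\ell(u)-1=j$. The only additions in your write-up are the explicit remark that the trailing call to UPDATE-STATUS$(u)$ falls under Case~1 (since $D[u]=1$) and hence preserves $\state[u]$ and $\ell(u)$, and the consistency check that the boundary case must land in state $\Down$; both are harmless elaborations of what the paper leaves implicit.
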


\begin{proof}
Suppose that the node $u$ becomes dirty during the execution of the subroutine PIVOT-UP$(v, (u,v))$. This can happen only due to the execution of step (03)  in Figure~\ref{new:fig:pivot:up}. Thus, step (02) in Figure~\ref{new:fig:pivot:up} ensures that $Y = \text{{\sc True}}$ and \{either $\state[u] = \Upb$ or ($\state[u] =  \Down$ and $\ell(u) > K$)\}.  It follows that the weight $w(u,v)$ and the level $\ell(u,v)$ change during the execution of step (01), which increases  $\ell_v(u,v)$ from $j$ to $(j+1)$. Hence, we must have $\ell_u(u,v) \leq j$ just before step (01). Since $\ell(u) \leq \ell_u(u,v) + 1$, we also infer that $\ell(u) \leq j+1$ just before step (01). We consider two cases, depending on the values of $\ell(u)$ and $\ell_u(u,v)$. 

\medskip
\noindent
{\em Case 1.} $\ell(u) \leq j$ just before step (01).  The value of $\ell(u)$ does not change during the execution of PIVOT-UP$(v, (u,v))$. Hence, even at the end of the subroutine, we have $\ell^*(u) \leq \ell(u) \leq j$. 

\medskip
\noindent {\em Case 2.} $\ell(u) = j+1$ and $\ell_u(u,v) = j$ just before step (01). In this case, the increase in the value of $\ell_v(u,v)$  from $j$ to $(j+1)$ during step (01) does not lead to any change in the value of $\ell_u(u,v)$. In other words, since $\ell(u)$ remains larger than or equal to $\ell_v(u,v)$ even after the increase in the value of $\ell_v(u,v)$, steps (08) -- (09) in Figure~\ref{new:fig:move:up} do not get executed. Hence, even after step (01) in Figure~\ref{new:fig:pivot:up}, we continue to have $\ell(u) = j+1$ and $\ell_u(u, v) = j$. In fact, we continue to have $\ell(u) = j+1$ and $\ell_u(u,v) = j$ till the end of the subroutine. At that point in time, we conclude that  $M_{down}(u) \neq \emptyset$ since $(u,v) \in M_{down}(u)$, and hence $\ell^*(u) = \ell(u) - 1 = j$. 
\end{proof}

\subsection{Proof of Lemma~\ref{cor:new:lm:pivot:down}.}
\label{sec:down}

See Figures~\ref{new:fig:fix:down} and~\ref{new:fig:fix:up:b}. Since $\state[v] \in \{ \Down, \Upb \}$, a neighbour $u$ of $v$ can become dirty only if we call PIVOT-DOWN$(v, (u,v))$ during the execution of the subroutine FIX-DIRTY-NODE$(v)$. Recall that $\ell^*(v) = i$. We now consider two possible cases.

 \paragraph{Case 1.}  $\state[v] = \Down$. In this case, equation~\ref{eq:down-level} implies that $\ell(v) \in \{i, i+1\}$. In the call to PIVOT-DOWN$(v, (u,v))$, the node $v$  down-marks the edge $(u,v)$. Let $j$ be the new value of $\ell_v(u,v)$ after this down-marking, i.e., the value of $\ell_v(u,v)$ drops from $(j+1)$ to $j$. Then we clearly have: $j \leq i$. By Claim~\ref{new:lm:pivot:down},  $u$ becomes dirty only if $\ell^*(u) < j \leq i$.

 \paragraph{Case 2.} $\state[v] = \Upb$. In this case, we have $M_{down}(v) = \emptyset$ as per row (5) of Table~\ref{fig:different:states}. Hence, equation~\ref{eq:down-level} implies that $\ell(v) = i$. In the call to PIVOT-DOWN$(v, (u,v))$,  the node $v$ un-marks the edge $(u, v)$ by decreasing the value of  $\ell_v(u,v)$ from $(i+1)$ to $i$. By Claim~\ref{new:lm:pivot:down},  $u$ becomes dirty only if $\ell^*(u) < i$. 
 
 \medskip
 \noindent
Lemma~\ref{cor:new:lm:pivot:down} follows as we terminate the subroutine FIX-DIRTY-NODE$(v)$ just after $u$ becomes dirty.

\begin{claim}
\label{new:lm:pivot:down}
Consider a call to the subroutine PIVOT-DOWN$(v, (u,v))$ as in Figure~\ref{new:fig:pivot:down}, and let step (01) decrease  $\ell_v(u,v)$  from $(j+1)$ to $j$. If  $u$ becomes dirty due to this call, then  $\ell^*(u) < j$ at the end of the subroutine. 
\end{claim}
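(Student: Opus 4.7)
The plan is to mirror the proof of Claim~\ref{new:lm:pivot:up}, but working with PIVOT-DOWN instead of PIVOT-UP. I would first observe that $u$ becomes dirty only through step (09) or step (22) of Figure~\ref{new:fig:pivot:down}, and that both steps require $Y = \text{{\sc True}}$, which means MOVE-DOWN$(v, (u,v))$ actually changed the value of $\ell(u,v)$ (and hence $w(u,v)$). Before step (01) we have $\ell(u,v) = \max(j+1, \ell_u(u,v)) = j+1$; for this quantity to strictly decrease after $\ell_v(u,v)$ drops to $j$, we must have had $\ell_u(u,v) \leq j$ (and this shadow-level is unaffected by MOVE-DOWN, which only touches the sets at $v$).

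Next, I would show that neither $\ell(u)$ nor $M_{down}(u)$ changes during the PIVOT-DOWN call in the branch where $u$ becomes dirty. The MOVE-DOWN subroutine only modifies $M_{up}(v), M_{down}(v)$ and data associated with the edge $(u,v)$, never with $u$'s incident structure; and the final UPDATE-STATUS$(u)$ call is entered with $D[u] = 1$, so by Case~1 of Section~\ref{sub:sec:update:status} it cannot alter $\ell(u)$. Hence the value of $\ell^*(u)$ at the end of PIVOT-DOWN equals the value determined by $\ell(u)$ and $M_{down}(u)$ from the beginning of the call.

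I would then split into two exhaustive cases according to $\state[u]$. In Case~1 ($\state[u] = \Up$), we have $M_{down}(u) = \emptyset$ by row (1) of Table~\ref{fig:different:states}, so $\ell^*(u) = \ell(u)$; reaching step (09) means either $(u,v) \in M_{up}(u)$ or $\ell(u) < \ell_v(u,v) = j$. If $\ell(u) < j$ directly, we are done. Otherwise $(u,v) \in M_{up}(u)$, giving $\ell_u(u,v) = \ell(u)+1$; combining with $\ell_u(u,v) \leq j$ yields $\ell(u) \leq j-1$, so $\ell^*(u) \leq j-1 < j$. In Case~2 ($\state[u] = \Downb$), we have $M_{up}(u) = \emptyset$ and $M_{down}(u) \neq \emptyset$, so $\ell^*(u) = \ell(u)-1$; reaching step (22) means either $(u,v) \notin M_{down}(u)$ or $\ell_v(u,v) \geq \ell(u)$. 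If $(u,v) \notin M_{down}(u)$, then since $M_{up}(u) = \emptyset$ and $\ell_u(u,v) \geq \ell(u)-1$ (Invariant~\ref{inv:shadow:level}) together force $\ell_u(u,v) = \ell(u)$, so $\ell(u) = \ell_u(u,v) \leq j$; otherwise $\ell(u) \leq \ell_v(u,v) = j$ directly. Either way $\ell^*(u) = \ell(u) - 1 \leq j-1 < j$.

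The main obstacle I anticipate is not any single inequality but rather the careful bookkeeping between the two shadow-levels $\ell_u(u,v)$ and $\ell_v(u,v)$ across the MOVE-DOWN step together with Invariants~\ref{inv:shadow:level} and~\ref{main:inv:up:down}, since the conditions in Figure~\ref{new:fig:pivot:down} mix ``before'' and ``after'' values of these quantities; once one is careful about which value is meant at each step (only $\ell_v(u,v)$ changes during step (01)), the rest follows by routine case analysis as above.
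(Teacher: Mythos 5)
Your proof is correct and takes essentially the same approach as the paper's: you reason from the failure of the conditions in steps (04) and (17) (which is required for the dirty-setting steps (09) and (22) to execute), combine it with $\ell_u(u,v) \leq j$ forced by $Y=\text{\sc True}$, and use the state-dependent structure of $M_{up}(u)$, $M_{down}(u)$ to bound $\ell^*(u)$. The only cosmetic difference is that the paper organizes Case 1 and Case 2 by possible values of $\ell(u)$ and rules out bad ones via contradiction, whereas you extract $\ell(u)<j$ directly from the negated code conditions; both routes yield the same conclusion.
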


\begin{proof}
Suppose that the node $u$ becomes dirty during the call to the subroutine PIVOT-DOWN$(v, (u,v))$.  Depending on the state of the node $u$ in the beginning of Figure~\ref{new:fig:pivot:down}, we consider three possible cases.

\medskip
\noindent {\em Case 1.} $\state[u] = \Up$. 

\noindent Here, the node $u$ can become dirty only due to the execution of step (09) in Figure~\ref{new:fig:pivot:down}. For the rest of this paragraph, we assume that step (09) gets executed. Thus, in step (03) we get $Y = \text{{\sc True}}$, and this ensures that the call to  MOVE-DOWN($v, (u,v)$) in step (01) changes the weight of the edge $(u,v)$. Recall that during step (01) the value of $\ell_v(u,v)$ changes from $(j+1)$ to $j$. For this event to change the weight  $w(u,v)$, we get: $\ell_u(u,v) \leq j$ just before  step (01). Since $\state[u] = \Up$, we must have $\ell(u) \leq \ell_u(u,v) \leq j$ at the same point in time. Now, there are two possibilities. Either $\ell(u) = j$ or $\ell(u) < j$ just before step (01).
\begin{enumerate}
\item Suppose that $\ell(u) = j$ just before step (01).  Since $\ell(u) \leq \ell_u(u,v) \leq j$ at the same point in time, we infer that $\ell(u) = \ell_u(u,v) = j = \ell_v(u,v)$ just after step (01). This, in turn, implies that  $(u,v) \notin M_{up}(u)$ and $\ell(u) \geq \ell_v(u,v)$ just after step (01). It follows that  due to step (04) we would never execute step (09). This leads to a contradiction. Hence, it cannot be the case that $\ell(u) = j$.
\item The only remaining possibility is that $\ell(u) < j$. In this case,  we have $\ell^*(u) \leq \ell(u) < j$. 
\end{enumerate}
To summarise,  if $\state[u] = \Up$, then  $u$  becomes dirty only if $\ell^*(u) < j$ at the end of the subroutine.

\medskip
\noindent {\em Case 2.} $\state[u] = \Downb$. 

\noindent Here,  the node $u$ can become dirty only due to the execution of step (22) in Figure~\ref{new:fig:pivot:down}. For the rest of this paragraph, we assume that step (22) gets executed. Thus, in step (16) we get $Y = \text{{\sc True}}$, and this ensures that the call to  MOVE-DOWN($v, (u,v)$) in step (01) changes the weight of the edge $(u,v)$. Recall that step (01) changes the value of $\ell_v(u,v)$  from $(j+1)$ to $j$. For this event to change the weight of the edge $(u,v)$, we must have $\ell_u(u,v) \leq j$ just before step (01). Since  $\ell(u) \leq \ell_u(u,v)+1$, we infer that $\ell(u) \leq j+1$ and $\ell_u(u,v) \leq j$ just before step (01). Accordingly, we consider the following possibilities.
\begin{enumerate}
\item $\ell(u) = j+1$ and $\ell_u(u,v) = j$. In this case, we have $(u,v) \in M_{down}(u)$ and $\ell_v(u,v) = j < \ell(u)$ during step (17). Hence, step (22) never gets executed, and we reach a contradiction. 
\item $\ell(u) = j$. In this case, since $\state[u] = \Downb$, we must have $\ell^*(u) = \ell(u) - 1 < j$. 
\item $\ell(u) < j$. In this case, we have $\ell^*(u) \leq \ell(u) < j$. 
\end{enumerate}
To summarise, if $\state[u] = \Downb$, then  $u$  becomes dirty only if $\ell^*(u) < j$ at the end of the subroutine.

\medskip
\noindent {\em Case 3.} $\state[u] \in \{\Upb, \Down, \Idle, \text{{\sc Slack}}\}$. 

\noindent Here, steps (28)-(30)  ensure that  $u$ never becomes dirty.
\end{proof}

\bibliographystyle{abbrv}
\bibliography{citations}

\end{document}